\keywords{%
  algebraic effects,
  asynchrony,
  concurrency,
  interrupt handling,
  signals,
  promises%
}
\begin{document}




\newcommand{\defeq}{\mathrel{\overset{\text{\tiny def}}{=}}} 

\newcommand{\pl}[1]{\textsc{#1}} 

\newcommand{\lambdaAEff}{$\lambda_{\text{\ae}}$} 

\newcommand{\bnfis}{\mathrel{\;{:}{:}{=}\ }}
\newcommand{\bnfor}{\mathrel{\,\;\big|\ \ \!}}



\newcommand{\One}{\mathbb{1}} 
\newcommand{\one}{\star} 
\newcommand{\Zero}{\mathbb{0}} 

\newcommand{\Bool}{\mathbb{B}} 
\newcommand{\true}{\mathbf{true}} 
\newcommand{\false}{\mathbf{false}} 

\newcommand{\expto}{\Rightarrow} 
\newcommand{\lam}[1]{\lambda #1 \,.\,} 
\newcommand{\pair}[2]{\langle #1 , #2 \rangle} 

\newcommand{\lifted}[1]{#1_\bot} 
\newcommand{\idte}[4]{\mathbf{ifdef}~#1~\mathbf{then}~#2 \mapsto #3~\mathbf{else}~#4} 

\newcommand{\ite}[3]{\mathbf{if}~#1~\mathbf{then}~#2~\mathbf{else}~#3} 


\newcommand{\Tree}[2]{\mathrm{Tree}_{#1}\left(#2\right)} 
\newcommand{\retTree}[1]{\mathsf{return}\,#1} 

\newcommand{\opsym}[1]{\mathsf{#1}} 
\newcommand{\op}{\opsym{op}} 

\newcommand{\sig}{\Sigma} 

\renewcommand{\o}{o} 
\renewcommand{\i}{\iota} 
\newcommand{\s}{\varsigma} 

\newcommand{\opincomp}[2]{{\mathsf{#1}}\,{\tmkw{\downarrow}}\,#2} 
\newcommand{\opincompp}[2]{{\mathsf{#1}}\,{\tmkw{\downarrow\downarrow}}\,#2} 

\newcommand{\eq}{\mathrm{Eq}} 

\newcommand{\FreeAlg}[2]{\mathrm{Free}_{#1}\left(#2\right)} 
\newcommand{\lift}[1]{#1^\dagger} 
\newcommand{\freelift}[1]{#1^\ddagger} 

\newcommand{\M}{\mathcal{M}} 
\newcommand{\Mcarrier}{\vert \mathcal{M} \vert} 

\newcommand{\T}{T} 


\newcommand{\sigget}{\mathsf{get}}
\newcommand{\sigset}{\mathsf{set}}


\newcommand{\ctxlock}{\text{\scriptsize\faUnlock}} 


\newcommand{\at}{\mathbin{!}} 
\newcommand{\att}{\mathbin{!!}} 


\newcommand{\tysym}[1]{\mathsf{#1}}
\newcommand{\tybase}{\tysym{b}} 
\newcommand{\tyunit}{\tysym{1}} 
\newcommand{\tyint}{\tysym{int}} 
\newcommand{\tystring}{\tysym{string}} 
\newcommand{\tylist}[1]{\tysym{list}~\tysym{#1}} 
\newcommand{\tyempty}{\tysym{0}} 
\newcommand{\typrod}[2]{#1 \times #2} 
\newcommand{\tysum}[2]{#1 + #2} 
\newcommand{\tyfun}[2]{#1 \to #2} 
\newcommand{\typromise}[1]{\langle #1 \rangle} 
\newcommand{\tybox}[1]{[#1]} 


\newcommand{\tycomp}[2]{#1 \at #2} 


\newcommand{\tyrun}[3]{#1 \att (#2,#3)} 
\newcommand{\typar}[2]{#1 \mathbin{\tmkw{\vert\vert}}  #2} 
\newcommand{\tyC}{C} 
\newcommand{\tyD}{D} 


\newcommand{\tm}[1]{\mathsf{#1}} 
\newcommand{\tmkw}[1]{\tm{\color{keywordColor}#1}} 

\newcommand{\tmpromise}[1]{\langle #1 \rangle} 

\newcommand{\tmconst}[1]{\tm{#1}}
\newcommand{\tmunit}{()} 
\newcommand{\tmpair}[2]{( #1 , #2 )} 
\newcommand{\tminl}[2][]{\tmkw{inl}_{#1}\,#2} 
\newcommand{\tminr}[2][]{\tmkw{inr}_{#1}\,#2} 
\newcommand{\tmfun}[2]{{\mathop{\tmkw{fun}}}\; (#1) \mapsto #2} 
\newcommand{\tmfunano}[2]{{\mathop{\tmkw{fun}}}\; #1 \mapsto #2} 
\newcommand{\tmapp}[2]{#1\,#2} 

\newcommand{\tmbox}[1]{[#1]} 
\newcommand{\tmunbox}[3]{\tmkw{unbox}\; #1\; \tmkw{as}\; \tmbox{#2}\; \tmkw{in}\; #3} 
\newcommand{\tmunboxgen}[1]{\tmkw{unbox}\; #1} 

\newcommand{\tmreturn}[2][]{\tmkw{return}_{#1}\, #2} 
\newcommand{\tmlet}[3]{\tmkw{let}\; #1 = #2 \;\tmkw{in}\; #3} 
\newcommand{\tmletrec}[5][]{\tmkw{let}\;\tmkw{rec}\; #2\; #3 #1 = #4 \;\tmkw{in}\; #5} 

\newcommand{\tmop}[4]{\tm{#1}\;(#2, #3. #4)} 
\newcommand{\tmopin}[3]{\tmkw{\downarrow}\, \tm{#1}\,(#2, #3)} 
\newcommand{\tmopout}[3]{\tmkw{\uparrow}\,\tm{#1}\, (#2, #3)} 
\newcommand{\tmopoutbig}[3]{\tmkw{\uparrow}\,\tm{#1}\, \big(#2, #3\big)} 
\newcommand{\tmopoutgen}[2]{\tmkw{\uparrow}\,\tm{#1}\, #2} 

\newcommand{\tmmatch}[3][]{\tmkw{match}\;#2\;\tmkw{with}\;\{#3\}_{#1}} 

\newcommand{\tmawait}[3]{\tmkw{await}\;#1\;\tmkw{until}\;\tmpromise{#2}\;\tmkw{in}\;#3} 
\newcommand{\tmawaitgen}[1]{\tmkw{await}\;#1} 

\newcommand{\tmwith}[5]{\tmkw{promise}\; (\tm{#1}\; #2 \mapsto #3)\; \tmkw{as}\; #4\; \tmkw{in}\; #5} 
\newcommand{\tmwithre}[6]{\tmkw{promise}\; (\tm{#1}\; #2 \, #3 \mapsto #4)\; \tmkw{as}\; #5\; \tmkw{in}\; #6} 
\newcommand{\tmwithrest}[9][]{\tmkw{promise}\; (\tm{#2}\; #3 \, #4 \, #5 \mapsto #6)\; \tmkw{@}_{#1}\; #7 \; \tmkw{as}\; #8\; \tmkw{in}\; #9} 

\newcommand{\tmwithgen}[3]{\tmkw{promise}\; (\tm{#1}\; #2 \mapsto #3)} 
\newcommand{\tmwithregen}[4]{\tmkw{promise}\; (\tm{#1}\; #2 \, #3 \mapsto #4)} 
\newcommand{\tmwithregenst}[7][]{\tmkw{promise}\; (\tm{#2}\; #3 \, #4 \, #5 \mapsto #6)\; \tmkw{@}_{#1}\; #7} 

\newcommand{\tmrun}[1]{\tmkw{run}\; #1} 
\newcommand{\tmpar}[2]{#1 \mathbin{\tmkw{\vert\vert}} #2} 

\newcommand{\tmspawn}[2]{\tmkw{spawn}\;(#1,#2)} 
\newcommand{\tmspawngen}[1]{\tmkw{spawn}\;#1} 


\newcommand{\reduces}{\leadsto} 
\newcommand{\tyreduces}{\rightsquigarrow} 

\newcommand{\E}{\mathcal{E}} 
\renewcommand{\H}{\mathcal{H}} 
\newcommand{\F}{\mathcal{F}} 


\newcommand{\types}{\vdash} 
\newcommand{\of}{\mathinner{:}} 

\newcommand{\sub}{\sqsubseteq} 

\newcommand{\coopinfer}[3]{\inferrule*[Lab={\color{rulenameColor}#1}]{#2}{#3}}


\newcommand{\CompResult}[2]{\mathsf{CompRes}\langle#1 \,\vert\, #2\rangle} 
\newcommand{\RunResult}[2]{\mathsf{RunRes}\langle#1 \,\vert\, #2\rangle} 

\newcommand{\Result}[2]{\mathsf{Res}\langle#1 \,\vert\, #2\rangle} 

\newcommand{\ProcResult}[1]{\mathsf{ProcRes}\langle #1 \rangle} 
\newcommand{\ParResult}[1]{\mathsf{ParRes}\langle #1 \rangle} 


\newcommand{\cond}[3]{\mathsf{if}\;#1\;\mathsf{then}\;#2\;\mathsf{else}\;#3} 

\newcommand{\carrier}[1]{\vert #1 \vert} 
\newcommand{\order}[1]{\sqsubseteq_{#1}} 
\newcommand{\lub}[1]{\bigsqcup_n \langle #1 \rangle} 

\newcommand{\Pow}[1]{\mathcal{P}(#1)} 
\newcommand{\sem}[1]{[\![#1]\!]} 

\makeatletter
\providecommand*{\cupdot}{
  \mathbin{%
    \mathpalette\@cupdot{}%
  }%
}
\newcommand*{\@cupdot}[2]{%
  \ooalign{%
    $\m@th#1\cup$\cr
    \hidewidth$\m@th#1\cdot$\hidewidth
  }%
}
\makeatother


\definecolor{redexColor}{rgb}{0.83, 0.83, 0.83} 
\newcommand{\highlightgray}[1]{\tikz[baseline=(X.base)] \node[rectangle, fill=redexColor, rounded corners, inner sep=0.3mm] (X) {$#1$};} 
\newcommand{\highlightwhite}[1]{\tikz[baseline=(X.base)] \node[rectangle, fill=white, rounded corners, inner sep=0.3mm] (X) {$#1$};} 

\definecolor{codegreen}{rgb}{0,0.6,0}
\definecolor{codegray}{rgb}{0.5,0.5,0.5}
\definecolor{codepurple}{rgb}{0.58,0,0.82}
\definecolor{backcolour}{rgb}{0.95,0.95,0.92}

\definecolor{keywordColor}{rgb}{0.0,0.0,0.5} 
\definecolor{rulenameColor}{rgb}{0.5,0.5,0.5} 

\newcommand{\sref}[2]{\hyperref[#2]{#1~\ref{#2}}} 
\newcommand{\srefcase}[3]{\hyperref[#2]{#1~\ref{#2}~(#3)}} 

\def\sectionautorefname{Section}
\def\subsectionautorefname{Section}
\def\subsubsectionautorefname{Section}

\def\lstlanguagefiles{aeff}
\lstset{language=aeff,upquote=true}
\let\ls\lstinline

\newtcolorbox{aeffbox}{
  enhanced, 
  frame hidden, 
  borderline west = {2pt}{0pt}{lightgray}, 
  colback=white, 
  boxsep=0pt, 
  top=0pt, 
  bottom=0pt, 
  left=0pt, 
  right=0pt, 
  toprule=0pt, 
  bottomrule=0pt
}

\makeatletter
\tcbset{
  after app={%
    \ifx\tcb@drawcolorbox\tcb@drawcolorbox@breakable
    \else
      \@endparenv
    \fi
  }
}

\appto\tcb@use@after@lastbox{\@endparenv\@doendpe}
\makeatother

\bibliographystyle{alphaurl}

\title[Higher-Order Asynchronous Effects]{Higher-Order Asynchronous Effects\rsuper*}
\titlecomment{{\lsuper*}This paper is an extended version of our previous work~\cite{Ahman:POPL}: 
it simplifies the meta-theory, removes the reliance on general recursion for reinstalling interrupt handlers, 
adds state to reinstallable interrupt handlers, and extends the calculus with higher-order signal and interrupt 
payloads, and with dynamic process creation.}
\thanks{ 
  This project has received funding from the European Union's Horizon 2020 research and 
  innovation programme under the Marie Sk\l{}odowska-Curie grant agreement No 834146
  \raisebox{-0.05cm}{
    \hspace{-0.15cm}
    \includegraphics[width=0.5cm]{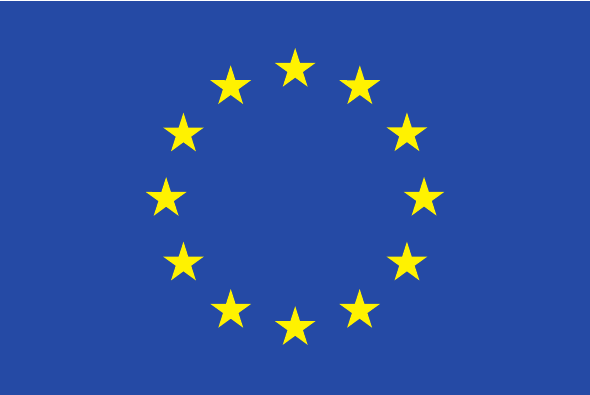}
    \hspace{-0.15cm}
  }.
  This material is based upon work supported by the Air Force Office of Scientific Research under 
  awards number FA9550-17-1-0326 and FA9550-21-1-0024.
}

\author[D.~Ahman]{Danel Ahman\lmcsorcid{0000-0001-6595-2756}}[a]
\author[M.~Pretnar]{Matija Pretnar\lmcsorcid{0000-0001-7755-2303}}[b,c]

\address{University of Tartu, Institute of Computer Science, Narva mnt 18, Tartu, Estonia}
\email{danel.ahman@ut.ee}

\address{University of Ljubljana, Faculty of Mathematics and Physics, Jadranska 19, Ljubljana, Slovenia}
\email{matija.pretnar@fmf.uni-lj.si}

\address{Institute of Mathematics, Physics and Mechanics,  Jadranska 19, Ljubljana, Slovenia}

\begin{abstract}
  \noindent
  We explore asynchronous programming with algebraic effects. We complement their conventional 
  synchronous treatment by showing how to naturally also accommodate asynchrony within them, 
  namely, by decoupling the execution of operation calls into signalling that an operation's implementation 
  needs to be executed, and interrupting a running computation with the operation's result, to which the 
  computation can react by installing interrupt handlers. We formalise these ideas in a small core calculus
  and demonstrate its flexibility using examples ranging from a multi-party web application, to pre-emptive
  multi-threading, to (cancellable) remote function calls, to a parallel variant of runners of algebraic effects.
  In addition, the paper is accompanied by a formalisation of the calculus's type safety proofs in \pl{Agda},
  and a prototype implementation in \pl{OCaml}.
  
\end{abstract}

\maketitle


\section{Introduction}

Effectful programming abstractions are at the heart of many modern general-purpose 
programming languages. They can increase expressiveness by giving programmers access to first-class 
(delimited) continuations, but often they simply help programmers to write cleaner code, e.g., by 
avoiding having to manage a program's memory explicitly in state-passing style, 
or getting lost in callback hell while programming asynchronous computations.

An increasing number of language designers and programmers are starting to 
embrace \emph{algebraic effects}, 
where one uses algebraic operations \cite{Plotkin:NotionsOfComputation} and 
effect handlers \cite{Plotkin:HandlingEffects} to uniformly, modularly, and 
user-definably express a wide range of effectful behaviour, 
ranging from basic examples such as state, rollbacks, exceptions, 
and nondeterminism \cite{Bauer:AlgebraicEffects}, to advanced applications 
in concurrency \cite{DBLP:conf/pldi/Sivaramakrishnan21} and statistical probabilistic programming 
\cite{Bingham:Pyro}, and even quantum computation \cite{Staton:AlgEffQuantum}.

While covering many examples, the conventional treatment of 
algebraic effects is \emph{synchronous} by nature. In it 
effects are invoked by placing operation calls in one's code, 
which then propagate outwards until they trigger the actual effect, finally yielding 
a result to the rest of the computation that has been \emph{waiting} in a blocked state 
the whole time. While blocking the computation is indeed sometimes necessary, e.g., 
in the presence of general effect handlers that can execute their continuation any 
number of times, it forces all uses of algebraic effects to be synchronous, even when this
is not necessary, e.g., when the effect involves executing 
a remote query to which a response is not needed (immediately).

Motivated by the recent interest in the combination of
asynchrony and algebraic effects \cite{Leijen:AsyncAwait,DBLP:conf/pldi/Sivaramakrishnan21}, 
in this paper we explore what it takes to accompany the
synchronous treatment of algebraic effects with
an \emph{asynchronous} one (in terms of
language design, safe programming abstractions, and a 
self-contained core calculus). At the heart of our approach is the 
decoupling of the execution of algebraic operation calls
into (i) \emph{signalling} that some implementation of an operation needs to be executed, and 
(ii) \emph{interrupting} a running computation with its result, to which the computation can 
react by (iii) \emph{installing interrupt handlers}. Importantly, we show that our 
approach is flexible enough that not all signals need to have a
corresponding interrupt, and vice versa, allowing us to also model 
\emph{spontaneous behaviour}, such as a
user clicking a button or the environment pre-empting a thread.

While we are not the first ones to work on asynchrony for algebraic effects, 
the prior work in this area (in the context of general effect handlers) has 
achieved it by simply \emph{delegating} the actual asynchrony to the respective language backends 
\cite{Leijen:AsyncAwait,DBLP:conf/pldi/Sivaramakrishnan21}. In contrast, in this paper 
we demonstrate how to capture the combination of 
asynchrony and algebraic effects in a \emph{self-contained} core calculus. 
It is important to emphasise that our aim is not to replace general effect handlers,  
but instead to \emph{complement} them with robust primitives 
tailored to asynchrony---as we highlight throughout the paper, our proposed approach is 
algebraic by design, so as to be ready for future extensions with general effect handlers.

\paragraph{Paper Structure}
In \autoref{sec:overview}, we give a high-level overview of our approach to 
asynchrony for algebraic effects. 
In Sections~\ref{sec:basic-calculus:computations} 
and \ref{sec:basic-calculus:processes}, we recap our previous work~\cite{Ahman:POPL} on asynchronous algebraic effects
using a core calculus, \lambdaAEff, equipped with a small-step operational semantics and a type-and-effect system. 
In \autoref{sec:higher-order-extensions}, we explore extensions of \lambdaAEff~necessary to accommodate reinstallable 
interrupt handlers, higher-order signal and interrupt payloads, and the dynamic creation of processes,
and prove their type safety.
In \autoref{sec:applications}, we show how these extensions can be used in examples such as pre-emptive multi-threading,
remote function calls, and a parallel variant of runners
of algebraic effects, simplifying the examples in our prior work~\cite{Ahman:POPL}.
We conclude by discussing related and future work in \autoref{sec:conclusion}.

\paragraph{Code}
The paper is accompanied by a \emph{formalisation} of \lambdaAEff's type safety proofs 
in \pl{Agda} \cite{ahman:AeffAgda}, and a \emph{prototype implementation} of \lambdaAEff~in 
\pl{OCaml}, called \pl{{\AE}ff} \cite{pretnar21:AEff}.

In \pl{Agda}, we consider only the well-typed syntax of a 
variant of \lambdaAEff~in which the subtyping rule manifests as an explicit coercion.
Working with such well-typed syntax is a standard approach for making
it easier to manage a de Bruijn indices-based representation of free and bound 
variables~\cite{Wadler:PLFA}. Meanwhile, the \pl{{\AE}ff} implementation provides an interpreter 
and a simple typechecker, but does not 
yet support inferring or checking effect annotations. \pl{{\AE}ff} also provides   
a web interface that allows users to interactively click through their programs' executions.
It also comes with implementations of all the examples we present in this paper.
Separately, Poulson~\cite{Poulson:AsyncEffectHandling} has shown how to implement \lambdaAEff~
in \pl{Frank} \cite{Convent:DooBeeDooBeeDoo}.


\section{Asynchronous Effects, by Example}
\label{sec:overview}

We begin with a high-level overview of how we model asynchrony within algebraic effects.

\subsection{Conventional Algebraic Effects Are Synchronous by Nature}
\label{sec:conventional-algebraic-effects}

We first recall the basic ideas of programming with algebraic effects, 
illustrating that their traditional treatment is synchronous by nature.
For an in-depth overview, we refer the reader to a tutorial on effect handlers~\cite{Pretnar:Tutorial}, and to the 
seminal papers of the field \cite{Plotkin:NotionsOfComputation,Plotkin:HandlingEffects}.

In this algebraic treatment, sources of computational effects are modelled using signatures 
of \emph{operation symbols} $\op : A_\op \to B_\op$. For instance, one models 
$S$-valued state using operations $\sigget : \tyunit \to S$ and $\sigset : S \to \tyunit$, 
and $E$-valued exceptions using a single operation $\opsym{raise} : E \to \tyempty$.

Programmers can then invoke the effect that an operation 
$\op : A_\op \to B_\op$ models by placing an \emph{operation call} $\tmop {op} V y M$ in their code. Here, 
the parameter value $V$ has type $A_\op$, and the variable $y$, which is bound in the continuation $M$, 
has type $B_\op$. For instance, for the $\sigset$ operation, the parameter value $V$ would be 
the new value of the store, and for the $\sigget$ operation, the variable $y$ 
would be bound to the current value of the store.

A program written in terms of operation calls is by itself just an inert piece of code. To 
execute it, programmers have to provide \emph{implementations} for the operation 
calls appearing in it. The idea is that an implementation of $\tmop {op} V y M$ takes $V$ as its input, 
and its output gets bound to $y$.
For instance, this could take the form of defining a suitable effect handler 
\cite{Plotkin:HandlingEffects}, but could also be given by calls  
to runners of algebraic effects \cite{Ahman:Runners}, or simply by invoking some  
(default) top-level (native) implementation.
What is important is that some pre-defined piece of code $M_\op[V/x]$
gets executed in place of every operation call $\tmop {op} V y M$.

Now, what makes the conventional treatment of algebraic effects \emph{synchronous} is 
that the execution of an operation call $\tmop {op} V y M$ \emph{blocks} until some implementation 
of $\op$ returns a value $W$ to be bound to $y$, so that 
the execution of the continuation $M[W/y]$ could proceed \cite{Kammar:Handlers,Bauer:EffectSystem}. 
Conceptually, this kind of blocking behaviour can be illustrated as
\begin{equation}
\begin{gathered}
\label{eq:syncopcall}
\xymatrix@C=1.25em@R=0.8em@M=0.5em{
& M_\op[V/x] \ar@{}[r]|{\mbox{\Large{$\leadsto^{\!*}$}}} & \tmreturn W \ar[d]
\\
\cdots \ar@{}[r]|>>>{\mbox{\Large{$\leadsto$}}} & \tmop {op} V y M \ar[u] & M[W/y] \ar@{}[r]|<<<{\mbox{\Large{$\leadsto$}}} & \cdots
}
\end{gathered}
\end{equation}
where $\tmreturn W$ is a computation that causes no effects and simply returns the value $W$.

While blocking the execution of the rest of the computation is needed in the presence of 
general effect handlers that can execute their continuation  any number 
of times, e.g., when simulating nondeterminism~\cite{Plotkin:HandlingEffects}, 
it forces all uses of algebraic effects to be synchronous, even 
when this is not necessary, e.g., when the effect in question involves 
executing a remote query to which a response is not needed immediately, 
or sometimes never at all.

In the rest of this section, we describe how we decouple the invocation of 
an operation call from the act of receiving its result, and how we give 
programmers a means to block execution only when it is necessary. 
While we end up surrendering some of effect handlers' generality, 
such as having access to the continuation that captures the rest of the 
computation to be handled, then in return we get a natural and robust formalism for 
asynchronous programming.

\subsection{Outgoing Signals and Incoming Interrupts}
\label{sec:overview:signals}

We begin by observing that the execution of an operation call $\tmop {op} V y M$, 
as depicted in (\ref{eq:syncopcall}), consists of \emph{three distinct phases}: (i) signalling that an 
implementation of $\op$ needs to be executed with parameter $V$ (the up-arrow), (ii) executing 
this implementation (the horizontal arrow), and (iii) interrupting the blocked computation $M$ with a value $W$ 
(the down-arrow). In order to overcome the unwanted side-effects of blocking execution at every operation call, 
we decouple these phases into separate programming concepts, allowing  
$M$ to proceed executing even if (ii) has not yet completed and (iii) taken place. In particular, we 
decouple an operation call into issuing an \emph{outgoing signal}, 
written $\tmopout{\op}{V}{M}$, and receiving an \emph{incoming interrupt}, written $\tmopin{\op}{W}{M}$.

It is important to note that while we have used the execution of operation calls  
to motivate the introduction of signals and interrupts as programming concepts, \emph{not all issued signals need to have a corresponding 
interrupt response}, and \emph{not all interrupts need to be responses to issued signals}, 
allowing us to also model spontaneous behaviour, such as a user clicking a button or the environment pre-empting a thread.

When \emph{issuing a signal} $\tmopout{\op}{V}{M}$, the value $V$ is called a \emph{payload}, such as a location to be looked up or a 
message to be displayed, aimed at whoever is listening for the given signal. We use the $\tmkw{\uparrow}$-notation to indicate that signals issued in sub-computations propagate outwards---in this sense signals behave just like conventional algebraic 
operation calls.

Since no additional variables are bound in the continuation $M$, it is naturally possible to
continue executing $M$ straight after the signal has been issued, as depicted below:
\vspace{-2ex}
\[
\xymatrix@C=1.25em@R=1.25em@M=0.5em{
& &
\\
\cdots \ar@{}[r]|<<<{\mbox{\Large{$\leadsto$}}} & \tmopout {op} V M \ar[u]^{\op\, V} \ar@{}[r]|<<<{\mbox{\Large{$\leadsto$}}} & M \ar@{}[r]|<<<{\mbox{\Large{$\leadsto$}}} & \cdots
}
\]
This crucially differs from the usual treatment of algebraic effects, which though being able to simulate our approach~\cite{Poulson:AsyncEffectHandling}, find asynchronous
evaluation of continuations undesirable. For example, even if in the (conventional) operation call
$\tmop {op} V y M$ the continuation $M$ does not depend on $y$, $M$ can cause further
effects, leading to unexpected behaviour if $M$ performs those effects before or after the handler
for $\op$ is evaluated.

\newcommand{\client}{M_{\text{feedClient}}}

As a \emph{running example}, let us consider a computation $\client$, which lets a user scroll 
through a seemingly infinite feed of data, e.g., by repeatedly clicking a ``next page'' button.
For efficiency, $\client$ does not initially cache all the data available on a server, but instead requests a 
new batch of data each time scrolling through the data is nearing the end of the cache. To communicate with 
the outside world, $\client$ can issue a signal
\[
  \tmopout{\opsym{request}}{\mathit{offset}}{\client}
\]
to request a new batch of data starting from the given $\mathit{offset}$, or a different signal
\[
  \tmopout{\opsym{display}}{\mathit{message}}{\client}
\]
to display a string $\mathit{message}$ to the user. In both cases, the continuation \emph{does not wait} 
for an acknowledgement that the signal was received, 
but instead continues to provide a seamless experience to the user.
It is however worth noting that these signals differ in what $\client$ expects of them: 
to the $\opsym{request}$ signal, it expects a response at some future point in 
its execution, while it does not expect any response to the $\opsym{display}$ signal, 
illustrating that not every issued signal needs an immediate response, and that 
some do not need one at all.

When the outside world wants to get the attention of a computation, be it in response to 
a signal or spontaneously, 
it happens by \emph{propagating an interrupt}~$\tmopin{\op}{W}{M}$ to the computation. 
Here, the value $W$ is again called a \emph{payload}, while $M$ is the computation receiving the interrupt.
It is important to note that unlike signals, interrupts are not triggered by the computation itself, 
but are instead issued by the \emph{outside world},
and can thus interrupt any sequence of evaluation steps,
e.g., as depicted in
\vspace{-2ex}
\[
\xymatrix@C=1.25em@R=1.25em@M=0.5em{
&  \ar[d]^-{\op\, W}  &
\\
\cdots \ar@{}[r]|<<<{\mbox{\Large{$\leadsto$}}} & M \ar@{}[r]|<<<{\mbox{\Large{$\leadsto$}}} & \tmopin {op} W M \ar@{}[r]|<<<{\mbox{\Large{$\leadsto$}}} & \cdots
}
\]
\noindent In our running example, there are two interrupts of interest that $\client$ might receive:
\[
\tmopin{\opsym{response}}{\mathit{newBatch}}{M}
\]
which delivers a batch of new data to replenish $\client$'s cache, and 
\[
\tmopin{\opsym{nextItem}}{\tmunit}{M}
\]
with which the user requests to see the next data item. In both cases, the continuation 
$M$ represents the state of $\client$ at the time of receiving the interrupt.

We use 
the $\tmkw{\downarrow}$-notation to indicate that interrupts propagate inwards into subcomputations, 
trying to reach anyone listening for them, and only get discarded when they reach a $\tmkw{return}$. 
Programmers are not expected to write interrupts explicitly in their programs---instead, 
interrupts are usually induced by signals issued by other parallel processes, as explained next. 

\subsection{A Signal for the Sender Is an Interrupt to the Receiver}
\label{sec:overview:processes}

As noted above, the computations we consider do not evolve in isolation, instead they also communicate with 
the outside world, by issuing outgoing signals and receiving incoming interrupts.

We model the outside world by composing individual computations into \emph{parallel processes} $P, Q, \ldots$.
To keep the presentation clean and focussed on the asynchrony of algebraic effects, we consider a very 
simple model of parallelism: a process is either one of the computations being run
in parallel, written $\tmrun M$, or the parallel composition of two processes, 
written $\tmpar P Q$. Later, in \autoref{sec:extensions:dynamic-process-creation}, we 
show how to also accommodate dynamic process creation.

\newcommand{\server}{M_{\text{feedServer}}}

To capture the signals and interrupts based interaction of processes,  
our operational semantics includes rules for \emph{propagating outgoing signals} from individual 
computations to processes, 
\emph{turning processes' outgoing signals into incoming interrupts} for their surrounding world, and
\emph{propagating incoming interrupts} from processes to individual computations.
For instance, in our running example, 
$\client$'s request for new data is executed as follows:
\[
\begin{array}{r l}
  & \tmpar{\highlightgray{\tmrun (\tmopout{request}{V}{\highlightwhite{\client}})}}{\tmrun \server} 
  \\[0.5ex]
  \reduces & \highlightgray{\tmpar{(\tmopout{request}{V}{\highlightwhite{\tmrun \client}})}{\highlightwhite{\tmrun \server}}}
  \\[0.5ex]
  \reduces & \tmopoutbig{request}{V}{\tmpar{\tmrun \client}{\highlightgray{\tmopin{request}{V}{\tmrun {\highlightwhite{\server}}}}}}
  \\[0.5ex]
  \reduces & \tmopoutbig{request}{V}{\tmpar{\tmrun \client}{\tmrun (\tmopin{request}{V}{\server})}}
\end{array}
\]
Here, the first and the last reduction step respectively propagate signals outwards and 
interrupts inwards. The middle reduction step corresponds to what we call a \emph{broadcast rule}---it 
turns an outward moving signal in one of the processes into an inward moving interrupt for the process 
parallel to it, while continuing to propagate the signal outwards to any further parallel processes.  
The active redexes in these rules are highlighted in grey.

\subsection{Promising To Handle Interrupts}
\label{sect:overview:promising}

So far, we have shown that our computations can issue outgoing signals and receive incoming interrupts, and how 
these evolve and get communicated when executing parallel processes, but we have not yet said 
anything about how computations can actually \emph{react} to incoming interrupts of interest. 

 In order to react to interrupts, our computations can install \emph{interrupt handlers}, written
\[
  \tmwith{op}{x}{M}{p}{N}
\]
that should be read as: ``we promise to handle a future incoming interrupt named $\op$ using the computation  
$M$ in the continuation $N$, with $x$ bound to the payload of the interrupt''. Fulfilling this promise consists of 
executing $M$ and binding its result to the promise variable $p$ in $N$ when a corresponding interrupt arrives, as 
captured by the following reduction rule:
\[
  \tmopin{op}{V}{\tmwith{op}{x}{M}{p}{N}} \reduces \tmlet{p}{M[V/x]}{\tmopin{op}{V}{N}}
\]
Interrupts that do not match a given interrupt handler ($\op \neq \op'$) simply move past it:
\[
  \tmopin{op'}{V}{\tmwith{op}{x}{M}{p}{N}} \reduces \tmwith{op}{x}{M}{p}{\tmopin{op'}{V}{N}}
\]
\noindent
It is worth noting that the interrupt itself \emph{keeps propagating inwards} into the sub-computation 
$N$, where it can trigger further interrupt handlers installed for the given interrupt.
Allowing the interrupts to always keep propagating inwards is a natural design choice, as it connects
the behaviour of our interrupts with the behaviour of deep effect handling~\cite{Plotkin:HandlingEffects} (see 
\autoref{sec:basic-calculus:semantics:computations}), and it is crucial for certain examples (see
\autoref{sec:applications:chaining}).

In order to skip certain interrupt handlers for some $\opsym{op}$, one can carry additional data 
in $\opsym{op}$'s payload (e.g., a thread ID) and then condition the (non-)triggering of those interrupt 
handlers on this data, e.g., as we demonstrate in \autoref{sec:applications:guarded-handlers}.
This is analogous to how one controls which particular operation calls are handled with 
ordinary effect handlers~\cite{Kammar:Handlers}.

Interrupt handlers differ from conventional algebraic operation calls 
(see \autoref{sec:conventional-algebraic-effects}) in two important aspects.
First, they enable \emph{user-side post-processing} of received data, using $M$, 
while in operation calls the result is immediately bound in the continuation. Second, and more 
importantly, their semantics is \emph{non-blocking}. In particular, we have a congruence rule
\[
N \reduces N' \qquad \text{implies} \qquad \tmwith{op}{x}{M}{p}{N} \reduces \tmwith{op}{x}{M}{p}{N'}
\]
meaning that the continuation $N$, and thus the whole computation, can make progress 
even though no interrupt $\opsym{op}$ has been propagated to the computation 
from the outside world.

As the observant reader might have noticed, the non-blocking behaviour of interrupt handling 
means that our operational semantics has to work on \emph{open terms} because the variable 
$p$ can appear free in both $N$ and $N'$ in the congruence rule given above. However, it is important 
to note that $p$ is not an arbitrarily typed variable, 
but in fact gets assigned a distinguished \emph{promise type} $\typromise X$ for some value type 
$X$---we shall crucially make use of this typing of $p$ in the proof of type safety for our \lambdaAEff-calculus 
(see \autoref{thm:progress} and \ref{thm:progress:extended}).
Furthermore, since it is the computation $M$ that fulfils the promise (either by supplying a value or returning
another promise), it also needs to have the same return type $\typromise X$.

\subsection{Blocking on Interrupts Only When Necessary}
\label{sec:overview:await}

As noted earlier, installing an interrupt handler means making a promise to handle a given 
interrupt in the future. To check that an interrupt has been received and handled, 
we provide programmers a means to selectively \emph{block execution}
and \emph{await} a specific promise to be fulfilled, written 
$\tmawait{V}{x}{M}$, where if $V$ has a promise type $\typromise X$, the variable $x$ bound in $M$ has type $X$.
Importantly, the continuation $M$ is executed only 
when the $\tmkw{await}$ is handed a \emph{fulfilled promise} $\tmpromise V$, as
\[
\tmawait{\tmpromise V}{x}{M} \reduces M[V/x]
\]
\noindent
In our example of scrolling through a seemingly infinite feed,
$\client$ could use $\tmkw{await}$ to block until it has received an initial configuration, 
such as the batch size used by $\server$.

As the terminology suggests, this part of \lambdaAEff~is strongly influenced by existing work on 
\emph{futures and promises} \cite{Schwinghammer:Thesis} for structuring concurrent programs, and their use in modern languages, 
such as in \pl{Scala} \cite{Haller:Futures}. While prior work often models promises as writeable, single-assignment 
references, we instead use the substitution of values for ordinary immutable variables (of distinguished promise type) 
to model that a promise gets fulfilled exactly once. 
This way we achieve the standard reading of promises without needing
a stateful operational semantics and a non-trivial type system to
enforce the single-assignment behaviour~\cite{Ahman:RecallingWitness}.

\subsection{Reinstalling Interrupt Handlers}
\label{sec:overview:reinstallable-interrupt-handlers}

As seen in the reduction rule
\[
  \tmopin{op}{V}{\tmwith{op}{x}{M}{p}{N}} \reduces \tmlet{p}{M[V/x]}{\tmopin{op}{V}{N}}
\]
the interrupt handler is \emph{not reinstalled by default}. The programmers can selectively reinstall
interrupt handlers using general recursion~\cite{Ahman:POPL}, or use the extension of \lambdaAEff~with
\emph{reinstallable interrupt handlers} we propose in this paper (see 
\autoref{sec:extensions:reinstallable-interrupt-handlers} for details), which have the form
\[
  \tmwithre{op}{x}{r}{M}{p}{N}
\]
These behave similarly to ordinary interrupt handlers, except that the handling computation $M$
has access to an additional variable $r$ bound to a function that reinstalls the handler when called.
Specifically, triggering a reinstallable interrupt handler has the following form:
\begin{multline*}
  \tmopin{op}{V}{\tmwithre{op}{x}{r}{M}{p}{N}} \\
  \reduces \tmlet{p}{M\big[V/x, \big(\tmfunano{\_}{\tmwithre{op}{x}{r}{M}{p}{\tmreturn{p}}}\big)/r\big]}{\tmopin{op}{V}{N}}
\end{multline*}
\noindent
Further, in examples we often find it useful to also pass data between 
subsequent reinstalls of an interrupt handler. Programmers can achieve this by working 
with an additionally assumed primitive notion of memory references~\cite{Ahman:POPL}, 
or by using a \emph{stateful variant of reinstallable interrupt handlers} that we propose in 
this paper. The latter have the form
\[
  \tmwithrest[S]{op}{x}{r}{s}{M}{W}{p}{N}
\]
where $S$ is the type of state associated with a particular interrupt handler, 
$W$ is the interrupt handler's state at the time of its next triggering, 
the variable $s$ gives the interrupt handler code $M$ access to the state, 
and the state can be updated by reinstalling the handler with an updated value
using $r$. Specifically, the interrupt handler triggering rule now has the form
\[
\tmopin{op}{V}{\tmwithrest[S\!]{op}{x}{r}{s}{M}{W}{p}{N}} \reduces \tmlet{p}{M\big[V/x , R/r , W/s\big]}{\tmopin{op}{V}{N}}
\]
where $R$ denotes a function that reinstalls the interrupt handler
with an updated state value:
\[
R ~\defeq~ \tmfun{s' \of S}{\tmwithrest[S]{op}{x}{r}{s}{M}{s'}{p}{\tmreturn p}}
\]
For brevity, we often omit the $S$-annotation in examples when it is clear from
the context.

\subsection{Putting It All Together}
\label{sec:overview:runningexample}

We conclude this overview by showing how to implement the example of a user scrolling through a seemingly 
infinite feed of data in our \lambdaAEff-calculus.

For a simpler exposition, we allow ourselves access to mutable references, with which 
we communicate data between different interrupt handlers, though the same can be  
achieved by rolling one's own state. For passing data between subsequent reinstalls of the same 
interrupt handler, we use the state-passing features of interrupt handlers 
introduced above.

While having explicit continuations in operation calls, signals, interrupt handlers, and when awaiting promises
to be fulfilled makes the meta-theory of the underlying calculus cleaner (see Section~\ref{sec:basic-calculus:semantics:computations}), in programming we prefer to use 
\emph{generic} versions of them, i.e., ones with trivial continuations~\cite{Plotkin:GenericEffects}.
In particular, we define and use the syntactic sugar:
\begin{align*}
  \tmopoutgen {op} V &~\defeq~ \tmopout {op} V {\tmreturn \tmunit} \\
  \tmwithregen {op}{x}{r}{M} &~\defeq~ \tmwithre{op}{x}{r}{M}{p}{\tmreturn{p}} \\
  \tmwithregenst{op}{x}{r}{s}{M}{W} &~\defeq~ \tmwithrest{op}{x}{r}{s}{M}{W}{p}{\tmreturn p} \\
  \tmawaitgen{V} &~\defeq~ \tmawait{V}{x}{\tmreturn x}
\end{align*}

\subsubsection{Client}
\label{sec:overview:runningexample:client}

We implement the client computation $\client$ as the function \ls$feedClient$ defined below. 
For presentation purposes, we split its definition between multiple code blocks.

First, the client initialises some auxiliary references, 
issues a signal to the server to ask for the data batch size that it uses, and then installs a corresponding
interrupt handler:
\begin{aeffbox}
\begin{lstlisting}
let feedClient () =
    let cachedData = ref [] in
    let requestInProgress = ref false in
    send batchSizeRequest ();
    let batchSizePromise = promise (batchSizeResponse batchSize |-> return <<batchSize>>) in
    ...
\end{lstlisting}
\end{aeffbox}
While the server is asynchronously responding to the batch size request, the client
sets up an auxiliary function \ls$requestNewData$, with which it can request new data from the server:
\begin{aeffbox}
\begin{lstlisting}
    ...
    let requestNewData offset =
        requestInProgress := true;
        send request offset;
        promise (response newBatch |->
            cachedData := !cachedData @ newBatch;
            requestInProgress := false;
            return <<()>>
        )
    in
    ...
\end{lstlisting}
\end{aeffbox}
Here, we first set a flag indicating that a new data request is in process, 
then issue a \ls$request$ signal to the server, and finally install an 
interrupt handler that updates the cache 
once a corresponding \ls$response$ interrupt arrives.
We note that the client computation does not block while awaiting new data
from the server---instead, it continues executing, notifying the user to wait and try again once 
the cache temporarily becomes empty (see below).

As a last step of setting itself up, the client blocks until the server has responded 
with the batch size it uses by awaiting \ls$batchSizePromise$ to be fulfilled, 
after which the client starts its main loop, which we implement as a simple reinstallable 
interrupt handler:
\begin{aeffbox}
\begin{lstlisting}
    ...
    let batchSize = await batchSizePromise in
    promise (nextItem _ r currentItem |->
        let cachedSize = length !cachedData in
        (if (currentItem > cachedSize - batchSize / 2) && (not !requestInProgress) then
             requestNewData (cachedSize + 1)
         else
             return ());
        if currentItem < cachedSize then
            send display (toString (nth !cachedData currentItem));
            r (currentItem + 1)
        else  
            send display "please wait a bit and try again";
            r currentItem
    ) @ 0
\end{lstlisting}
\end{aeffbox}
In it, the client listens for \ls$nextItem$ interrupts from the user to display more data.
Once the interrupt arrives, the client checks if its cache is becoming empty, i.e., if
the index of the currently viewed item is less than half of the batch size away from the
last cached item and if no request for new data has been issued yet. If that happens, 
the client uses the \ls$requestNewData$ function to request more data from the server,
starting with offset \ls$cachedSize + 1$, which is the index of the first item that is
outside of the data cached by the client.

Next, if there is still some data in the cache, the client issues a \ls$display$ signal to show
the next data item to the user. If however the cache is empty, the client issues a \ls$display$ signal 
to show a message to the user asking them to wait and try again. The client then simply reinvokes itself
by reinstalling the interrupt handler for \ls$nextItem$ interrupts (by calling \ls$r$). 

Observe that the 
\ls$currentItem$ counter is initially set to $0$ and then passed between subsequent 
interrupt handler reinstalls using the state-passing features introduced earlier.

\subsubsection{Server}
\label{sec:overview:runningexample:server}

We implement the server computation $\server$ as the following function:
\begin{aeffbox}
\begin{lstlisting}
let feedServer batchSize =
    promise (batchSizeRequest () r |->
        send batchSizeResponse batchSize;
        r ()
    );
    promise (request offset r |->
        let payload = map (fun x |-> 10 * x) (range offset (offset + batchSize - 1)) in
        send response payload;
        r ()
    )
\end{lstlisting}
\end{aeffbox}
where the computation \lstinline{range i j} returns a list of integers ranging from \lstinline{i} to \lstinline{j} (both inclusive).

The server simply installs two reinstallable interrupt handlers: the first 
one listens for and responds to client's requests about the batch size it uses; 
and the second one responds to client's requests for new data. Both interrupt 
handlers then simply reinstall themselves.

The two interrupt handlers share a common pattern of handling the interrupt by
issuing a signal and then immediately reinstalling the handler, and it is tempting to avoid the
repetition. A dual shared pattern can be found in
\autoref{sec:overview:runningexample:client}, where issuing a request signal is
immediately followed by installing an interrupt handler for its response. However, proper user-defined
abstractions capturing these patterns would require operation names to be first-class values, which is not only orthogonal to the issue of asynchrony we are focussing on, but leads to a dependently typed calculus in combination with an effect system.

\subsubsection{User}
\label{sec:overview:runningexample:user}

We can also simulate the user as a computation. For the sake of simplicity,
we allow ourselves general recursion to implement the user behaviour as an infinite loop
that every now and then issues a request to the client to display the next data item.
\begin{aeffbox}
\begin{lstlisting}
let rec user () =
    let rec wait n = 
        if n = 0 then return () else wait (n - 1)
    in
    send nextItem (); wait 10; user ()
\end{lstlisting}
\end{aeffbox}
Alternatively, without assuming general recursion, we could have implemented the user 
instead as two parallel processes that indefinitely ping-pong each other, and occasionally 
issue \ls$nextItem$ signals to the client (see~\autoref{sec:conclusion} for an 
example of such non-terminating behaviour).
It is also straightforward to extend the user program with a reinstallable handler for 
\ls$display$ interrupts that simulates displaying the data items received from the 
client (omitted here).

\subsubsection{Running the Server, Client, and User in Parallel}
\label{sec:overview:runningexample:parallel}

Finally, we can simulate our running example in full by running all 
three computations as parallel processes, as follows: 
\begin{aeffbox}
\begin{lstlisting}
run (feedServer 42) || run (feedClient ()) || run (user ())
\end{lstlisting}
\end{aeffbox}


\section{A Calculus for Asynchronous Effects: Values and Computations}
\label{sec:basic-calculus:computations}

Before we  focus on extensions necessary for higher-order asynchronous effects in \autoref{sec:higher-order-extensions},
we first recap \lambdaAEff, our existing core calculus for programming with first-order asynchronous effects~\cite{Ahman:POPL}.
The version we present here differs from the original one in two aspects: we drop the reliance on general recursion, as
reinstallable interrupt handlers that we introduce in \autoref{sec:extensions:reinstallable-interrupt-handlers} are sufficient to express
all the existing examples, and we slightly modify the behaviour of
the $\tmkw{await}$ construct in order to make the meta-theory slightly simpler.

To better explain the different features of the calculus and its semantics, we split the recap of
\lambdaAEff~into a \emph{sequential} part (discussed below) and a
\emph{parallel} part (discussed in \autoref{sec:basic-calculus:processes}).

\subsection{Values and Computations}
\label{sec:basic-calculus:values-and-computations}

We base \lambdaAEff~on the fine-grain
call-by-value $\lambda$-calculus (FGCBV)~\cite{Levy:FGCBV}, and as such, it is a low-level
intermediate language to which a corresponding high-level user-facing programming language
could be compiled to---this is what happens in our prototype implementation~\cite{pretnar21:AEff}.

The syntax of terms is given in \autoref{fig:terms}, stratified into \emph{values} and \emph{computations},
as in FGCBV.
While we do not study effect inference in this paper, we equip certain terms with type annotations that
in our experience should make it possible to fully infer types.

\begin{figure}[tp]
  \parbox{\textwidth}{
  \centering
  \small
  \begin{align*}
  \intertext{\textbf{Values}}
  V, W
  \bnfis& x                                       & &\text{variable} \\
  \bnfor& \tmunit \bnfor\! \tmpair{V}{W}                                & &\text{unit and pairing} \\
  \bnfor& \tminl[Y]{V} \bnfor\! \tminr[X]{V}    & &\text{left and right injections} \\
  \bnfor& \tmfun{x : X}{M}                        & &\text{function abstraction} \\
  \bnfor& \tmpromise V                            & &\text{fulfilled promise}
  \\[1ex]
  \intertext{\textbf{Computations}}
  M, N
  \bnfis& \tmreturn{V}                            & &\text{returning a value} \\
  \bnfor& \tmlet{x}{M}{N}          & &\text{sequencing} \\
  \bnfor& V\,W                                    & &\text{function application} \\
  \bnfor& \tmmatch{V}{\tmpair{x}{y} \mapsto M}    & &\text{product elimination} \\
  \bnfor& \tmmatch[\tycomp{Z}{(\o,\i)}]{V}{}                        & &\text{empty elimination} \\
  \bnfor& \tmmatch{V}{\tminl{x} \mapsto M, \tminr{y} \mapsto N}
                                                  & &\text{sum elimination} \\
  \bnfor& \tmopout{op}{V}{M}       & &\text{outgoing signal} \\
  \bnfor& \tmopin{op}{V}{M}          & &\text{incoming interrupt} \\
  \bnfor& \tmwith{op}{x}{M}{p}{N}      & &\text{interrupt handler} \\
  \bnfor& \tmawait{V}{x}{M}             & &\text{awaiting a promise to be fulfilled}
  \end{align*}
  }
  \caption{Values and Computations.}
  \label{fig:terms}
\end{figure}

\paragraph{Values}

The values $V,W,\ldots$ are mostly standard. They include
variables, introduction forms for
sums and products, and functions. The only \lambdaAEff-specific value
is $\tmpromise V$, which denotes a \emph{fulfilled promise}, indicating that the promise of
handling some interrupt has been fulfilled with the value $V$.

\paragraph{Computations} The computations $M,N,\ldots$ also include all
standard terms from FGCBV:
returning values, sequencing, function
application, and elimination forms.

The first two \lambdaAEff-specific computations are \emph{signals} $\tmopout{op}{V}{M}$ and
\emph{interrupts} $\tmopin{op}{V}{M}$, where
$\opsym{op}$ is drawn from a set $\sig$ of names, $V$ is a data
payload, and $M$ is a continuation.

The next \lambdaAEff-specific computation is the \emph{interrupt handler} $\tmwith{op}{x}{M}{p}{N}$,
where $x$ is bound in $M$ and $p$ in $N$.
As discussed in the previous section, one should understand this computation as making a promise
to handle a future incoming interrupt $\opsym{op}$ by executing the computation $M$. Sub-computations of the continuation
$N$ can then explicitly await, when necessary, for this promise to be fulfilled by blocking on the \emph{promise-typed variable} $p$
using the final \lambdaAEff-specific computation term, the \emph{awaiting} construct $\tmawait{V}{x}{M}$.
It is useful to note that the $p$ used above is an ordinary variable---it just gets assigned the distinguished promise type
$\typromise X$ by the interrupt handler (as discussed in \autoref{sec:basic-calculus:type-system:computations}).

\subsection{Small-Step Operational Semantics}
\label{sec:basic-calculus:semantics:computations}

We equip \lambdaAEff~with an evaluation contexts based
small-step operational semantics,
defined using a reduction relation $M \reduces N$.
The \emph{reduction rules} and \emph{evaluation contexts} are given
in \autoref{fig:small-step-semantics-of-computations}. We discuss
the rules in detail below. Note that since we have chosen to equip effectful constructs with explicit continuations,
the evaluation contexts are used to compress four congruence rules into a single one. If instead
we took generic versions (like seen in Section~\ref{sec:overview:runningexample})
as primitives, almost all the rules in \autoref{fig:small-step-semantics-of-computations},
apart from the ones for standard monadic computations, would need to be phrased
in terms of sequential composition (i.e., $\tmkw{let}$), leading to a notably less clear presentation.
\begin{figure}[tp]
  \small
  \begin{align*}
    \intertext{\textbf{Standard computation rules}}
    \tmapp{(\tmfun{x \of X}{M})}{V} &\reduces M[V/x]
    \\
    \tmlet{x}{(\tmreturn V)}{N} &\reduces N[V/x]
    \\
    \tmmatch{\tmpair{V}{W}}{\tmpair{x}{y} \mapsto M} &\reduces M[V/x, W/y]
    \\
    \mathllap{
      \tmmatch{(\tminl[Y]{V})}{\tminl{x} \mapsto M, \tminr{y} \mapsto N}
    } &\reduces
    M[V/x]
    \\
    \mathllap{
      \tmmatch{(\tminr[X]{W})}{\tminl{x} \mapsto M, \tminr{y} \mapsto N}
    } &\reduces
    N[W/y]
    \\[1ex]
    \intertext{\textbf{Algebraicity of signals, interrupt handlers, and awaiting}}
    \tmlet{x}{(\tmopout{op}{V}{M})}{N} &\reduces \tmopout{op}{V}{\tmlet{x}{M}{N}}
    \\
    \tmlet{x}{(\tmwith{op}{y}{M}{p}{N_1})}{N_2} &\reduces \tmwith{op}{y}{M}{p}{(\tmlet{x}{N_1}{N_2})}
    \\
    \tmlet{x}{(\tmawait{V}{y}{M})}{N} &\reduces \tmawait{V}{y}{(\tmlet{x}{M}{N})}
    \\[1ex]
    \intertext{\textbf{Commutativity of signals with interrupt handlers}}
    \tmwith{op}{x}{M}{p}{\tmopout{op'}{V}{N}} &\reduces \tmopout{op'}{V}{\tmwith{op}{x}{M}{p}{N}}
    \\[1ex]
    \intertext{\textbf{Interrupt propagation}}
    \tmopin{op}{V}{\tmreturn W} &\reduces \tmreturn W
    \\
    \tmopin{op}{V}{\tmopout{op'}{W}{M}} &\reduces \tmopout{op'}{W}{\tmopin{op}{V}{M}}
    \\
    \tmopin{op}{V}{\tmwith{op}{x}{M}{p}{N}} &\reduces \tmlet{p}{M[V/x]}{\tmopin{op}{V}{N}}
    \\
    \tmopin{op'}{V}{\tmwith{op}{x}{M}{p}{N}} &\reduces \tmwith{op}{x}{M}{p}{\tmopin{op'}{V}{N}}
    \;\;\; {\color{rulenameColor}(\op \neq \op')}
    \\
    \tmopin{op}{V}{\tmawait{W}{x}{M}} &\reduces \tmawait{W}{x}{\tmopin{op}{V}{M}}
    \\[1ex]
    \intertext{\textbf{Awaiting a promise to be fulfilled}}
    \tmawait{\tmpromise V}{x}{M} &\reduces M[V/x]
  \end{align*}
  \begin{gather*}
    \intertext{\textbf{Evaluation context rule}}
    \coopinfer{}{
      M \reduces N
    }{
      \E[M] \reduces \E[N]
    }
    \intertext{\textbf{where}}
    \text{$\E$}
    \bnfis [~]
    \bnfor \tmlet{x}{\E}{N}
    \bnfor \tmopout{op}{V}{\E}
    \bnfor \tmopin{op}{V}{\E}
    \bnfor \tmwith{op}{x}{M}{p}{\E}
  \end{gather*}
  \caption{Small-step Operational Semantics of Computations.}
  \label{fig:small-step-semantics-of-computations}
\end{figure}

\paragraph{Computation Rules}
The first group includes \emph{standard reduction rules} from FGCBV, such as $\beta$-reducing function applications, sequential composition, and the standard elimination forms.
These rules involve standard \emph{capture avoiding substitutions} $V[W/x]$ and $M[W/x]$,
defined by straightforward mutual structural recursion on $V$ and $M$.

\paragraph{Algebraicity}
This group of reduction rules \emph{propagates outwards} the signals that have been issued, interrupt handlers that have been installed,
and computations awaiting fulfilled promises. While it is not surprising that outgoing signals
behave like algebraic \emph{operation calls}, getting propagated outwards as far as possible, then it is much more curious that
the natural operational behaviour of interrupt handlers turns out to be the same. As we shall explain in \autoref{sec:conclusion},
despite using the (operating systems inspired) ``handler'' terminology, mathematically interrupt handlers are in fact a form of scoped algebraic operations~\cite{Pirog:ScopedOperations}.

In contrast to our original calculus~\cite{Ahman:POPL}, the awaiting construct also propagates outwards. Before,
awaiting a promise in any subcomputation would block the evaluation immediately, whereas now, we can do the
additional outwards propagation steps. Importantly, this does not significantly change the computational behaviour, as after
the propagation, the evaluation still blocks as long as the promise is left unfulfilled. The main difference and benefit is that
all computations awaiting for a promise variable $p$ now show this explicitly at their top-level, as they are of the form
$\tmawait{p}{x}{M}$. This change significantly simplifies
the normal forms of computations (see \autoref{sec:basic-calculus:type-safety}) and the resulting meta-theory.

In the last two algebraicity rules, and other similar ones, we assume Barendregt's variable
convention to avoid accidentally capturing free variables when extending the scope of 
binders.

\paragraph{Commutativity of Signals With Interrupt Handlers}
This rule complements the algebraicity rule for signals, by further propagating
them outwards, past enveloping interrupt handlers. From the perspective of algebraic effects,
this rule is an example of two algebraic operations \emph{commuting}~\cite{Hyland:SumAndTensor}.
Since in this rule, the scope of $p$ contracts, the usual variable naming precautions are
not sufficient for type safety. Instead, the type system ensures (see Section~\ref{sec:basic-calculus:type-system:computations}) that the (promise-typed)
variable $p$ cannot appear in the payload value $V$.

\paragraph{Interrupt Propagation}
The handler-operation curiosity does not end with interrupt handlers. This group of reduction rules describes how
interrupts are \emph{propagated inwards} into sub-computations. While $\tmopin{op}{V}{M}$ might look like a conventional
operation call, then its operational behaviour instead mirrors that of \emph{deep effect handling}~\cite{Plotkin:HandlingEffects},
where one also recursively descends into the computation being handled.

When designing interrupt propagation, we must ensure that each interrupt handler receives a corresponding interrupt, no matter
how deep inside the computation we install it. The first reduction rule states that
we can safely discard an interrupt when it reaches a trivial, effect-free
computation $\tmreturn W$. The second rule states that we can propagate incoming interrupts past any outward moving signals. The next
two rules describe how interrupts interact with interrupt handlers, in particular, that the former behave like effect handling
(when understanding interrupt handlers as generalised algebraic operations). On the one hand, if the interrupt
matches the interrupt handler it encounters, the corresponding handler code $M$ is executed, and the interrupt is
propagated inwards into the continuation $N$. On the other hand, if the interrupt
does not match the interrupt handler, it is simply propagated past the interrupt handler into $N$.
Finally, to simplify normal forms, we propagate interrupts inside computations awaiting fulfilled promises as well. As with the algebraicity rule,
this lets the computation take a single additional step after which the $\tmkw{await}$ construct reaches the top and blocks
the evaluation.

We have given the interrupt propagation rules only for terms that are in normal form (see \autoref{lem:results-are-final}). For example, we do not push interrupts into the branches of sum elimination.
Instead, for terms that are still reducing, interrupts remain as parts of their evaluation contexts
and wait for inner interrupt handlers to propagate outwards and meet them.

An alternative design choice for interrupt propagation would be to take inspiration from
\emph{shallow interrupt handling}~\cite{Kammar:Handlers}, and instead of always propagating
the interrupts inwards into the continuations of interrupt handlers, the programmers themselves
would have to manually (recursively) reinvoke the interrupts that need to be propagated
inwards. In addition to giving an algebraically more natural semantics (due to the relationship
with deep effect handling), our choice of allowing interrupts to always propagate inwards provides a
more predictable programming model, in which an installed interrupt handler is guaranteed to
be executed whenever a corresponding interrupt is received, no matter what other installed interrupt 
handlers may do on the way. We leave exploring a variant of \lambdaAEff~based on shallow effect 
handling, and its formal relationship to this paper, for future work.

\paragraph{Awaiting a Promise To Be Fulfilled}
In addition to the two rules for outwards propagation, the semantics of the $\tmkw{await}$ construct
includes a $\beta$-rule allowing the blocked computation $M$ to resume executing as $M[V/x]$
when the $\tmkw{await}$ in question is given a fulfilled promise $\tmpromise V$.

\paragraph{Evaluation Contexts}
The semantics allows reductions under \emph{evaluation contexts} $\E$.
Observe that  as discussed earlier, the inclusion of interrupt handlers in the evaluation contexts means that reductions
involve potentially open terms.
Also, differently from the semantics of conventional operation calls \cite{Kammar:Handlers,Bauer:EffectSystem},
our evaluation contexts include outgoing signals. As such, the \emph{evaluation context rule} allows the execution of a computation
to proceed even if a signal has not yet been propagated to its receiver, or when an interrupt has
not yet arrived. Importantly, the evaluation contexts do not include $\tmkw{await}$, so as to model its blocking behaviour.
We write $\E[M]$ for the operation of filling the hole $[~]$ in $\E$ with $M$.

\paragraph{Non-Confluence}
It is worth noting that the asynchronous design means that the operational semantics
is \emph{nondeterministic}. More interestingly, the semantics is also \emph{not confluent}.

For one source of non-confluence, let us consider two reduction sequences of a same computation,
where for better readability, we highlight the active redex for each step:
 \[
\hspace{-0.15cm}
\begin{array}{r@{\,} l}
  & \tmopin{op}{V}{\tmwith{op}{x}{(\tmwith{op'}{y}{M}{q}{\tmawait{q}{z}{M'}})}{p}{\!\highlightgray{N}}}
  \\[1ex]
  \reduces & \highlightgray{\tmopin{op}{V}{\tmwith{op}{x}{(\tmwith{op'\!}{y}{\!M}{q}{\tmawait{q}{z}{M'}})}{p}{\!\highlightwhite{N'}}}}
  \\[1ex]
  \reduces & \highlightgray{\tmlet{p}{(\tmwith{op'}{y}{M[V/x]}{q}{\tmawait{q}{z}{M'}})}{\!\highlightwhite{\tmopin{op}{V}{N'}}}}
  \\[1ex]
  \reduces & \tmwith{op'}{y}{M[V/x]}{q}{\tmawait{q}{z}{(\tmlet{p}{M'}{\tmopin{op}{V}{N'}})}}
\end{array}
\]
and
\[
\hspace{-0.15cm}
\begin{array}{r@{\,} l}
  & \highlightgray{\tmopin{op}{V}{\tmwith{op}{x}{(\tmwith{op'}{y}{M}{q}{\tmawait{q}{z}{M'}})}{p}{\!\highlightwhite{N}}}}
  \\[1ex]
  \reduces & \highlightgray{\tmlet{p}{(\tmwith{op'}{y}{M[V/x]}{q}{\tmawait{q}{z}{M'}})}{\!\highlightwhite{\tmopin{op}{V}{N}}}}
  \\[1ex]
  \reduces & \tmwith{op'}{y}{M[V/x]}{q}{\tmawait{q}{z}{(\tmlet{p}{M'}{\tmopin{op}{V}{N}})}}
\end{array}
\]
Here, both final computations are \emph{temporarily} blocked until an incoming interrupt $\opsym{op'}$
is propagated to them and the variable $q$ gets bound to a fulfilled promise. Until this happens,
it is not possible for the blocked continuation $N$ to reduce to $N'$ in the latter final computation.

Another, distinct source of non-confluence concerns the commutativity of outgoing signals with enveloping interrupt
handlers. For instance, the following composite computation
\[
\tmopin{op}{V}{{\tmwith {op} x {\tmopout{op'}{W'}{M}} p {\tmopout{op''}{W''}{N}}}}
\]
can nondeterministically reduce to either
\[
\tmopout{op'}{W'}{\tmopout{op''}{W''}{{\tmlet{p}{M}{\tmopin{op}{V}{N}}}}}
\]
if we first propagate the interrupt $\op$ inwards, or to
\[
\tmopout{op''}{W''}{\tmopout{op'}{W'}{{\tmlet{p}{M}{\tmopin{op}{V}{N}}}}}
\]
if we first propagate the signal $\op''$ outwards. As a result, in the resulting two computations,
the signals $\op'$ and $\op''$ get issued, and received by other processes, in a different order.

\paragraph{A More Efficient Operational Semantics?}
Finally, it is worth emphasising that the operational semantics we present in
this paper is meant to serve as a declarative reference semantics of
\lambdaAEff, and as a means to relate the behaviour of the program constructs
specific to \lambdaAEff~to the behaviour of conventional algebraic effects and
their handlers. As such, the semantics is clearly not as efficient as one might
desire in a real-world implementation. For instance, in the current semantics,
signals are propagated out of computations one small step at a time. Instead,
one might consider an alternative semantics in which there would be a reduction
rule to pull signals out of computations from arbitrary depths. Dually, the
propagation of interrupts into computations also happens one small step at a
time. Here one might wonder whether it could be possible to use substitution in
\lambdaAEff~to make that propagation more efficient, akin to how we currently
use substitution to propagate fulfilled promises to sub-computations. Yet
another approach could be to model signal and interrupt propagation using shared
channels, as noted in \autoref{sec:conclusion}. However, as in this paper our
focus is not on the efficiency of the semantics, we leave all such explorations
for future work.

\subsection{Type-and-Effect System}
\label{sec:basic-calculus:type-system:computations}

We equip \lambdaAEff~with a type system in the tradition of type-and-effect systems for algebraic effects and
effect handlers \cite{Bauer:EffectSystem,Kammar:Handlers}, by extending the simple type system of FGCBV
with annotations about programs' possible effects (such as issued signals and installed interrupt handlers)
in function and computation types.

\subsubsection{Types}
\label{sec:basic-calculus:type-system:computations:types}

We define types in \autoref{fig:types}, separated into ground, value, and computation types.

\begin{figure}[t]
  \parbox{\textwidth}{
  \centering
  \small
  \begin{align*}
  \text{Ground type $A$, $B$}
  \bnfis& \tybase \bnfor \tyunit \bnfor \tyempty \bnfor \typrod{A}{B} \bnfor \tysum{A}{B}
  \\[1.5ex]
  \text{Signal or interrupt signature:}
  \phantom{\bnfis}& \op : A_\op
  \\[1.5ex]
  \text{Outgoing signal annotations:}
  \phantom{\bnfis}& \o \in O
  \\
  \text{Interrupt handler annotations:}
  \phantom{\bnfis}& \i \in I
  \\[1.5ex]
  \text{Value type $X$, $Y$}
  \bnfis& A \bnfor \typrod{X}{Y} \bnfor \tysum{X}{Y} \bnfor \tyfun{X}{\tycomp{Y}{(\o,\i)}} \bnfor \typromise{X}
  \\
  \text{Computation type:}
  \phantom{\bnfis}& \tycomp{X}{(\o,\i)}
  \\[1.5ex]
  \text{Typing context $\Gamma$}
  \bnfis& \cdot \bnfor \Gamma, x \of X
  \end{align*}
  }
  \caption{Value and Computation Types.}
  \label{fig:types}
\end{figure}

As noted in \autoref{sec:basic-calculus:values-and-computations}, \lambdaAEff~is parameterised over a set
$\sig$ of signal and interrupt \emph{names}. To each such name $\op \in \sig$, we assign a \emph{signature}
$\op : A_\op$ that specifies the payload type~$A_\op$ of the corresponding signal or interrupt.
Crucially, in order to be able to later prove that \lambdaAEff~is type-safe, we must put restrictions
on these signatures, as they classify values that may cross interrupt handler or process boundaries.
In \autoref{sec:extensions:fitch-style-modal-types}, we describe the exact reasons behind this restriction,
and propose a more flexible type system employing Fitch-style modal types~\cite{Clouston:FitchStyle}. But for the sake of exposition, we use here
the more limited approach from our original work~\cite{Ahman:POPL}, and restrict payload types to
\emph{ground types} $A, B, \ldots$, which include base, unit, empty, product, and sum types, but importantly
exclude promise and function types.

\emph{Value types} $X,Y,\ldots$ extend ground types with function and promise types.
The \emph{function type} $\tyfun{X}{\tycomp{Y}{(\o,\i)}}$ classifies functions that take $X$-typed arguments
to computations classified by the \emph{computation type} $\tycomp{Y}{(\o,\i)}$, i.e., ones that return $Y$-typed
values, while possibly issuing signals specified by $\o$ and handling interrupts specified by $\i$.
The \emph{effect annotations} $\o$ and $\i$ are drawn from sets $O$ and $I$ whose definitions we discuss
in \autoref{sec:basic-calculus:effect-annotations}. The \lambdaAEff-specific \emph{promise type}
$\typromise{X}$ classifies promises that can be fulfilled by supplying a value of type $X$.

\subsubsection{Effect Annotations}
\label{sec:basic-calculus:effect-annotations}

We now explain how we define the sets $O$ and $I$ from which we draw the
effect annotations we use for specifying functions and computations.
Traditionally, effect systems for algebraic effects simply use (flat) sets of
operation names for effect annotations \cite{Bauer:EffectSystem,Kammar:Handlers}.
In \lambdaAEff, however, we need to be
more careful, because triggering an interrupt handler executes a computation
that can issue potentially different signals and handle different interrupts from the main
program, and we would like to capture this in types.

\paragraph{Signal Annotations}
First, as outgoing signals do not carry any computational data, we follow
the tradition of type-and-effect systems for algebraic effects, and define
$O$ to be the \emph{power set} $\Pow \sig$. As such, each $\o \in O$ is a subset of
the signature $\Sigma$, specifying which signals a computation might issue (this is
an over-approximation of the actually issued signals).

\paragraph{Interrupt Handler Annotations}
As observed above, for specifying installed interrupt handlers, we cannot use (flat) sets
of interrupt names as the effect annotations $\i \in I$ if we want to track the nested
(and sometimes recursive) effectful structure of interrupt handlers.

Instead, intuitively each $\i \in I$ is a
\emph{possibly infinite} nesting of partial mappings of pairs of $O$- and $I$-annotations to names in
$\sig$---these pairs of annotations classify the possible effects of the corresponding interrupt handler code.
We use the
record notation
\[
\i = \{ \op_1 \mapsto (\o_1,\i_1) , \ldots , \op_n \mapsto (\o_n,\i_n) \}
\]
to mean that $\i$ maps $\op_1, \ldots, \op_n$ to the annotations $(\o_1,\i_1), \ldots, (\o_n,\i_n)$,
while any other names in $\sig$ are unannotated, corresponding to no interrupt handlers being installed for
these other names. We write $\i\, (\op_i) = (\o_i,\i_i)$ to mean that the annotation
$\i$ maps $\op_i$ to $(\o_i,\i_i)$.

Formally,
we define $I$ as the \emph{greatest fixed point}
of a set functor $\Phi$, given by
\[
\Phi (X) \defeq \sig \Rightarrow (O \times X)_\bot
\]
where $\Rightarrow$ is exponentiation, $\times$ is Cartesian product,
and $(-)_\bot$ is the lifting operation, which we use to represent unannotated names, and which is defined
using the disjoint union as $(-) \cupdot \{\bot\}$. Formally speaking, $I$ is given
by an isomorphism $I \cong \Phi(I)$, but for presentation purposes we leave it
implicit and work as if we had a strict equality $I = \Phi(I)$.

\paragraph{Subtyping and Recursive Effect Annotations}
Both $O$ and $I$ come equipped with natural \emph{partial orders}: for $O$, $\order O$ is given simply by
subset inclusion; and for $I$, the pointwise order~$\order I$ is characterised as follows:
\[
\begin{array}{l c l}
\i \order I \i'
&
\text{iff}
&
\forall\, (\op \in \sig) \, (\o'' \in O) \, (\i'' \in I) .\, \i\, (\op) = ({\o''} , {\i''}) \implies
\\[0.5ex]
&& \exists\, (\o''' \in O) \, (\i''' \in I) .\, \i'\, (\op) = ({\o'''} , {\i'''}) \wedge \o'' \order O \o''' \wedge \i'' \order I \i'''
\end{array}
\]
We also use the \emph{product order} $\order {O \times I}$, defined as
$(\o,\i) \order {O \times I} (\o',\i') \defeq \o \order O \o' \wedge \i \order I \i'$.
In particular, we use $\order {O \times I}$ to define the subtyping
relation for \lambdaAEff's computation types.

Furthermore, both $O$ and $I$ carry a \emph{join-semilattice} structure, where
$\o \sqcup \o' \in O$ is given simply by the union of sets $\o \cup \o'$, while
$\i \sqcup \i' \in I$ is given pointwise as follows:
\[
(\i \sqcup \i')(\op)
~\defeq~
\begin{cases}
(\o'' \sqcup \o''' , \i'' \sqcup \i''') & \mbox{if } \i\, (\op) = (\o'',\i'') \wedge \i'\, (\op) = (\o''',\i''') \\
(\o'' , \i'') & \mbox{if } \i\, (\op) = (\o'',\i'') \wedge \i'\, (\op) = \bot \\
(\o''' , \i''') & \mbox{if } \i\, (\op) = \bot \wedge \i'\, (\op) = (\o''',\i''') \\
\bot & \mbox{if } \i\, (\op) = \bot \wedge \i'\, (\op) = \bot \\
\end{cases}
\]

Importantly, the partial orders $(O,\order O)$ and $(I,\order I)$ are both \emph{$\omega$-complete} and \emph{pointed}, i.e.,
they form \emph{pointed $\omega$-cpos}, meaning that they have least upper bounds of all increasing $\omega$-chains, and
least elements (given by the empty set $\emptyset$ and the constant $\bot$-valued mapping, respectively).
As a consequence, and as is well-known, \emph{least fixed points} of continuous (endo)maps on them are then guaranteed
to exist~\cite{Amadio:Domains, Gierz:ContinuousLattices}.
For \lambdaAEff, we are particularly interested in the least fixed points of continuous maps $f : I \to I$,
so as to specify and typecheck code examples involving reinstallable interrupt handlers, as we illustrate in
\autoref{sec:extensions:reinstallable-interrupt-handlers}.

We also note that if we were only interested in the type safety of \lambdaAEff, and not
in typechecking reinstallable interrupt handler examples, then we would not need $(I,\order I)$ to be \emph{$\omega$-complete},
and could have instead chosen $I$ to be the
\emph{least fixed point} of the set functor $\Phi$ defined earlier, which is what we do for simplicity in our \pl{Agda}
formalisation. In this case, each interrupt handler annotation $\i \in I$ would be a \emph{finite} nesting of partial mappings.

Finally, we envisage that any future full-fledged high-level language based on \lambdaAEff~would
allow users to define their (recursive) effect annotations in a small domain-specific language, providing
a syntactic counterpart to the domain-theoretic development we use in this paper.

\paragraph{Interrupt Actions}

We mimic the act of triggering an interrupt handler for some interrupt $\op$ on an effect annotation $(\o, \i)$ through an
\emph{action} defined as follows:
\[
\opincomp {op} {(\o , \i)}
~\defeq~
  \begin{cases}
   \left(\o \sqcup \o' , \i[\op \mapsto \bot] \sqcup \i' \right) & \mbox{if } \i\, (\op) = (\o',\i')\\
   (\o,\i) & \mbox{otherwise}
  \end{cases}
\]
If $(\o, \i)$ lists any interrupt handlers installed for $\op$, then $\i\, (\op) = (\o',\i')$,
where $(\o',\i')$ specifies the effects of said handler code. Now, when the inward propagating
interrupt reaches those interrupt handlers, it triggers the execution of the corresponding handler code,
and thus the entire interrupted computation can also issue signals in $\o'$ and handle interrupts in $\i'$.

The notation $\i[\op \mapsto \bot]$ sets $\i$ to $\bot$ at $\op$,
and leaves it unchanged elsewhere.
Mapping $\op$ to $\bot$ in the definition of $\tmkw{\downarrow}$ captures that the interrupt $\op$ triggers all the corresponding interrupt handlers that are installed in the computation that it is propagated to.

\subsubsection{Typing Rules}
\label{sect:typing-rules}

We characterise \emph{well-typed values} using the judgement $\Gamma \types V : X$
and \emph{well-typed computations} using the judgement $\Gamma \types M : \tycomp{X}{(\o,\i)}$.
In both judgements, $\Gamma$ is a \emph{typing context}.
The rules defining these judgements are respectively given in \autoref{fig:value-typing-rules} and
\ref{fig:computation-typing-rules}.

\begin{figure}[h]
  \centering
  \small
  \begin{mathpar}
  \coopinfer{TyVal-Var}{
  }{
    \Gamma, x \of X, \Gamma' \types x : X
  }
  \qquad
  \coopinfer{TyVal-Unit}{
  }{
    \Gamma \types \tmunit : \tyunit
  }
  \qquad
  \coopinfer{TyVal-Pair}{
    \Gamma \types V : X \\
    \Gamma \types W : Y
  }{
    \Gamma \types \tmpair{V}{W} : \typrod{X}{Y}
  }
  \qquad
  \coopinfer{TyVal-Promise}{
    \Gamma \types V : X
  }{
    \Gamma \types \tmpromise V : \typromise X
  }
  \\
  \coopinfer{TyVal-Inl}{
    \Gamma \types V : X
  }{
    \Gamma \types \tminl[Y]{V} : X + Y
  }
  \qquad
  \coopinfer{TyVal-Inr}{
    \Gamma \types W : Y
  }{
    \Gamma \types \tminr[X]{W} : X + Y
  }
  \qquad
  \coopinfer{TyVal-Fun}{
    \Gamma, x \of X \types M : \tycomp{Y}{(\o,\i)}
  }{
    \Gamma \types \tmfun{x : X}{M} : \tyfun{X}{\tycomp{Y}{(\o,\i)}}
  }
  \end{mathpar}
  \caption{Value Typing Rules.}
  \label{fig:value-typing-rules}
\end{figure}

\begin{figure}
  \flushright
  \small
  \renewcommand{\arraystretch}{4}
  \addtolength{\tabcolsep}{-0.2em}
  \begin{tabular}{cc}
  $
  \coopinfer{TyComp-Return}{
    \Gamma \types V : X
  }{
    \Gamma \types \tmreturn{V} : \tycomp{X}{(\o,\i)}
  }
  $
  & $
  \coopinfer{TyComp-Let}{
    \Gamma \types M : \tycomp{X}{(\o,\i)}
    \\
    \Gamma, x \of X \types N : \tycomp{Y}{(\o,\i)}
  }{
    \Gamma \types
    \tmlet{x}{M}{N} : \tycomp{Y}{(\o,\i)}
  }
  $ \\

  $
  \coopinfer{TyComp-Apply}{
    \Gamma \types V : \tyfun{X}{\tycomp{Y}{(\o,\i)}} \\
    \Gamma \types W : X
  }{
    \Gamma \types \tmapp{V}{W} : \tycomp{Y}{(\o,\i)}
  }
  $ &
  $
  \coopinfer{TyComp-MatchPair}{
    \Gamma \types V : \typrod{X}{Y} \\
    \Gamma, x \of X, y \of Y \types M : \tycomp{Z}{(\o,\i)}
  }{
    \Gamma \types \tmmatch{V}{\tmpair{x}{y} \mapsto M} : \tycomp{Z}{(\o,\i)}
  }
  $\\
  $
  \coopinfer{TyComp-MatchEmpty}{
    \Gamma \types V : \tyempty
  }{
    \Gamma \types \tmmatch[\tycomp{Z}{(\o,\i)}]{V}{} : \tycomp{Z}{(\o,\i)}
  }
  $ &
  $
  \coopinfer{TyComp-MatchSum}{
    \Gamma \types V : X + Y \\\\
    \Gamma, x \of X \types M : \tycomp{Z}{(\o,\i)} \\
    \Gamma, y \of Y \types N : \tycomp{Z}{(\o,\i)} \\
  }{
    \Gamma \types \tmmatch{V}{\tminl{x} \mapsto M, \tminr{y} \mapsto N} : \tycomp{Z}{(\o,\i)}
  }
  $\\
  $
  \coopinfer{TyComp-Signal}{
    \op \in \o \\
    \Gamma \types V : A_\op \\
    \Gamma \types M : \tycomp{X}{(\o,\i)}
  }{
    \Gamma \types \tmopout{op}{V}{M} : \tycomp{X}{(\o,\i)}
  }
  $&
  $
  \coopinfer{TyComp-Interrupt}{
    \Gamma \types V : A_\op \\
    \Gamma \types M : \tycomp{X}{(\o,\i)}
  }{
    \Gamma \types \tmopin{op}{V}{M} : \tycomp{X}{\opincomp {op} (\o,\i)}
  }
  $\\
  \multicolumn{2}{c}{
  $
  \coopinfer{TyComp-Promise}{
    ({\o'} , {\i'}) = \i\, (\op)  \\
    \Gamma, x \of A_\op \types M : \tycomp{\typromise X}{(\o',\i')} \\
    \Gamma, p \of \typromise X \types N : \tycomp{Y}{(\o,\i)}
  }{
    \Gamma \types \tmwith{op}{x}{M}{p}{N} : \tycomp{Y}{(\o,\i)}
  }$
  }
  \\
  $
  \coopinfer{TyComp-Await}{
    \Gamma \types V : \typromise X \\
    \Gamma, x \of X \types M : \tycomp{Y}{(\o,\i)}
  }{
    \Gamma \types \tmawait{V}{x}{M} : \tycomp{Y}{(\o,\i)}
  }
  $&
  $
   \coopinfer{TyComp-Subsume}{
      \Gamma \types M : \tycomp{X}{(\o, \i)} \\
      (\o,\i) \order {O \times I} (\o',\i')
    }{
      \Gamma \types M : \tycomp{X}{(\o', \i')}
    }
  $
  \end{tabular}
  \addtolength{\tabcolsep}{0.2em}
  \caption{Computation Typing Rules.}
  \label{fig:computation-typing-rules}
\end{figure}

\paragraph{Values}

The rules for values are mostly standard.
The only \lambdaAEff-specific rule is \textsc{TyVal-Promise}, which states that in order to fulfil
a \emph{promise} of type $\typromise X$, one has to supply a value of type $X$. In the rule \textsc{TyVal-Var}, we emphasise the position of the variable in the context, as it will become important once we extend the calculus with modal types in \autoref{sec:extensions:fitch-style-modal-types}.

\paragraph{Computations}

Analogously to values, the typing rules are standard for computation terms that \lambdaAEff~inherits from FGCBV,
with the \lambdaAEff-rules additionally tracking effect information.

The first \lambdaAEff-specific typing rule \textsc{TyComp-Signal} states that in order
to issue a signal $\op$ in a computation that has type $\tycomp{X}{(\o,\i)}$, we must have $\op \in \o$ and the type of
the payload value has to match $\op$'s signature $\op : A_\op$.

The rule \textsc{TyComp-Interrupt} is used to type incoming interrupts.
In particular, when the outside world propagates an interrupt $\op$ to a computation
$M$ of type $\tycomp{X}{(\o,\i)}$, the resulting
computation $\tmopin{op}{V}{M}$ gets assigned the type $\tycomp{X}{\opincomp {op} (\o,\i)}$,
where the action $\opincomp {op} (\o,\i)$ of the interrupt $\op$ on the annotation $(\o, \i)$ is given as
discussed in \autoref{sec:basic-calculus:effect-annotations}.

The rule \textsc{TyComp-Promise} states that
the interrupt handler code $M$ has to return a fulfilled promise of type $\typromise X$, for some type $X$,
while possibly issuing signals $\o'$ and handling interrupts $\i'$, both of which are
determined by the effect annotation $\i$ of the entire computation, as
$(\o',\i') = \i\, (\op)$. The variable $p$ bound in the continuation, which sub-computations can block on
to await $\op$ to arrive and be handled, also gets assigned
the promise type $\typromise X$.

It is worth noting that we could have had $M$ simply
return values of type $X$, but at the cost of not being able to implement some of the more interesting examples,
such as the guarded interrupt handlers defined in \autoref{sec:applications:guarded-handlers}.
At the same time, for \lambdaAEff's type safety, it is
crucial that $p$ would have remained assigned the distinguished promise type $\typromise X$.

The rule \textsc{TyComp-Await} simply states that after awaiting a promise of type $\typromise X$,
the continuation~$M$ can refer to the promised value using the variable $x$ of type $X$.

Finally, the rule \textsc{TyComp-Subsume} allows \emph{subtyping}, required to prove type preservation for rules where an interrupt encounters an interrupt handler.
To simplify the presentation, we consider a limited form of subtyping, in which we
shallowly relate only effect annotations.

\subsection{Type Safety}
\label{sec:basic-calculus:type-safety}

The sequential part of \lambdaAEff~satisfies the expected type safety properties
ensuring that ``well-typed programs do not go wrong''. We split these safety properties into the usual
\emph{progress} and \emph{preservation} theorems \cite{Wright:SynAppTypeSoundness}.
We omit their proofs~\cite{Ahman:POPL} from this summary, and revisit them
in \autoref{sec:type-safety} for the extended version of \lambdaAEff,
as the proofs for the extended calculus also apply to the version summarised in this section.

The progress result states that well-typed (and sufficiently) closed computations can either make another step of
reduction, or they are already in a well-defined result form (and thus have correctly stopped reducing).
As such, we first need to define when we consider \lambdaAEff-computations
to be in \emph{result form} (commonly also called a normal form). We do so using the
judgements $\CompResult {\Psi} {M}$,
which states that $M$ has reached its final form as an isolated computation term,
and $\RunResult {\Psi} {M}$, which states that $M$ has reached the final form of a
computation running inside a process with all its signals already having been propagated to
other parallel processes (described in more detail in \autoref{sec:basic-calculus:type-safety:processes}):
\begin{mathpar}
  \coopinfer{}{
    \CompResult {\Psi} {M}
  }{
    \CompResult {\Psi} {\tmopout {op} V M}
  }
  \and
  \coopinfer{}{
    \RunResult {\Psi} {M}
  }{
    \CompResult {\Psi} {M}
  }
  \and
  \coopinfer{}{
  }{
    \RunResult {\Psi} {\tmreturn V}
  }
  \and
  \coopinfer{}{
    \RunResult {\Psi \cup \{p\}} {N}
  }{
    \RunResult {\Psi} {\tmwith {op} x M p N}
  }
  \and
  \coopinfer{}{
    p \in \Psi
  }{
    \RunResult {\Psi} {\tmawait p x M}
  }
\end{mathpar}
In these judgements, $\Psi$ is a set
of (promise-typed) variables $p$ that have been bound by interrupt handlers enveloping the given computation.
Intuitively, these judgements express that a computation $M$ is in a (top-level)
result form $\CompResult {\Psi} {M}$ when, considered as a tree, it has a shape in which \emph{all}
signals are towards the root, interrupt handlers are in the intermediate nodes, and
the leaves contain return values and computations that are temporarily blocked
while awaiting one of the promise-typed variables $p$ in $\Psi$ to be fulfilled.

The new reduction rules that propagate the awaiting construct out of sequencing and interrupts into the awaiting construct ensure the explicit form of all blocking computations and considerably simplify the definition of $\RunResult {\Psi} {M}$ compared to the previous version of our work~\cite{Ahman:POPL}.
The finality of these result forms is captured by the next lemma.

\begin{lem}
\label{lem:results-are-final}
Given $\Psi$ and $M$, such that $\CompResult {\Psi} {M}$, then there is no $N$ with $M \reduces N$.
\end{lem}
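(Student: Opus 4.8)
The plan is to argue by induction on the derivation of $\CompResult{\Psi}{M}$. Since this judgement is defined mutually with $\RunResult{\Psi}{M}$ (the former referring to the latter, and $\RunResult{\Psi}{\cdot}$ to itself), what I would actually set up is a simultaneous induction showing that \emph{both} $\CompResult{\Psi}{M}$ and $\RunResult{\Psi}{M}$ imply that there is no $N$ with $M \reduces N$. Unfolding the result-form rules, there are exactly four shapes to consider: an outgoing signal $\tmopout{op}{V}{M'}$ with $\CompResult{\Psi}{M'}$; a value return $\tmreturn V$; an interrupt handler $\tmwith{op}{x}{M_1}{p}{N_1}$ with $\RunResult{\Psi \cup \{p\}}{N_1}$; and a blocked await $\tmawait{p}{x}{M_1}$ with $p \in \Psi$. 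For each I would discharge two obligations: (i) the term is not itself the left-hand side of any reduction rule other than the evaluation context rule; and (ii) whenever it decomposes as $\E[M_0]$ for a non-empty evaluation context $\E$, the hole content $M_0$ is exactly the subject of the immediate sub-derivation, so that the induction hypothesis forbids $M_0 \reduces M_0'$ and hence also $\E[M_0] \reduces \E[M_0']$. Because the reduction relation already bundles in the evaluation context rule, the induction hypothesis for (ii) simultaneously rules out direct reductions and further nested context reductions of $M_0$.

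For obligation (i), most cases follow just from head symbols: no reduction rule in \autoref{fig:small-step-semantics-of-computations} has a left-hand side headed by $\tmkw{\uparrow}$ or by $\tmkw{return}$, so the signal and return shapes are immediate. The two rules needing care are the one commuting a signal past an interrupt handler, whose left-hand side $\tmwith{op}{x}{M}{p}{\tmopout{op'}{V}{N}}$ is headed by $\tmkw{promise}$, and the await-fulfilment rule, whose left-hand side $\tmawait{\tmpromise V}{x}{M}$ is headed by $\tmkw{await}$ --- these share their heads with two of our result shapes. Here the shape discipline built into $\RunResult{\Psi}{\cdot}$ is exactly what saves us: for the interrupt handler $\tmwith{op}{x}{M_1}{p}{N_1}$, the continuation $N_1$ satisfies $\RunResult{\Psi \cup \{p\}}{N_1}$ and so is itself a return, a handler, or a blocked await --- never an outgoing signal $\tmopout{op'}{V}{N'}$ --- so the commutation rule cannot fire; and for $\tmawait{p}{x}{M_1}$ the awaited term is the bare variable $p$, which is syntactically not a fulfilled promise $\tmpromise V$, so the await-fulfilment rule cannot fire either.

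For obligation (ii), I would observe that the only non-empty evaluation contexts whose shape can match one of our four result forms are $\tmopout{op}{V}{\E'}$ and $\tmwith{op}{x}{M_1}{p}{\E'}$: the grammar of $\E$ contains no $\tmkw{await}$ context, and the $\tmkw{promise}$ context does not descend into the handler code $M_1$, only into the continuation. In the first case the hole content is $M'$, governed by the sub-derivation $\CompResult{\Psi}{M'}$; in the second it is $N_1$, governed by the sub-derivation $\RunResult{\Psi \cup \{p\}}{N_1}$; in both the induction hypothesis applies directly. Combined with (i), this closes the induction.

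I do not expect a deep obstacle here --- the lemma is essentially an inspection of normal forms --- but the one point that genuinely requires attention, and that I would make sure to spell out, is the coincidence flagged above: two reduction rules share their head symbol with two result-form shapes, and it is precisely the explicit-continuation / blocked-variable discipline of $\RunResult{\Psi}{\cdot}$ (no dangling signal sitting under a handler, and awaits only ever on promise-typed variables) that renders those rules inapplicable. This is also where the redesigned, simplified normal forms advertised in the surrounding text --- with $\tmkw{await}$ always floated to the top --- keep the argument short; with the old normal forms one would additionally have to reason about awaits buried under sequencing.
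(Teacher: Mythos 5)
Your proof is correct and follows essentially the route the paper's (omitted, formalised) proof takes: a mutual induction on the derivations of $\CompResult{\Psi}{M}$ and $\RunResult{\Psi}{M}$, checking for each result shape that no reduction axiom applies at the head and that the evaluation-context rule only descends into the subterm covered by the immediate sub-derivation. You also correctly isolate the two genuinely delicate points — the signal-commutation rule cannot fire because $\RunResult{\Psi\cup\{p\}}{N}$ forbids a signal at the head of the handler's continuation, and the await-$\beta$ rule cannot fire because the awaited term is a bare variable $p \in \Psi$ rather than a fulfilled promise $\tmpromise{V}$ — which is exactly where the simplified result forms do the work.
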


Using the result forms, the progress theorem for the sequential part of \lambdaAEff\ is as follows:
\noindent
\begin{thm}[Progress for computations]
\label{thm:progress}
Given a well-typed computation
\[
  p_1 \of \typromise {X_1}, \ldots, p_n \of \typromise {X_n} \types M : \tycomp{Y}{(\o,\i)}
\]
then either
\begin{enumerate}[(a)]
  \item there exists a computation $N$, such that $M \reduces N$, or
  \item the computation $M$ is in a result form, i.e., we have $\CompResult {\{p_1, \ldots, p_n\}} {M}$.
\end{enumerate}
\end{thm}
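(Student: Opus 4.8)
The plan is to proceed by induction on the typing derivation of $M$. Since the context is of the restricted shape $p_1 \of \typromise{X_1}, \ldots, p_n \of \typromise{X_n}$ — call such a context \emph{promise-only} — most value-introduction rules cannot produce a closed value at a type we need to destruct, which is exactly what makes case analysis feasible. For the standard FGCBV computation rules (\textsc{TyComp-Let}, \textsc{TyComp-Apply}, the three \textsc{Match} rules, \textsc{TyComp-Await}), I would first apply the induction hypothesis to the relevant sub-computation or invoke a canonical-forms argument on the scrutinee value: a value of type $\typprod{X}{Y}$ in a promise-only context must be a pair, a value of type $\tysum{X}{Y}$ must be an injection, a value of type $\tyfun{X}{\tycomp{Y}{(\o,\i)}}$ must be a $\tmkw{fun}$, and — crucially — a value of type $\tyempty$ is impossible, so \textsc{TyComp-MatchEmpty} is vacuous. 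In each case either the head redex fires (giving (a)) or, for \textsc{TyComp-Let} and the like, the sub-computation is in a result form and I re-examine its shape to either propagate it one more step via an algebraicity/commutativity rule or conclude the whole term is a result.

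For the \lambdaAEff-specific rules the argument is more delicate. For \textsc{TyComp-Signal}, the sub-computation $M$ either reduces — and then $\tmopout{op}{V}{M}$ reduces under the evaluation context $\tmopout{op}{V}{\E}$ — or $M$ is in a result form, and then $\CompResult{\Psi}{\tmopout{op}{V}{M}}$ follows directly from the first result-form rule. For \textsc{TyComp-Interrupt}, I case-split on the result form of the sub-computation $M$ (after applying the IH): if $M$ is $\tmreturn W$, or an outgoing signal, or an interrupt handler (matching or non-matching $\op$), or a blocked $\tmkw{await}$, one of the interrupt-propagation reduction rules applies; otherwise $M$ reduces and so does $\tmopin{op}{V}{M}$ under $\tmopin{op}{V}{\E}$. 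For \textsc{TyComp-Promise}, I apply the IH to the continuation $N$ (typed in the extended context $\Psi, p \of \typromise X$, still promise-only): if $N$ reduces, then $\tmwith{op}{x}{M}{p}{N}$ reduces under $\tmwith{op}{x}{M}{p}{\E}$; if $\RunResult{\Psi \cup \{p\}}{N}$ then $\RunResult{\Psi}{\tmwith{op}{x}{M}{p}{N}}$ by the corresponding result-form rule; and if $N$ is a \emph{top-level} result form that is not a run-result — i.e., $N$ itself is an outgoing signal $\tmopout{op'}{V'}{N'}$ — then the commutativity rule for signals with interrupt handlers fires. For \textsc{TyComp-Await}, the subtle point is the shape of the awaited value $V \of \typromise X$: in a promise-only context, canonical forms for $\typromise X$ gives that $V$ is \emph{either} a literal $\tmpromise{W}$ \emph{or} one of the variables $p_i$. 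In the first case the $\tmkw{await}$ $\beta$-rule fires; in the second, $p_i \in \Psi$, so (after applying the IH to $M$, which is itself in a result form) the $\tmkw{await}$ result-form rule applies. Finally, \textsc{TyComp-Subsume} is immediate by the IH, since result-formhood and reducibility are insensitive to the effect annotation.

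The main obstacle I anticipate is getting the canonical-forms lemma for the promise type exactly right: I need that in a promise-only context the only closed-enough values of type $\typromise X$ are fulfilled promises $\tmpromise{V}$ and the context variables $p_i$ — and in particular that no such value can be, say, a function applied to something (values don't include application) or some other stuck form. This is where the design decision to keep the context restricted to promise-typed variables pays off, and it is also where the $\tmkw{await}$ case splits cleanly into "fire the $\beta$-rule" versus "legitimately blocked". A secondary fiddly point is the bookkeeping in the \textsc{TyComp-Let} and \textsc{TyComp-Promise} cases: when the sub-term is a result form rather than reducible, I must carefully trace which of the algebraicity/commutativity rules applies based on whether the sub-result is a $\tmkw{return}$, a signal, an interrupt handler, or a blocked $\tmkw{await}$ — ensuring no case is missed and that in the "genuinely stuck" subcases the whole term is itself a valid result form. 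Everything else is routine structural recursion, and \sref{Lemma}{lem:results-are-final} guarantees these two outcomes are mutually exclusive.
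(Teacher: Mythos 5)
Your proposal matches the paper's proof: induction on the typing derivation, canonical-forms lemmas that crucially exploit the promise-only context (so \textsc{TyComp-Await} splits into a fulfilled promise, giving a $\beta$-step, versus a context variable $p_i$, giving the blocked result form), and the evaluation-context cases handled by applying the induction hypothesis to the continuation. One tiny remark: in the await case your parenthetical appeal to the induction hypothesis on $M$ is unnecessary (and would not even be available, since $M$ is typed in a context extended with $x \of X$), but it is also harmless, because the result-form rule $\RunResult{\Psi}{\tmawait{p}{x}{M}}$ places no condition on $M$.
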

\noindent
In particular, with the empty context, we get the usual progress statement, which states that
$\types M : \tycomp{X}{(\o, \i)}$ implies that either $M \reduces N$ for some $N$
or that $\CompResult {\emptyset} {M}$ holds. This implies that any promise variable
which we are awaiting to be fulfilled must correspond to one of the installed interrupt handlers.
Additionally, the type system ensures that all outgoing signals are listed in $\o$ and all
installed interrupt handlers are specified in $\i$.

The type preservation result is standard and says that reduction preserves well-typedness.

\begin{thm}[Preservation for computations]
\label{thm:preservation}
Given a computation $\Gamma \types M : \tycomp{X}{(\o,\i)}$, such that $M$
can reduce as $M \reduces N$, then we have $\Gamma \types N : \tycomp{X}{(\o,\i)}$.
\end{thm}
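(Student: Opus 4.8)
The plan is to strengthen the statement slightly and prove, by induction on the derivation of $M \reduces N$, that for every context $\Gamma$ and computation type $\tycomp{X}{(\o,\i)}$, if $\Gamma \types M : \tycomp{X}{(\o,\i)}$ and $M \reduces N$ then $\Gamma \types N : \tycomp{X}{(\o,\i)}$. Before the main induction I would establish the standard auxiliary lemmas: \emph{weakening} (typing is preserved when $\Gamma$ is extended with fresh variables); a \emph{substitution lemma} stating that $\Gamma, x \of X, \Gamma' \types M : \tycomp{Y}{(\o,\i)}$ together with $\Gamma \types V : X$ yields $\Gamma, \Gamma' \types M[V/x] : \tycomp{Y}{(\o,\i)}$, and likewise for values, proved by mutual induction on the typing derivations; and \emph{inversion lemmas up to subtyping}, obtained by peeling off trailing uses of \textsc{TyComp-Subsume} --- e.g. $\Gamma \types \tmopin{op}{V}{M} : \tycomp{X}{(\o^*,\i^*)}$ gives some $(\o,\i)$ with $\Gamma \types V \of A_\op$, $\Gamma \types M : \tycomp{X}{(\o,\i)}$ and $\opincomp{op}{(\o,\i)} \order{O \times I} (\o^*,\i^*)$, and similarly for \textsc{TyComp-Let}, \textsc{TyComp-Signal}, \textsc{TyComp-Promise}, and \textsc{TyComp-Await} (values need no such treatment, as value typing has no subsumption rule). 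I would also record the handful of order-theoretic facts about the effect algebra that recur throughout the proof: joins are upper bounds, the action $\opincomp{op}{(-)}$ is monotone with respect to $\order{O \times I}$, and $(\o',\i') \order{O \times I} \opincomp{op}{(\o,\i)}$ whenever $\i\,(\op) = (\o',\i')$.

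With these in hand the proof proceeds by case analysis on the last rule of $M \reduces N$. For the standard FGCBV $\beta$-rules (application, $\tmkw{let}$/$\tmkw{return}$, the match rules) and for the $\tmkw{await}$ $\beta$-rule $\tmawait{\tmpromise V}{x}{M} \reduces M[V/x]$, apply the relevant inversion lemma and then the substitution lemma; for the last rule use also that value typing of $\tmpromise V$ is exact. For the three algebraicity rules and the commutativity-with-interrupt-handlers rule the argument is bookkeeping of effect annotations: invert the outer \textsc{TyComp-Let} (or \textsc{TyComp-Promise}) and the inner construct, then reassemble the right-hand side with the same annotations, inserting a \textsc{TyComp-Subsume} exactly where an annotation must be enlarged to a join (the algebraicity $\tmkw{let}$-rules also need weakening to move the outer continuation under the newly scoped binder). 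For the interrupt-propagation rules the same recipe works once the effect arithmetic is checked: e.g. $\tmopin{op}{V}{\tmwith{op}{x}{M}{p}{N}} \reduces \tmlet{p}{M[V/x]}{\tmopin{op}{V}{N}}$ type-checks because, writing $(\o',\i') = \i\,(\op)$, the substitution lemma gives $\Gamma \types M[V/x] : \tycomp{\typromise{X'}}{(\o',\i')}$, which subsumes to $\opincomp{op}{(\o,\i)}$, while \textsc{TyComp-Interrupt} gives $\Gamma, p \of \typromise{X'} \types \tmopin{op}{V}{N} : \tycomp{Y}{\opincomp{op}{(\o,\i)}}$, so \textsc{TyComp-Let} concludes; the mismatched-interrupt rule is handled the same way, using that for $\op \neq \op'$ the annotation $\opincomp{op'}{(\o,\i)}$ evaluated at $\op$ is still an upper bound of $\i\,(\op)$, so the handler code $M$ re-typechecks after one subsumption, and monotonicity of $\opincomp{op'}{(-)}$ lets the final result subsume back to the original type.

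The only case that is not routine annotation shuffling is the commutativity of signals with interrupt handlers, $\tmwith{op}{x}{M}{p}{\tmopout{op'}{V}{N}} \reduces \tmopout{op'}{V}{\tmwith{op}{x}{M}{p}{N}}$, where on the right the payload $V$ has moved outside the scope of $p$. Inverting the left-hand side gives only $\Gamma, p \of \typromise{X'} \types V \of A_{\op'}$, so I need the \emph{strengthening lemma} flagged in \autoref{sec:basic-calculus:type-system:computations:types}: a variable of promise type cannot occur free in a value of ground type. This follows by an easy induction on the derivation of $\Gamma' \types V \of A$ with $A$ ground --- such a value is built only from variables, $\tmunit$, pairs, and injections of ground type, so every variable occurring in it has ground type and in particular is not the promise-typed $p$ --- whence $p \notin \mathrm{fv}(V)$ and so $\Gamma \types V \of A_{\op'}$, after which the right-hand side typechecks by \textsc{TyComp-Signal} and \textsc{TyComp-Promise}. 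The restriction of payload types to ground types (\autoref{sec:basic-calculus:type-system:computations:types}) is precisely what makes this lemma and hence this case go through; it is also where the more liberal Fitch-style system of \autoref{sec:extensions:fitch-style-modal-types} will need to do extra work.

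Finally, the evaluation-context rule is handled via a \emph{decomposition lemma} proved by induction on $\E$: if $\Gamma \types \E[M] : \tycomp{X}{(\o,\i)}$, then there are an extended context $\Gamma_\E$ (adding the promise variables bound by the interrupt handlers occurring in $\E$) and a computation type $\underline{C}_\E$ with $\Gamma_\E \types M : \underline{C}_\E$, and moreover $\Gamma_\E \types N' : \underline{C}_\E$ implies $\Gamma \types \E[N'] : \tycomp{X}{(\o,\i)}$ for every $N'$; each inductive case ($\tmlet{x}{\E}{N}$, $\tmopout{op}{V}{\E}$, $\tmopin{op}{V}{\E}$, $\tmwith{op}{x}{M}{p}{\E}$) simply reuses the corresponding inversion lemma and reintroduction rule, reinstating any peeled subsumptions. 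Applying this lemma, then the induction hypothesis to $M \reduces N$ at the hole's type $\underline{C}_\E$, and then the reassembly direction of the lemma gives $\Gamma \types \E[N] : \tycomp{X}{(\o,\i)}$, completing the proof. I expect the strengthening argument for the signal/interrupt-handler commutativity rule, together with getting the subsumptions in exactly the right places in the interrupt-propagation and algebraicity rules, to be the only genuine obstacles; the remainder is a standard syntactic type-soundness development.
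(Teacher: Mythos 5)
Your proposal is correct and follows essentially the same route as the paper's proof (given in Section 5.5 for the extended calculus and specialised back to this one): induction on the reduction derivation, with substitution and inversion lemmas, the order-theoretic facts about the action $\opincomp{op}{(-)}$ corresponding to Lemma 5.2, a strengthening argument showing the promise-typed variable cannot occur in a signal payload for the commutativity case, and a typing/decomposition lemma for evaluation contexts matching Lemma 5.3. The only difference is presentational: the paper works with mobile types and the modal strengthening lemma of the extended system, whereas you give the ground-type instance of that strengthening directly, which is exactly what the basic calculus needs.
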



\section{A Calculus for Asynchronous Effects: Parallel Processes}
\label{sec:basic-calculus:processes}

We now describe the parallel part of \lambdaAEff. Similarly to the sequential part, we 
present the corresponding syntax, small-step semantics, 
type-and-effect system, and type safety results.

\subsection{Parallel Processes}

To keep the presentation focussed on the asynchronous use of algebraic effects, we 
consider a very simple model of parallelism: a process is either an \emph{individual computation} 
or the \emph{parallel composition} of two processes. To facilitate interactions, processes also  
contain outward propagating \emph{signals} and inward propagating \emph{interrupts}. 

In detail, the syntax of \emph{parallel processes} is given by the following grammar:
\[
  P, Q
  \bnfis \tmrun M
  \,\bnfor\! \tmpar P Q
  \,\bnfor\! \tmopout{op}{V}{P}
  \,\bnfor\! \tmopin{op}{V}{P}
\]
Note that processes do not include interrupt handlers---these are local to computations.

Here the number and hierarchy of processes running in parallel is fixed---a limitation that we address
in \autoref{sec:extensions:dynamic-process-creation} by introducing a means to dynamically create new processes.

\subsection{Small-Step Operational Semantics}

We equip the parallel part of \lambdaAEff~with a small-step operational semantics that  
naturally extends the semantics of \lambdaAEff's sequential part from \autoref{sec:basic-calculus:semantics:computations}.
The semantics is defined using a reduction relation $P \reduces Q$, as given in \autoref{fig:processes}.

\begin{figure}[h]
  \parbox{\textwidth}{
  \centering
  \small
  \begin{minipage}[t]{0.45\textwidth}
  \centering
  \begin{align*}
  \intertext{\textbf{Individual computations}}
    \coopinfer{}{
      M \reduces N
    }{
      \tmrun M \reduces \tmrun N
    }
  \end{align*}
  \begin{align*}
    \intertext{\textbf{Signal hoisting}}
    \tmrun {(\tmopout{op}{V}{M})}  &\reduces \tmopout{op}{V}{\tmrun M}
    \\[1ex]
    \intertext{\textbf{Broadcasting}}
    \tmpar{\tmopout{op}{V}{P}}{Q} &\reduces \tmopout{op}{V}{\tmpar{P}{\tmopin{op}{V}{Q}}}
    \\
    \tmpar{P}{\tmopout{op}{V}{Q}} &\reduces \tmopout{op}{V}{\tmpar{\tmopin{op}{V}{P}}{Q}}
  \end{align*}
  \end{minipage}
  \qquad
  \begin{minipage}[t]{0.47\textwidth}
  \centering
  \begin{align*}
    \intertext{\textbf{Interrupt propagation}}
    \tmopin{op}{V}{\tmrun M} &\reduces \tmrun {(\tmopin{op}{V}{M})}
    \\
    \tmopin{op}{V}{\tmpar P Q} &\reduces \tmpar {\tmopin{op}{V}{P}} {\tmopin{op}{V}{Q}}
    \\
    \tmopin{op}{V}{\tmopout{op'}{W}{P}} &\reduces \tmopout{op'}{W}{\tmopin{op}{V}{P}}
  \end{align*}
  \begin{align*}
    \intertext{\quad\textbf{Evaluation context rule}}
    \quad
    \coopinfer{}{
      P \reduces Q
    }{
      \F[P] \reduces \F[Q]
    }
  \end{align*}
  \end{minipage}
  \begin{align*}
  \intertext{\textbf{where}}
  \text{$\F$}
  \bnfis& [~]
  \bnfor \tmpar \F Q \bnfor\! \tmpar P \F
  \bnfor \tmopout{op}{V}{\F}
  \bnfor \tmopin{op}{V}{\F}
  \end{align*}
  } 
  \caption{Small-Step Operational Semantics of Processes.}
  \label{fig:processes}
\end{figure}

\paragraph{Individual Computations}
This rule states that, as processes, individual computations evolve according to the small-step
operational semantics $M \reduces N$ we defined in \autoref{sec:basic-calculus:semantics:computations}.

\paragraph{Signal Hoisting}
This rule propagates signals out of individual computations.
Note that we only hoist those signals that have propagated to the outer boundary
of a computation.

\paragraph{Broadcasting}
These rules turn outward moving signals in one process into inward moving interrupts 
for the process parallel to it, while continuing to propagate the signals outwards to any 
further parallel processes. The latter ensures that the semantics is compositional.

\paragraph{Interrupt Propagation}
These three rules simply propagate interrupts inwards into individual computations, 
into all branches of parallel compositions, and past any issued signals.

\paragraph{Evaluation Contexts}
Analogously to the semantics of computations, the semantics of processes presented here also 
includes an evaluation context rule, which allows reductions under \emph{evaluation contexts} 
$\F$. Observe that compared to the evaluation contexts for computations, those for processes
are more standard, in the sense that they do not bind variables. 

\subsection{Type-and-Effect System}
\label{sec:basic-calculus:processes:type-and-effect-system}

Analogously to its sequential part, we also equip \lambdaAEff's parallel part with a type-and-effect system.

\paragraph{Types} The \emph{process types} are designed to match their parallel structure, and 
are given by
\[
  \text{$\tyC$, $\tyD$}
  \bnfis \tyrun X \o \i
  \bnfor \typar \tyC \tyD
\]
Namely, $\tyrun X \o \i$ is a type of an individual computation of type $\tycomp{X}{(\o,\i)}$, and $\typar \tyC \tyD$
is the type of the parallel composition of two processes that respectively have types $\tyC$ and $\tyD$.

\paragraph{Typing Judgements}
\emph{Well-typed processes} are characterised using the judgement
$\Gamma \vdash P : \tyC$. The typing rules are given in \autoref{fig:process-typing-rules}.
While our processes are not currently higher-order, we allow 
non-empty contexts $\Gamma$ to model using libraries and top-level function definitions.

\begin{figure}[h]
  \centering
  \small
  \begin{mathpar}
  \coopinfer{TyProc-Run}{
    \Gamma \types M : \tycomp{X}{(\o,\i)}
  }{
    \Gamma \types \tmrun{M} : \tyrun{X}{\o}{\i}
  }
  \and
  \coopinfer{TyProc-Par}{
    \Gamma \types P : \tyC \\
    \Gamma \types Q : \tyD
  }{
    \Gamma \types \tmpar{P}{Q} : \typar{\tyC}{\tyD}
  }
  \\
  \coopinfer{TyProc-Signal}{
    \op \in \mathsf{signals\text{-}of}{(\tyC)} \\\\
    \Gamma \types V : A_\op \\
    \Gamma \types P : \tyC 
  }{
    \Gamma \types \tmopout{op}{V}{P} : \tyC
  }
  \and
  \coopinfer{TyProc-Interrupt}{
    \Gamma \types V : A_\op \\
    \Gamma \types P : \tyC 
  }{
    \Gamma \types \tmopin{op}{V}{P} : \opincomp{op}{\tyC}
  }  
  \end{mathpar}
  \caption{Process Typing Rules.}
  \label{fig:process-typing-rules}
\end{figure}

\noindent
The rules \textsc{TyProc-Run} and \textsc{TyProc-Par} capture the earlier 
intuition about the types of processes matching their parallel structure. The rules 
\textsc{TyProc-Signal} and \textsc{TyProc-Interrupt} are similar to the corresponding 
computation typing rules from \autoref{fig:computation-typing-rules}.

The \emph{signal annotations} of a process type used in \textsc{TyProc-Signal} are calculated as
\[
\mathsf{signals\text{-}of}(\tyrun{X}{\o}{\i}) ~\defeq~ \o
\qquad\qquad
\mathsf{signals\text{-}of}(\typar{\tyC}{\tyD}) ~\defeq~ \mathsf{signals\text{-}of}(\tyC) \sqcup \mathsf{signals\text{-}of}(\tyD)
\]
and the \emph{action of interrupts} on process types extends the action on effect annotations as
\[
\opincomp{op}(\tyrun{X}{\o}{\i}) 
~\defeq~
X \att (\opincomp {op} {(\o , \i)})
\qquad\qquad
\opincomp{op}(\typar{\tyC}{\tyD}) 
~\defeq~
\typar{(\opincomp{op}{\tyC})}{(\opincomp{op}{\tyD})}
\]
by propagating the interrupt towards the types of individual computations. 

It is worth noting that \autoref{fig:process-typing-rules} does not include an analogue  
of the computation subtyping rule \textsc{TyComp-Subsume}. This choice is 
deliberate because as we shall see below, \emph{process types reduce}
in conjunction with the processes they are assigned to, and the outcome   
of process type reduction is generally neither a sub- nor supertype of the original type.

\subsection{Type Safety}
\label{sec:basic-calculus:type-safety:processes}

We conclude summarising the meta-theory of \lambdaAEff~by stating the type safety 
of its parallel part. Analogously to \autoref{sec:basic-calculus:type-safety}, 
we once again split type safety into separate \emph{progress} 
and \emph{preservation} results, and relegate their proofs to \autoref{sec:type-safety}.

We characterise the \emph{result forms} of processes 
by defining two judgements, $\ProcResult P$ and $\ParResult P$, 
and by using the judgement $\RunResult {\Psi} {M}$ from 
\autoref{sec:basic-calculus:type-safety}, as follows:
\begin{mathpar}
  \coopinfer{}{
    \ProcResult {P}
  }{
    \ProcResult {\tmopout {op} V P}
  }
  \qquad
  \coopinfer{}{
    \ParResult {P}
  }{
    \ProcResult {P}
  }
  \qquad
  \coopinfer{}{
    \RunResult {\emptyset} {M}
  }{
    \ParResult {\tmrun M}
  }
  \qquad
  \coopinfer{}{
    \ParResult P \\
    \ParResult Q
  }{
    \ParResult {\tmpar P Q}
  }
\end{mathpar}
These judgements express that a process $P$ is in a (top-level) 
result form $\ProcResult {P}$ when, considered as a tree, it has a shape in which 
\emph{all} signals are towards the root, parallel compositions are in 
the intermediate nodes, and individual computation results are at the leaves. 
Importantly, the computation results $\RunResult {\emptyset} {M}$ we use in this definition 
are those from which all signals have been propagated out of 
(as discussed in \autoref{sec:basic-calculus:type-safety}). 

Again, these result forms are operationally final, as captured by the next lemma.

\begin{lem}
\label{lem:results-are-final:processes}
Given a process $P$, such that $\ProcResult {P}$, then there is no $Q$ such that $P \reduces Q$.
\end{lem}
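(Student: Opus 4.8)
The plan is to prove \autoref{lem:results-are-final:processes} by structural induction on the derivation of $\ProcResult{P}$ (equivalently, on the structure of $P$ itself, restricted to result forms), mirroring the proof of \autoref{lem:results-are-final} for computations. The key observation is that every reduction rule for processes in \autoref{fig:processes} requires the redex to have a particular syntactic shape that is incompatible with being a process result form, so the induction amounts to checking each clause of the $\ProcResult{P}$ and $\ParResult{P}$ definitions against each reduction rule.

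First I would handle the base cases. If $\ProcResult{P}$ is derived by the rule for $\tmrun M$ with $\RunResult{\emptyset}{M}$, then the only applicable process reduction rules are the individual-computation rule (which requires $M \reduces N$) and signal hoisting (which requires $M$ to be of the form $\tmopout{op}{V}{M'}$). For the former, I invoke \autoref{lem:results-are-final} --- since $\RunResult{\emptyset}{M}$ implies $\CompResult{\emptyset}{M}$ by the $\mathsf{RunRes}$-to-$\mathsf{CompRes}$ rule, there is no such $N$. For the latter, I need the side fact that $\RunResult{\emptyset}{M}$ cannot hold when $M = \tmopout{op}{V}{M'}$; inspecting the inference rules for $\RunResult{\Psi}{M}$, a signal at the head is only admissible via the $\CompResult{\Psi}{M}$ embedding, but $\RunResult{\Psi}{M}$ has no rule producing a leading signal (signals live strictly outside the $\mathsf{RunRes}$ layer), so this case is vacuous. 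This is essentially the same bookkeeping that already underlies the soundness of the $\RunRes$/$\ProcRes$ stratification.

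For the inductive cases: if $\ProcResult{P}$ comes from $\ProcResult{\tmopout{op}{V}{P'}}$ with $\ProcResult{P'}$, the candidate reductions with a leading signal are the two broadcasting rules and the interrupt-propagation rule $\tmopin{op}{V}{\tmopout{op'}{W}{P}} \reduces \cdots$; none of these apply because broadcasting requires the signal to sit inside a parallel composition $\tmpar{\cdot}{\cdot}$, and interrupt propagation requires a leading interrupt, not a signal --- so the only remaining possibility is a reduction inside $P'$ via an evaluation context $\F = \tmopout{op}{V}{\F'}$, which the induction hypothesis rules out. If $\ProcResult{P}$ comes via $\ParResult{P}$, I case on the $\ParResult$ derivation: the $\tmrun M$ case with $\RunResult{\emptyset}{M}$ reduces to the base-case analysis above; and the $\tmpar{P_1}{P_2}$ case with $\ParResult{P_1}$ and $\ParResult{P_2}$ must be checked against broadcasting (which needs $P_1$ or $P_2$ to begin with an \emph{outgoing signal} at the top level) and interrupt propagation into a parallel composition (which needs a leading interrupt). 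Here the crucial sub-lemma is that $\ParResult{P_i}$ excludes both a leading signal and a leading interrupt on $P_i$: by definition $\ParResult$ is generated only by $\tmrun M$, $\tmpar{\cdot}{\cdot}$, and the $\ProcResult$ embedding, and $\ProcResult$ adds only leading signals --- but wait, that last point means I must be careful, since $\ParResult{\tmopout{op}{V}{Q}}$ \emph{can} hold via the $\ProcResult$ embedding. So in the $\tmpar{P_1}{P_2}$ case a broadcasting step \emph{could} in principle fire.

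Let me correct the previous paragraph: the $\tmpar{P_1}{P_2}$ subcase is exactly where the content of the lemma lives, and I expect it to be the main obstacle. Re-examining the $\ParResult{\tmpar{P}{Q}}$ rule: it requires $\ParResult P$ \emph{and} $\ParResult Q$, and $\ParResult$ permits leading signals (through $\ProcResult$). So broadcasting genuinely could apply --- which would contradict the lemma. The resolution must be that $\ParResult{\tmpar{P}{Q}}$ is \emph{not} reachable from $\ProcResult$; indeed $\ProcResult$ only has the two rules (leading signal, and $\ParResult$-embedding is the \emph{wrong direction}) --- looking again, the embedding goes $\ProcResult{P} \Rightarrow \ParResult{P}$, not the reverse, so a $\tmpar{\cdot}{\cdot}$ at the head of a $\ProcResult$ can only have arisen... it cannot: $\ProcResult$ is generated solely by the leading-signal rule and does not contain parallel compositions at all. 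Hence if $\ProcResult{P}$ and $P = \tmpar{P_1}{P_2}$, the derivation must route through $\ParResult$, i.e.\ via $\ParResult{\tmpar{P_1}{P_2}}$ with $\ParResult{P_i}$. Now for broadcasting to apply, say, $P_1 = \tmopout{op}{V}{P_1'}$; this requires $\ParResult{\tmopout{op}{V}{P_1'}}$, which (via $\ProcResult$) forces $\ProcResult{P_1'}$, which is fine --- so broadcasting \emph{does} reduce $P$. Therefore the lemma as I have read it would be false unless the intended reading of $\ProcResult$/$\ParResult$ forbids signals strictly below a parallel composition; I would reconcile this by checking the precise stratification (signals are allowed only \emph{above} all parallel nodes, i.e.\ the grammar of result forms is a block of leading signals, then a tree of parallel compositions, then computation results), concluding that $\ParResult{\tmopout{op}{V}{Q}}$ is in fact \emph{not} derivable --- the $\ProcResult$-to-$\ParResult$ embedding together with $\ProcResult$ having only the signal rule means $\ParResult$'s sole leading-signal derivations sit at the very outside, and a signal appearing strictly inside $\tmpar{\cdot}{\cdot}$ has no derivation. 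With that structural fact in hand, all broadcasting and interrupt-propagation redexes are excluded, the evaluation-context rule is dispatched by the induction hypothesis, and the lemma follows. The main obstacle, then, is precisely pinning down this stratification invariant of the result-form judgements and showing $\ParResult$ admits no leading signal except at top level; everything else is routine rule-by-rule inspection.
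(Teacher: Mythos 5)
Your overall strategy is the right one, and it is essentially the standard argument (the paper itself relegates this proof to the \pl{Agda} formalisation): induct on the result-form derivations, check each process reduction rule against the shapes that $\ProcResult{P}$, $\ParResult{P}$ and $\RunResult{\emptyset}{M}$ can have, use \autoref{lem:results-are-final} to dispose of the individual-computation rule, observe that $\RunResult{\emptyset}{M}$ never begins with a signal (so hoisting cannot fire) and that $\ParResult{P_i}$ never begins with a signal or an interrupt (so broadcasting and interrupt propagation cannot fire), and discharge the evaluation-context rule by the induction hypothesis --- which you should state for $\ParResult{\,\cdot\,}$ as well as $\ProcResult{\,\cdot\,}$ (or induct on derivation height), since the parallel case needs it for the components $P_1$, $P_2$.

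The genuine problem is how you argue the pivotal fact. You read the embedding rule in both directions at different points: the only embedding is $\ParResult{P} \Rightarrow \ProcResult{P}$; there is no rule from $\ProcResult{\,\cdot\,}$ to $\ParResult{\,\cdot\,}$. Hence $\ParResult{\,\cdot\,}$ is generated by exactly two rules, whose conclusions have the shapes $\tmrun M$ and $\tmpar{P}{Q}$, and the fact you need --- that $\ParResult{\tmopout{op}{V}{Q}}$ (and likewise $\ParResult{\tmopin{op}{V}{Q}}$) is not derivable --- follows by a one-line inversion. Your detour in which broadcasting ``does reduce $P$'' rests on the non-existent converse embedding, and so does your final ``reconciliation'', which again appeals to a ``$\ProcResult$-to-$\ParResult$ embedding'' and to $\ParResult$ having leading-signal derivations ``at the very outside'' (it has none; $\ProcResult$ is the judgement with leading signals). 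This matters because if such a converse rule existed the lemma would genuinely be false: $\tmpar{(\tmopout{op}{V}{\tmrun{(\tmreturn W)}})}{(\tmrun{(\tmreturn{W'})})}$ would then be a ``result'' that broadcast-reduces. So the proof goes through only once that paragraph is replaced by the direct inversion on the two $\ParResult$ rules; with the rule directions straightened out (similarly, at the $\tmrun M$ base case the reason hoisting is impossible is simply that no $\RunResult{\,\cdot\,}$ rule has a signal in its conclusion), the rest of your rule-by-rule case analysis is fine.
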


We are now ready to state the progress theorem for the parallel part of \lambdaAEff,
which applies to closed processes and takes the expected form:

\begin{thm}[Progress for processes]
  \label{thm:procprogress}
  Given a well-typed process $\types P : \tyC$, then either
  \begin{enumerate}[(a)]
    \item there exists a process $Q$, such that $P \reduces Q$, or
    \item the process $P$ is already in a (top-level) result form, i.e., we have $\ProcResult {P}$.
  \end{enumerate}
\end{thm}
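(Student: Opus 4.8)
The plan is to prove the statement by structural induction on the process $P$, using at each step an inversion on the typing rules of \autoref{fig:process-typing-rules} together with an inversion on the result-form judgements, and invoking \autoref{thm:progress} in the base case.

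In the case $P = \tmrun M$, rule \textsc{TyProc-Run} gives $\types M : \tycomp{X}{(\o,\i)}$, so \autoref{thm:progress} applies with the empty context: either $M \reduces N$, and then $\tmrun M \reduces \tmrun N$ by the \emph{individual computations} rule, or $\CompResult{\emptyset}{M}$ holds. In the latter case I invert the result-form derivation: if $M = \tmopout{op}{V}{M'}$, then \emph{signal hoisting} gives $\tmrun M \reduces \tmopout{op}{V}{\tmrun{M'}}$; otherwise the derivation must conclude $\CompResult{\emptyset}{M}$ from $\RunResult{\emptyset}{M}$, and then $\ParResult{\tmrun M}$, hence $\ProcResult{\tmrun M}$, so $P$ is a result form. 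In the case $P = \tmopout{op}{V}{P'}$, rule \textsc{TyProc-Signal} gives $\types P' : \tyC$, so by the induction hypothesis either $P'\reduces Q'$, whence $P \reduces \tmopout{op}{V}{Q'}$ by the evaluation context rule with $\F = \tmopout{op}{V}{[~]}$, or $\ProcResult{P'}$, whence $\ProcResult{\tmopout{op}{V}{P'}}$ directly. The case $P = \tmpar{P_1}{P_2}$ is analogous: if either $P_i$ reduces, then so does $P$ under the context $\tmpar{[~]}{P_2}$ or $\tmpar{P_1}{[~]}$; and if both $P_i$ are in result form, then either some $P_i$ is headed by an outgoing signal, in which case the matching \emph{broadcasting} rule fires, or neither is, in which case each $\ProcResult{P_i}$ must come from $\ParResult{P_i}$, so $\ParResult{\tmpar{P_1}{P_2}}$ and hence $\ProcResult{\tmpar{P_1}{P_2}}$ hold.

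The case $P = \tmopin{op}{V}{P'}$ is the one where $P$ always takes a step. Rule \textsc{TyProc-Interrupt} gives $\types P' : \tyC'$, and the induction hypothesis gives two possibilities. If $P' \reduces Q'$, then $P \reduces \tmopin{op}{V}{Q'}$ by the evaluation context rule with $\F = \tmopin{op}{V}{[~]}$. If instead $\ProcResult{P'}$, I case on the shape of $P'$ forced by the result-form rules — it must be $\tmrun M$, $\tmpar{P_1}{P_2}$, or $\tmopout{op'}{W}{P''}$ — and in each shape one of the three process-level \emph{interrupt propagation} rules applies. (There is no result-form rule producing a process headed by an incoming interrupt, so that shape of $P'$ only arises in the reducing branch.)

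The genuinely hard content is encapsulated in the appeal to \autoref{thm:progress}, whose proof is deferred to \autoref{sec:type-safety}; granting it, the process-level argument is a routine induction. The one point requiring care is the inversion of the result-form judgements $\CompResult{-}{-}$, $\RunResult{-}{-}$, $\ProcResult{-}$, and $\ParResult{-}$: one has to confirm that a process headed by an outgoing signal immediately under a $\tmrun$ (resp.\ as one side of a $\tmpar$) can always be hoisted (resp.\ broadcast), and that no result form is headed by an incoming interrupt, so that the interrupt case can never get stuck. I expect this bookkeeping, rather than any deep reasoning, to be the only real obstacle.
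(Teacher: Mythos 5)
Your proposal is correct and follows essentially the same route as the paper: induction on the process (equivalently, on its syntax-directed typing derivation), invoking the computation-level progress theorem in the $\tmrun M$ case and the induction hypothesis elsewhere, with the remaining work being exactly the bookkeeping you describe — inverting the result-form judgements and matching each non-result shape to a hoisting, broadcast, or interrupt-propagation rule. The paper's proof is just a terser statement of this same argument.
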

\noindent
The preservation theorem for processes that we state below is somewhat non-standard since term reductions 
also evolve effect annotations. In particular, the broadcast rule
\[
  \tmpar{\tmopout{op}{V}{P}}{Q} \reduces \tmopout{op}{V}{\tmpar{P}{\tmopin{op}{V}{Q}}}
\]
and its symmetric counterpart from \autoref{fig:processes} introduce new 
inward propagating interrupts in their right-hand sides that originally do not exist in their left-hand sides. As a result, 
compared to the types one assigns to the left-hand sides of these reduction rules, the types assigned to 
their right-hand sides will need to feature corresponding type-level actions of these interrupts.
We formalise this idea using a \emph{process type reduction} relation $\tyC \tyreduces \tyD$:
\[
  \coopinfer{}{
  }{
    \tyrun{X}{\o}{\i} \tyreduces \tyrun{X}{\o}{\i} 
  }
  \hspace{2pt}
  \coopinfer{}{
  }{
    X \att (\opincompp {ops} {(\o , \i)}) \tyreduces X \att (\opincompp {ops} {(\opincomp {op} {(\o , \i)})})
  }
  \hspace{2pt}
  \coopinfer{}{
    \tyC \tyreduces \tyC' \\
    \tyD \tyreduces \tyD'
  }{
    \typar{\tyC}{\tyD} \tyreduces \typar{\tyC'}{\tyD'}
  }
\]
where we write $\opincompp {ops} {(\o , \i)}$ for a recursively defined \emph{action of a list of interrupts} on $(\o , \i)$:
\[
\opincompp {[]} {(\o , \i)} ~\defeq~ (\o , \i)
\qquad
\opincompp {(\op :: \opsym{ops})} {(\o , \i)} ~\defeq~ \opincomp {op} {(\opincompp {ops} (\o , \i))}
\]

Intuitively, $\tyC \tyreduces \tyD$ describes how process types reduce by being acted upon by 
freshly arriving interrupts.
It is important that we introduce interrupts under an arbitrary enveloping sequence of interrupt actions, 
and not simply as $X \att {(\o , \i)} \tyreduces X \att (\opincomp {op} {(\o , \i)})$,
because we want to ensure that these actions preserve type reductions (see~\srefcase{Lemma}{lem:type-reduction}{3}),
which in turn ensures type preservation of reductions under arbitrary evaluation contexts~$\F$.

Using the process type reduction relation, we state the preservation theorem for the parallel part of 
\lambdaAEff~as follows:

\begin{thm}[Preservation for processes]
  \label{thm:procpreservation}
  Given a well-typed process $\Gamma \types P : \tyC$, such that $P$ can reduce as 
  $P \reduces Q$, then there exists a process type $\tyD$, such 
  that the process type $\tyC$ can reduce as $\tyC \tyreduces \tyD$, 
  and we can type the resulting process as $\Gamma \types Q : \tyD$.
\end{thm}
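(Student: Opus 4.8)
The plan is to argue by induction on the derivation of $P \reduces Q$, doing a case analysis on the last reduction rule, appealing to \autoref{thm:preservation} for the rule that steps an individual computation, and using the induction hypothesis for the evaluation-context rule. Before the cases, I would record a handful of structural facts about the process type reduction relation: (i) $\tyreduces$ is reflexive, proved by structural induction on process types using the rule $\tyrun{X}{\o}{\i} \tyreduces \tyrun{X}{\o}{\i}$ and the congruence rule for $\typar{\tyC}{\tyD}$; (ii) $\tyD' \tyreduces \opincomp{op}{\tyD'}$ for every process type $\tyD'$, which on run types is the second type reduction rule instantiated with the empty list of interrupts, and on parallel types follows by congruence since $\opincomp{op}{-}$ distributes over $\typar{}{}$; (iii) $\tyreduces$ is closed under interrupt actions, i.e.\ $\tyC \tyreduces \tyD$ implies $\opincomp{op}{\tyC} \tyreduces \opincomp{op}{\tyD}$, which is \srefcase{Lemma}{lem:type-reduction}{3}; and (iv) $\mathsf{signals\text{-}of}$ is non-decreasing along $\tyreduces$ and along interrupt actions, which reduces to the observation that $\opincomp{op}{(\o,\i)}$ never shrinks the $O$-component of an effect annotation. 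I would also use the evident generation lemmas for the (syntax-directed) process typing judgement and for computation signal terms, the latter accounting for the intervening \textsc{TyComp-Subsume} steps.

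For stepping an individual computation, $P = \tmrun{M} \reduces \tmrun{N} = Q$: inverting \textsc{TyProc-Run} gives $\Gamma \types M : \tycomp{X}{(\o,\i)}$ with $\tyC = \tyrun{X}{\o}{\i}$, and \autoref{thm:preservation} yields $\Gamma \types N : \tycomp{X}{(\o,\i)}$, so $\Gamma \types Q : \tyC$ and we take $\tyD = \tyC$ (by reflexivity of $\tyreduces$). Signal hoisting is analogous: invert \textsc{TyProc-Run}, then the generation lemma for signals, then re-assemble with \textsc{TyProc-Run} and \textsc{TyProc-Signal}, again with $\tyD = \tyC$. The three interrupt-propagation rules (into $\tmrun{M}$, into $\tmpar{P}{Q}$, past a signal) are likewise re-derivations: inverting \textsc{TyProc-Interrupt} exposes $\tyC = \opincomp{op}{\tyC'}$; using that $\opincomp{op}{-}$ commutes definitionally with $\typar{}{}$ and with $X \att (-)$, and — for the ``past a signal'' rule — that $\op' \in \mathsf{signals\text{-}of}(\tyC') \subseteq \mathsf{signals\text{-}of}(\opincomp{op}{\tyC'})$, we retype $Q$ at the same $\tyC$ and take $\tyD = \tyC$. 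For broadcasting, $\tmpar{\tmopout{op}{V}{P_1}}{Q_1} \reduces \tmopout{op}{V}{\tmpar{P_1}{\tmopin{op}{V}{Q_1}}}$: inversion gives $\tyC = \typar{\tyC_1}{\tyD_1}$ with $\op \in \mathsf{signals\text{-}of}(\tyC_1)$, $\Gamma \types V : A_\op$, $\Gamma \types P_1 : \tyC_1$, $\Gamma \types Q_1 : \tyD_1$; then \textsc{TyProc-Interrupt} gives $\Gamma \types \tmopin{op}{V}{Q_1} : \opincomp{op}{\tyD_1}$, \textsc{TyProc-Par} gives type $\typar{\tyC_1}{(\opincomp{op}{\tyD_1})}$, and since $\op \in \mathsf{signals\text{-}of}(\tyC_1) \subseteq \mathsf{signals\text{-}of}(\typar{\tyC_1}{(\opincomp{op}{\tyD_1})})$, \textsc{TyProc-Signal} re-applies; we take $\tyD = \typar{\tyC_1}{(\opincomp{op}{\tyD_1})}$ and get $\tyC \tyreduces \tyD$ from the congruence rule together with $\tyC_1 \tyreduces \tyC_1$ and $\tyD_1 \tyreduces \opincomp{op}{\tyD_1}$. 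The symmetric broadcasting rule is handled dually.

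For the evaluation-context rule $\F[P_1] \reduces \F[Q_1]$ I would do an inner induction on $\F$. The hole case is immediate. For $\typar{\F'}{Q'}$ (and symmetrically $\typar{P'}{\F'}$), the IH gives $\tyC_1 \tyreduces \tyC_1'$ with $\Gamma \types \F'[Q_1] : \tyC_1'$, and the congruence rule plus reflexivity on the unchanged component yield $\tyD = \typar{\tyC_1'}{\tyD_1}$. For $\tmopout{op}{V}{\F'}$, the IH gives $\tyC \tyreduces \tyC'$ with $\Gamma \types \F'[Q_1] : \tyC'$ (note $\tmopout{op}{V}{-}$ preserves the type), and since $\op \in \mathsf{signals\text{-}of}(\tyC) \subseteq \mathsf{signals\text{-}of}(\tyC')$ we re-apply \textsc{TyProc-Signal}. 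For $\tmopin{op}{V}{\F'}$, inversion gives $\tyC = \opincomp{op}{\tyC''}$, the IH gives $\tyC'' \tyreduces \tyC'''$ with $\Gamma \types \F'[Q_1] : \tyC'''$, and \srefcase{Lemma}{lem:type-reduction}{3} gives $\opincomp{op}{\tyC''} \tyreduces \opincomp{op}{\tyC'''}$, so $\tyD = \opincomp{op}{\tyC'''}$.

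\textbf{Main obstacle.}
The crux is not any single case but assembling the right toolkit for $\tyreduces$: reflexivity, parallel congruence, closure under interrupt actions, the inequality $\tyD' \tyreduces \opincomp{op}{\tyD'}$, and monotonicity of $\mathsf{signals\text{-}of}$ along $\tyreduces$. All of these ultimately rest on a careful description of how the interrupt action $\opincomp{op}{(\o,\i)}$ transforms effect annotations — in particular that it may \emph{enlarge} the $O$-component by hoisting signals out of the nested $\i$-annotation while zeroing $\i$ at $\op$, so that the $I$-component is not monotone even though $\mathsf{signals\text{-}of}$ is. Establishing $\mathsf{signals\text{-}of}$-monotonicity under the arbitrary nested interrupt actions appearing in the second type reduction rule is the only step needing genuine care; everything else is routine inversion and re-derivation.
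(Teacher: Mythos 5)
Your proposal is correct and follows essentially the same route as the paper's proof: induction on the reduction derivation, taking $\tyD = \tyC$ via reflexivity for most cases, introducing the corresponding interrupt action on the receiving component for the broadcast rules (via the congruence rule and the fact $\tyD_1 \tyreduces \opincomp{op}{\tyD_1}$), and discharging the signal side-conditions when commuting interrupts past signals using monotonicity of $\mathsf{signals\text{-}of}$ under interrupt actions, with the list-indexed form of the type reduction rule supplying closure of $\tyreduces$ under enveloping actions. The only (inessential) difference is that you handle the evaluation-context rule by an inner induction on $\F$, whereas the paper factors the same argument through an auxiliary typing judgement for process evaluation contexts and \sref{Lemma}{lem:hoisting-and-evaluation-context-types}.
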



\section{Higher-Order Extensions}
\label{sec:higher-order-extensions}

While \lambdaAEff, as introduced in our original work \cite{Ahman:POPL} and summarised in the previous
two sections, can be used to naturally capture a wide range of asynchronous examples, 
it also has many notable \emph{limitations}: interrupt handlers disappear immediately after being triggered by a 
matching interrupt, payloads of signals and interrupts have to be ground values, and it is not possible to
dynamically create new parallel processes. In this section we introduce and discuss a number of 
\emph{higher-order extensions} of \lambdaAEff~that resolve these limitations. Below we discuss each of these 
extensions individually, with the full extended calculus given in~\autoref{sec:appendix}.
We highlight the parts of \lambdaAEff~that change in this section's extensions with a grey background.

\subsection{Reinstallable Interrupt Handlers}
\label{sec:extensions:reinstallable-interrupt-handlers}

We recall from the reduction rules in \autoref{fig:small-step-semantics-of-computations} that
once an interrupt reaches a matching interrupt handler, the handling computation is executed
and the handler is removed. However, the example from \autoref{sec:overview:runningexample}
shows that we often want to keep the handler around, e.g., to handle further interrupts
of the same kind. One option to achieve this is through general recursion~\cite{Ahman:POPL}.
Unfortunately, this results in programmers defining many auxiliary functions, obfuscating the resulting code. 
Furthermore, the heavy reliance on general recursion makes it difficult to justify leaving it out of the core 
calculus, despite it being an orthogonal concern to many programming abstractions and, in particular, to how
we model asynchrony in \lambdaAEff~based on algebraic effects---this is of course not to say that a 
higher-level language based on \lambdaAEff~could not include general recursion.

Instead, in this paper we propose extending \lambdaAEff's interrupt handlers with the ability to 
\emph{reinstall} themselves, by extending the syntax for interrupt handlers given in~\autoref{sec:basic-calculus:computations}
\[
    \tmwith{op}{x}{M}{p}{N}
\]
with an additional variable $r$ bound to a function through which $M$ can reinstall the handler:
\[
    \tmwithre{op}{x}{\highlightgray{\vphantom{/}r}}{M}{p}{N}
\]

In contrast to the continuation/resumption variables of ordinary effect handlers, here the variable $r$ does not refer 
to the continuation of the interrupt at the time triggering, but instead to the act of reinstalling the given interrupt handler.
Concretely, the triggering of reinstallable interrupt handlers is captured operationally with the following 
reduction rule:
\begin{multline*}
  \tmopin{op}{V}{\tmwithre{op}{x}{r}{M}{p}{N}} \\
  \reduces \tmlet{p}{M[V/x,\highlightgray{(\tmfun{}{\tmwithre{op}{x}{r}{M}{p}{\tmreturn p}})/r}]}{\tmopin{op}{V}{N}}
\end{multline*}
All other reduction rules remain the same, except that interrupt handlers are extended with the additional variables $r$.
Server-like processes can then be written more concisely, as
\[
\tmwithre{request}{x}{r}{\text{handle the 
$\opsym{request}$}; \text{issue a $\opsym{response}$ signal}; \highlightgray{r\, ()}}{p}{\tmreturn ()}
\]

In light of the similarity between interrupt propagation and deep effect handling, as discussed in 
\autoref{sec:basic-calculus:semantics:computations}, this reinstalling behaviour can be 
understood as an effect handler re-calling (in its corresponding operation case) the algebraic 
operation that it is handling, such as, an exception handler handling an exception and then re-raising it at the end for other, external exception handlers.

The typing rule for reinstallable interrupt handlers is also quite interesting:
\[
    \coopinfer{TyComp-RePromise}{
      ({\o'} , {\i'}) \mathrel{\highlightgray{\order{O \times I}}} \i\, (\op) \\
        \Gamma, x \of A_\op, \highlightgray{r \of \tyfun{\tyunit}{\tycomp{\typromise X}{\big(\emptyset, \{ \op \mapsto ({\o'} , {\i'}) \}\big)}}} \types M : \tycomp{\typromise X}{(\o',\i')} \\
        \Gamma, p \of \typromise X \types N : \tycomp{Y}{(\o,\i)} 
      }{
        \Gamma \types \tmwithre{op}{x}{r}{M}{p}{N} : \tycomp{Y}{(\o,\i)}
      }        
\]
First, observe that the context in which we type the interrupt handler code $M$ is now extended 
with the variable $r$, which denotes a function triggered by application to the unit value 
$\tmunit \of \tyunit$. The function does not emit any signals nor install any handlers apart from the one
in question for $\opsym{op}$, 
therefore its effect annotation is $\big(\emptyset, \{ \op \mapsto ({\o'} , {\i'}) \}\big)$, as expected.

Second, we have relaxed the requirement $({\o'} , {\i'}) = \i\, (\op)$. We now only require the effect 
annotation $({\o'} , {\i'})$ of the handler code $M$ to be contained in what the effect annotation $(\o, \i)$ of the continuation 
$N$, and thus of the entire composite computation $\tmwithre{op}{x}{r}{M}{p}{N}$, assigns to $\op$, i.e.,  $({\o'} , {\i'}) \mathrel{\order{O \times I}} \i\, (\op)$. The reason lies in the 
proof of type preservation (see \autoref{thm:preservation:extended}) when propagating unhandled 
interrupts past handlers:
\[
  \tmopin{op'}{V}{\tmwithre{op}{x}{r}{M}{p}{N}}
  \reduces \tmwithre{op}{x}{r}{M}{p}{\tmopin{op'}{V}{N}}
\]
On the left-hand side of this reduction rule, the effect annotation of the continuation of $\tmkw{promise}$ is $(\o, \i)$,
while on the right-hand side it is $\opincomp {op'} ({\o} , {\i})$. This mismatch did not pose a problem 
earlier~\cite{Ahman:POPL} as subtyping allowed us to increase the effect
annotation of $M$ to $\pi_2\, (\opincomp {op'} ({\o} , {\i})) (\op)$. Now on the other hand, as $M$'s effect annotation 
is also present in the type of $r$, it appears both co- and contravariantly, and is thus not safe to increase.
However, the tight coupling of the effect annotations is not really essential, as for safety 
it is enough that the annotation of the continuation simply encompasses any effects that $M$ may trigger.

As noted in \autoref{sec:basic-calculus:effect-annotations}, assigning types to reinstallable handlers
requires us to consider least fixed points of continuous maps on the $\omega$-cpo $(I,\order I)$ of interrupt handler annotations.
As an example, we recall the following fragment of the server code from \autoref{sec:overview:runningexample:server}:
\begin{aeffbox}
\begin{lstlisting}
...
promise (batchSizeReq () r |->
    send batchSizeResp batchSize;
    r ()
)
...
\end{lstlisting}
\end{aeffbox}
Here, the interrupt handler for $\opsym{batchSizeReq}$ reinstalls itself immediately 
after issuing a $\opsym{batchSizeResp}$ signal. 
Due to its recursive definition, 
it should not be surprising that this handler's effect annotation is given recursively, in 
particular, if we want to give it a more precise type-level specification than one which simply states that any effect is possible.

To that end, we assign this interrupt handler the effect annotation $(\emptyset, \i_{\text{b}})$, where 
\[
\i_{\text{b}} = \big\{~ \opsym{batchSizeReq} \mapsto (\{\opsym{batchSizeResp}\} , \{~ \opsym{batchSizeReq} \mapsto (\{\opsym{batchSizeResp}\} , ~\ldots~) ~\}) ~\big\}
\]
More precisely, $\i_{\text{b}}$ is the least fixed point of the following continuous map on $I$:
\[
    \i \mapsto \{~ \opsym{batchSizeReq} \mapsto (\{\opsym{batchSizeResp}\} , \i) ~\} : I \to I
\]
This least fixed point exists because $I$ is an $\omega$-cpo and the map is continuous
(see \autoref{sec:basic-calculus:effect-annotations}).

Returning to the example above, the effect annotation $(\emptyset, \i_{\text{b}})$  
specifies that the interrupt handler does not issue any signals at the top level,
and that every $\opsym{batchSizeReq}$
interrupt causes a $\opsym{batchSizeResp}$ signal to be issued and the interrupt handler 
to be reinstalled.

The examples of reinstallable interrupt handlers that we discuss in \autoref{sec:applications}
have their effect annotations assigned analogously, also as least fixed points of 
continuous maps on $I$.

\subsection{Stateful Reinstallable Interrupt Handlers}
\label{sec:extensions:stateful-reinstallable-interrupt-handlers}

When working with reinstallable interrupt handlers, it is often useful, and sometimes even 
necessary, to be able to pass data between subsequent reinstalls of a handler. For 
example, in \autoref{sec:applications:runners} we use reinstallable interrupt handlers to 
implement a pseudorandom number generator in which it is crucial to be able
to pass and update a seed value between reinstalls of an interrupt handler. As
another example, consider wanting to react to only the first $n$ interrupts of
a particular kind---here it is useful if we could pass and decrease a counter between 
handler reinstalls.
 
In our original work~\cite{Ahman:POPL}, such state-passing behaviour was achieved by
passing the relevant state values as arguments to the general-recursive functions that implemented
the reinstalling of interrupt handlers. However, with reinstallability of interrupt
handlers being now a primitive feature of \lambdaAEff, we want a similarly primitive approach to 
managing state.

To this end, we extend the reinstallable interrupt handlers of last section with \emph{state}:
\[
    \tmwithrest[\highlightgray{S}]{op}{x}{r}{\highlightgray{\vphantom{/}s}}{M}{\highlightgray{V}}{p}{N}
\]
Here $S$ denotes the type of state associated with a particular interrupt handler 
($S$ can be an arbitrary value type), $s$ is a variable bound in the interrupt handler code $M$, 
giving it access to the handler's state at the time of triggering, and $V$ is the value of state
that is used at the next triggering of the interrupt handler. The state can be updated between 
subsequent reinstalls of the interrupt handler by calling the reinstallation function $r$ with the 
new state value---$r$'s domain is now $S$ instead of $\tyunit$. This behaviour is summarised by the reduction rule
\[
\tmopin{op}{V}{\tmwithrest[S\!]{op}{x}{r}{s}{\!M}{W}{p}{N}} \reduces \tmlet{p}{M\big[V/x , R/r , \highlightgray{W/s}\big]}{\tmopin{op}{V}{N}}
\]
where $R$ denotes a function that reinstalls the interrupt handler
with an updated state value:
\[
R ~\defeq~ \tmfun{\highlightgray{\vphantom{/}s' \of S}}{\tmwithrest[S]{op}{x}{r}{s}{M}{\highlightgray{\vphantom{/}{s'}}}{p}{\tmreturn p}}
\]
Needing to know $S$ for the function abstraction in $R$ necessitates the type annotation on 
this variant of interrupt handlers.
All other reduction rules remain unchanged, except that interrupt handlers now include additional variables, type annotations, 
and values for states.

The typing rule for stateful reinstallable interrupt handlers is a straightforward
extension of the typing rule for reinstallable interrupt handlers we presented in the previous section:
\[
  \coopinfer{TyComp-ReStPromise}{
    ({\o'} , {\i'}) \mathrel{\order{O \times I}} \i\, (\op) \\
    \Gamma, x \of A_\op, r \of \tyfun{\highlightgray{\vphantom{/}S}}{\tycomp{\typromise X}{\big(\emptyset, \{ \op \mapsto ({\o'} , {\i'}) \}\big)}}, \highlightgray{\vphantom{/}s \of S} \types M : \tycomp{\typromise X}{(\o',\i')} \\
    \highlightgray{\vphantom{/}\Gamma \types V : S} \\
    \Gamma, p \of \typromise X \types N : \tycomp{Y}{(\o,\i)} 
  }{
    \Gamma \types \tmwithrest[S]{op}{x}{r}{s}{M}{V}{p}{N} : \tycomp{Y}{(\o,\i)}
  }        
\]
Observe that as noted above, the domain of $r$ is no longer fixed to the unit type $\tyunit$ 
but it can now be any value type $S$. If we pick $S \defeq \tyunit$, we recover the 
stateless reinstallable interrupt handlers of the previous section, with all the highlighted
parts trivialising. Therefore, as a convention, when working with 
reinstallable interrupt handlers with trivial state, we use the syntax introduced in the 
previous section, i.e., $\tmwithre{op}{x}{r}{M}{p}{N}$. Further, in examples 
we often omit the state type annotation $S$ when it is clear from the context.

We now illustrate the use of stateful reinstallable interrupt handlers via the example
mentioned earlier, of a program reacting to only the first $n$ interrupts of a particular kind:
\begin{aeffbox}
\begin{lstlisting}
promise (op x r m |-> 
    if (m > 0) then 
        comp;
        r (m - 1)
    else
        return <<()>>
) @ n
\end{lstlisting}
\end{aeffbox}
This interrupt handler carries a natural number counter as its state, which it  
uses to determine whether the handler computation \ls{comp} should be run.
The counter is originally set to the value \ls{n} and then decremented each time
the interrupt handler is reinstalled (using \ls{r (m - 1)}). When the counter reaches $0$, 
\ls{comp} is not run and the interrupt handler is no longer reinstalled.
More examples of stateful reinstallable interrupt handlers can be 
found in \autoref{sec:overview:runningexample} and \ref{sec:applications}.

Finally, it is also worth noting that while the syntax of our stateful reinstallable 
interrupt handlers is somewhat similar to parameterised effect handlers~\cite{Plotkin:HandlingEffects}, 
there is a subtle but important difference. Namely, as discussed in \autoref{sec:basic-calculus:semantics:computations}, 
interrupt handlers behave like algebraic operation calls and it is instead the interrupts that behave like
effect handling. Thus, in light of the discussion in \autoref{sec:extensions:reinstallable-interrupt-handlers}
about what reinstalling means, the stateful nature of our reinstallable interrupt handlers corresponds to changing a
(state) parameter of an algebraic operation when it is re-called by the 
corresponding effect handler, and not to including and passing state values in effect handlers. In particular,
any interrupt and its payload is passed to the continuation of an interrupt
handler unchanged irrespectively of any state changes that happen when 
this interrupt handler is triggered and (possibly) reinstalled.

\subsection{Fitch-Style Modal Types}
\label{sec:extensions:fitch-style-modal-types}

The next limitation of \lambdaAEff~we address is the restriction of signal and interrupt 
payloads to ground types, i.e., finite sums and products of base types. The reason 
behind this restriction lies in the propagation of signals past interrupt handlers:
\[
  \tmwith{op}{x}{M}{p}{\tmopout{op'}{V}{N}}
  \reduces \tmopout{op'}{V}{\tmwith{op}{x}{M}{p}{N}}
\]
Here, we want to ensure that the value $V$ on the left-hand side does not refer to the
promise-typed variable $p$, otherwise the right-hand side would be ill-scoped.
Note that the issue remains exactly the same when considering reinstallable or stateful interrupt handlers.

For example, consider a signal/interrupt $\op : \tyint$ carrying an
integer payload, and $\op' : \typromise{\tyint}$ carrying an integer-typed promise as a payload.
Then, the computation
\[
  \tmwith{op}{x}{\tmreturn{\tmpromise{1}}}{p}{\tmopout{op'}{p}{\tmreturn{2}}}
\]
which simply sends the promise-typed variable $p : \typromise{\tyint}$ back in a signal payload,
would be well-typed if no restrictions were put on signal and interrupt signatures. 
However, if this were allowed, then by the above reduction rule, this computation 
would step to 
\[
  \tmopout{op'}{p}{\tmwith{op}{x}{\tmreturn{\tmpromise{1}}}{p}{\tmreturn{2}}}
\]
where now the payload $p$ has escaped the binding scope of the interrupt handler, 
violating scope and type safety.

We run into similar problems when considering examples where payloads are higher-order, e.g., 
when wanting to send functions in payloads for remote execution in other processes. For example,
take $\op$ as before and consider a function-carrying signal/interrupt $\op' : (\tyunit \to \tyint)$,
where for brevity, we omit the effect annotation in the function type.
Then, the computation
\begin{align*}
  &\tmwith{op}{x}{\tmreturn{\tmpromise{1}}}{p}{\\&\tmlet{f}{\tmreturn{(\tmfunano{\tmunit}{\tmawaitgen{p}})}}\\&\tmopout{op'}{f}{\tmreturn{2}}}
\end{align*}
would again be well-typed if no restrictions were put on signal and interrupt
signatures. At the same, it would first $\beta$-reduce the sequential composition to
\[
  \tmwith{op}{x}{\tmreturn{\tmpromise{1}}}{p}{\tmopout{op'}{(\tmfunano{\tmunit}{\tmawaitgen{p}})}{\tmreturn{2}}}
\]
and then step to the computation
\[
  \tmopout{op'}{(\tmfunano{\tmunit}{\tmawaitgen{p}})}{\tmwith{op}{x}{\tmreturn{\tmpromise{1}}}{p}{\tmreturn{2}}}
\]
which is again ill-typed due to $p$ escaping the interrupt handler's binding scope.

Restricting $V$ to ground values is a simple way that ensures
type-safety~\cite{Ahman:POPL}, but as a result, e.g., one can only send the
arguments needed for the execution of remote function calls but not the
functions themselves. When relaxing the payload restrictions, the type-system
needs to track not only the use of promise-typed variables bound by interrupt
handlers, but as $f$ in the above example shows, also the use of any other variables
that may depend on them. An elegant way of achieving this is a \emph{Fitch-style
modal type system}~\cite{Clouston:FitchStyle}, where the typing context $\Gamma$
can contain (lock) tokens $\ctxlock$, which delimit the extent to which
variables are allowed to be used in terms. In particular, terms can refer only
to variables introduced after the last $\ctxlock$, and to a restricted subset of variables
introduced before it.

Specifically, we extend the grammar of typing contexts to
\begin{align*}
\Gamma
\bnfis& \cdot \bnfor \Gamma, x \of X \bnfor \highlightgray{\,\Gamma, \ctxlock\,}
\end{align*}
and change the typing rule for variables to
\[
   \coopinfer{TyVal-Var}{
    \highlightgray{X~\text{is mobile} \quad \text{or} \quad \ctxlock \not\in \Gamma'}
  }{
    \Gamma, x \of X, \Gamma' \types x : X
  }
\]
This means that we can refer only to variables introduced after the last $\ctxlock$,
or to variables with \emph{mobile types} $A$, defined as an extension of ground types with a \emph{modal (box) type}~$\tybox X$:
\[
A, B
\bnfis \tybase \,\bnfor\! \tyunit \,\bnfor\! \tyempty \,\bnfor\! \typrod{A}{B} \,\bnfor\! \tysum{A}{B} \,\bnfor\! \highlightgray{\tybox{X}}
\]
As with ground types, every mobile type is automatically also a value type, 
including $\tybox{X}$.

Note that $\tybox X$ is a mobile type even if $X$ is not. Equally importantly, neither promise nor function types are mobile 
on their own. When combined with how the context is delimited using $\ctxlock$ in the typing rule \textsc{TyVal-Box} 
given below, these properties of mobile types ensure 
that signal payloads, which are typed with mobile types, cannot use promise-typed variables bound by enveloping 
interrupt handlers. Consequently, it is safe to propagate signals with mobile payloads past any enveloping interrupt handlers 
and eventually to other processes. In its essence, this is similar to the use of modal types in distributed \cite{Murphy:PhDThesis} 
and reactive programming \cite{Krishnaswami:HOFRP,Bahr:RATT}
to classify values that can travel through space and time.

The type $\tybox X$ has a value constructor and a corresponding computation for elimination:
\begin{align*}
  V, W
  \bnfis& \cdots
  \bnfor \highlightgray{\tmbox V}
  \\
  M, N
  \bnfis& \cdots
  \bnfor \highlightgray{\tmunbox{V}{x}{M}}
\end{align*}
with the evident reduction rule given by
\[
  \tmunbox{\tmbox V}{x}{M} \reduces M[V/x]
\]
and with no associated evaluation contexts. 
More importantly, it is the typing rules that ensure one is allowed to box a value only when all the variables 
used in it have mobile types:
\[
    \coopinfer{TyVal-Box}{
    \Gamma, \ctxlock \types V : X
  }{
    \Gamma \types \tmbox V : \tybox X
  }
  \qquad
  \coopinfer{TyComp-Unbox}{
     \Gamma \types V : \tybox X \\
     \Gamma, x \of X \types M : \tycomp{Y}{(\o, \i)}
   }{
     \Gamma \types \tmunbox{V}{x}{M} : \tycomp{Y}{(\o, \i)}
   }
\]
Since this prevents us from constructing boxed values that would refer to a promise-typed variable,
we can safely extend payloads from ground to mobile types. Crucially however, when
constructing a (payload) value of type $\tybox{\tyfun X {\tycomp{Y}{(\o, \i)}}}$,
the boxed function can itself install additional interrupt handlers, it just cannot
refer to the results of any enveloping ones.

\subsection{Dynamic Process Creation}
\label{sec:extensions:dynamic-process-creation}

It turns out that the same Fitch-style modal typing mechanism can be reused to 
extend \lambdaAEff's computations also with \emph{dynamic process creation}:
\[
 M, N
  \bnfis \cdots
  \bnfor \highlightgray{\tmspawn{M}{N}}
\]
Here, $M$ is the new computation to be spawned and $N$ is the continuation of the existing program. 
Operationally, spawned computations propagate out of subcomputations as follows:
\begin{align*}
  \tmlet{x}{(\tmspawn{M_1}{M_2})}{N} &\reduces \tmspawn{M_1}{\tmlet{x}{M_2}{N}} \\[1ex]
  \tmwithrest[S]{op}{x}{r}{s}{M}{V}{p}{\tmspawn{N_1}{N_2}} &\reduces \\[-0.5ex]
      \tmspawn{N_1}{\tmwithrest[S]{op}{&x}{r}{s}{M}{V}{p}{N_2}} \\[1ex]
  \tmopin{op}{V}{\tmspawn{M}{N}} &\reduces \tmspawn{M}{\tmopin{op}{V}{N}}
\end{align*}
This allows the newly spawned computation to reach the top-level of the program, where it 
becomes a new parallel process, as expressed by the following reduction rule for processes:
\[
  \tmrun{(\tmspawn{M}{N})} \reduces \tmpar{\tmrun{M}}{\tmrun{N}}
\]

Analogously to signals, the natural semantics of $\tmkw{spawn}$ is non-blocking. Consequently, 
we also include it in the definition of evaluation contexts of the sequential part of \lambdaAEff:
\[
\E \bnfis \cdots \bnfor \highlightgray{\tmspawn{M}{\E}}
\]
The addition of $\tmkw{spawn}$ also requires us to extend the result forms of computations with
\[
  \coopinfer{}{
    \CompResult {\Psi} {N}
  }{
    \CompResult {\Psi} {\tmspawn{M}{N}}
  }
\]

Finally, the typing rule for $\tmkw{spawn}$ is defined as follows:
\[
   \coopinfer{TyComp-Spawn}{
     \Gamma, \ctxlock \types M : \tycomp{X}{(\o, \i)} \\
     \Gamma \types N : \tycomp{Y}{(\o', \i')}
   }{
     \Gamma \types \tmspawn{M}{N} : \tycomp{Y}{(\o', \i')}
   }
\]
Here, we first use Fitch-style modal typing to ensure that $M$ cannot refer to
promise-typed variables bound by any enveloping interrupt handlers, making it
safe to propagate it outwards, past them. 
This contrasts with other traditional concurrent/parallel languages, such as 
CML~\cite{Reppy:CML}, where no modal typing is needed because spawned processes
do not need to be (operationally) propagated past any binding constructs local to
individual processes. 

What is perhaps even more surprising is that the type of the whole computation depends only
on the type of the continuation $N$, and not on the type of the spawned computation $M$.
This is because spawning $M$ impacts only the execution of enveloping processes 
rather than of $N$ itself. In that sense, one can see the spawning of $M$ as a side-effect of $N$,
and one possibility would be to extend effect annotations to a form $(\o, \i, \s)$, where
\[
  \s = \{ \tycomp{X_1}{(\o_1, \i_1, \s_1)}, \dots, \tycomp{X_n}{(\o_n, \i_n, \s_n)} \}
\]
tracks the types and effects of spawned processes (including any additional processes $\s_i$ they may further spawn).
However, this makes the type system significantly more complicated and brings few additional assurances.
Indeed, at the process level, where spawned processes begin executing, the effect information is already very coarse since 
we need to account for actions of incoming interrupts.
For that reason, we opt for a simpler, yet still type-sound solution (see \autoref{thm:procpreservation:extended}), 
and instead extend the process type reduction relation with additional rules that allow spontaneously 
adding an arbitrary process type in parallel:
\[
  \coopinfer{}{
  }{
    X \att {(\o , \i)} \tyreduces
    \typar{(X \att {(\o , \i)})}{(Y \att {(\o' , \i')})}
  }
  \qquad
  \coopinfer{}{
  }{
    Y \att {(\o' , \i')} \tyreduces
    \typar{(X \att {(\o , \i)})}{(Y \att {(\o' , \i')})}
  }
\]

\subsection{Type Safety}
\label{sec:type-safety}

With all the higher-order extensions in place, we now prove type safety
for the full, final version of \lambdaAEff---first for computations and then for parallel 
processes. 

While for brevity we do not repeat them here, we note that the finality results 
we proved about the result forms in \autoref{lem:results-are-final} and 
\ref{lem:results-are-final:processes} also hold for this extended version of \lambdaAEff.

\subsubsection{Computations}

We recall that as standard, we split type safety into proofs of progress and preservation,
with the former stated as follows (see also the discussion in \autoref{sec:basic-calculus:type-safety}):

\begin{thm}[Progress for computations]
\label{thm:progress:extended}
Given a well-typed computation
\[
  p_1 \of \typromise {X_1}, \ldots, p_n \of \typromise {X_n} \types M : \tycomp{Y}{(\o,\i)}
\]
then either
\begin{enumerate}[(a)]
  \item there exists a computation $N$, such that $M \reduces N$, or
  \item the computation $M$ is in a result form, i.e., we have $\CompResult {\{p_1, \ldots, p_n\}} {M}$.
\end{enumerate}
\end{thm}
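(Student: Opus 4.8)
The plan is to prove this by induction on the derivation of the typing judgement $p_1 \of \typromise{X_1}, \ldots, p_n \of \typromise{X_n} \types M : \tycomp{Y}{(\o,\i)}$, maintaining throughout the invariant that the typing context is a list of \emph{promise-typed} variables only. This invariant is exactly what makes the statement true, and it is preserved by every recursive call: the only typing rules whose premises type a subcomputation in an enlarged context are the interrupt-handler rules \textsc{TyComp-Promise}, \textsc{TyComp-RePromise}, and \textsc{TyComp-ReStPromise}, where the continuation $N$ is typed in $\Gamma, p \of \typromise X$ --- still promise-only; the rules that bind non-promise-typed variables or a lock (\textsc{TyComp-MatchPair}, \textsc{TyComp-MatchSum}, \textsc{TyComp-Await}, \textsc{TyComp-Unbox}, \textsc{TyVal-Fun}, and \textsc{TyComp-Spawn} for its spawnee) all concern subterms into which the recursion does not descend. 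The subsumption rule \textsc{TyComp-Subsume} leaves $M$ unchanged, so its case is immediate from the induction hypothesis applied to its (structurally smaller) premise.

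The bulk of the routine work is carried by a \emph{canonical forms} lemma, proved by inversion on the (subsumption-free, hence syntax-directed) value typing judgement: if $\Gamma \types V : X$ and every variable of $\Gamma$ has a promise type, then $V$ is a pair when $X$ is a product, an injection when $X$ is a sum, a function abstraction when $X$ is a function type, a boxed value $\tmbox{W}$ when $X = \tybox{X'}$, and there is no such $V$ at all when $X = \tyempty$; the sole exception is $X = \typromise{X'}$, for which $V$ is either a fulfilled promise $\tmpromise{W}$ or one of the variables $p_i$. This promise-type exception is the one delicate point of the lemma, and it is precisely what produces clause~(b) of the theorem.

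With the lemma in hand, most typing rules are dispatched directly. For the eliminators \textsc{TyComp-Apply}, \textsc{TyComp-MatchPair}, \textsc{TyComp-MatchSum}, and \textsc{TyComp-MatchEmpty}, and for \textsc{TyComp-Unbox}, canonical forms forces the eliminated value into constructor form (or refutes the case entirely, for $\tyempty$), so the corresponding $\beta$-rule fires. \textsc{TyComp-Return} lands immediately in a result form, and so does \textsc{TyComp-Await} when the awaited value is one of the $p_i$ (using the $\RunResult{}{}$-rule for $\tmkw{await}$); if instead it is a fulfilled promise, the $\tmkw{await}$ $\beta$-rule applies. For \textsc{TyComp-Signal} and \textsc{TyComp-Spawn}, I would apply the induction hypothesis to the relevant subcomputation ($M$ inside $\tmopout{op}{V}{M}$, resp.\ $N$ inside $\tmspawn{M}{N}$): if it reduces, the whole term reduces under the evaluation context $\tmopout{op}{V}{\E}$ resp.\ $\tmspawn{M}{\E}$; if it is already in a result form, so is the whole term, by the matching $\CompResult{}{}$-rule.

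The two remaining groups of rules carry the real content. For $\tmlet{x}{M}{N}$ (\textsc{TyComp-Let}) and $\tmopin{op}{V}{M}$ (\textsc{TyComp-Interrupt}), the first observation is that neither term is ever a result form --- the $\CompResult{}{}$ and $\RunResult{}{}$ judgements never have $\tmkw{let}$ or an incoming interrupt at their head --- so a reduction must be exhibited. Applying the induction hypothesis to $M$: if $M$ reduces, the term reduces under its evaluation context; if $M$ is in a result form, then by the shape of $\CompResult{}{}$ its outermost constructor is $\tmkw{return}$, an outgoing signal, a $\tmkw{spawn}$, an interrupt handler, or an $\tmkw{await}$ on some $p_i$, and for each of these there is an algebraicity/commutativity rule (for $\tmkw{let}$) respectively an interrupt-propagation rule (for $\tmkw{\downarrow}$ --- these are stated exactly for the normal forms that result forms supply) that fires. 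Finally, for the interrupt-handler rules (taken in their most general stateful reinstallable form, the others being instances), I would apply the induction hypothesis to the continuation $N$, now typed in the still-promise-only context $\ldots, p \of \typromise X$: if $N$ reduces, the handler reduces by the congruence rule, since an interrupt handler is an evaluation-context frame; if $N$ is in a result form, I split on whether it has an outgoing signal at its top --- if it does, the commutativity-of-signals-with-interrupt-handlers rule fires, and if it does not, then $N$ is in fact in a $\RunResult{\{p_1,\ldots,p_n,p\}}{N}$ form, whence the $\RunResult{}{}$-rule for interrupt handlers puts the whole term into result form. I expect this last point to be the main obstacle: the induction hypothesis only returns a $\CompResult{}{}$-form for $N$, whereas the result-form rule for handlers demands a $\RunResult{}{}$-form, and the gap is bridged precisely by the commutativity rule --- one should in particular double-check that the newly added outward-propagation step for $\tmkw{await}$ keeps the $\RunResult{}{}$-forms in the simple shape that makes this argument go through (cf.\ the discussion surrounding \autoref{lem:results-are-final}).
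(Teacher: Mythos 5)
Your proposal is correct and follows essentially the same route as the paper's (sketched) proof: induction on the typing derivation, a canonical forms lemma that crucially relies on the context containing only promise-typed variables (with the promise-type case yielding either a fulfilled promise or one of the $p_i$, which is what produces clause~(b) in the $\tmkw{await}$ case), and the induction hypothesis applied to the continuation for each rule whose term is an evaluation-context frame. The only small omission is in the interrupt-handler case, where the continuation's result form may also have a $\tmkw{spawn}$ at its top (not only an outgoing signal or an already run-result-shaped term); this subcase is discharged exactly like the signal one, using the commutativity rule for process creation with interrupt handlers.
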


\begin{proof}
The proof is standard and proceeds by induction 
on the derivation of  $\Gamma \types M : \tycomp{Y}{(\o,\i)}$. For instance, 
if the derivation ends with a typing rule for function application or pattern-matching, 
we use an auxiliary canonical forms lemma to show that the value involved 
is either a function abstraction or in constructor form---thus $M$ can $\beta$-reduce and we prove (a).
Here we crucially rely on the context $\Gamma$ having the specific 
form $p_1 \of \typromise {X_1}, \ldots, p_n \of \typromise {X_n}$, 
with all the variables assigned promise types.
If the derivation ends with \textsc{TyComp-Await}, we 
use a canonical forms lemma to show that the promise value is either a variable in
$\Gamma$, in which case we prove (b), or in constructor form, in which case we prove (a).
If the derivation ends with a typing rule for any of the terms figuring in the 
evaluation contexts $\E$, we proceed based on the outcome of using the induction hypothesis 
on the corresponding continuation.
\end{proof}
    
The results that we present in this section (and that we summarised in \autoref{sec:basic-calculus:type-safety}) 
use standard \emph{substitution  
lemmas}. For instance, given $\Gamma, x \of X , \Gamma' \types M : \tycomp{Y}{(\o,\i)}$
and $\Gamma \types V : X$, then we can show that $\Gamma, \Gamma' \types M[V/x] : \tycomp{Y}{(\o,\i)}$.
In addition, we use standard \emph{typing inversion lemmas}. For example, given 
a computation $\Gamma \types \tmopin{op}{V}{M} : \tycomp{X}{(\o,\i)}$, then we can show that 
$\Gamma \types V : A_\op$ and $\Gamma \types M : \tycomp{X}{\opincomp {op} (\o',\i')}$, 
such that $\opincomp {op} (\o',\i') \order {O \times I} (\o,\i)$.
Furthermore, we use \emph{strengthening lemmas} for promise-typed variables, 
such as if we have $\Gamma, p : \typromise{X}, \Gamma' \types V : Y$, and if $\Gamma'$ 
contains $\ctxlock$ or if $Y$ is a mobile type, then also $\Gamma, \Gamma' \types V : Y$.

We also note that the action $\opincomp {op} {(-)}$ has various useful properties that we use below
(where we write $\pi_1$ and $\pi_2$ for the projections associated with the Cartesian product $O \times I$):

\begin{lem}
\label{lem:action}
\mbox{}
\begin{enumerate}
\item $\o \order O \pi_1\, (\opincomp {op} {(\o,\i)})$
\item If $\i\, (\op) = (\o',\i')$, then $(\o',\i') \order {O \times I} \opincomp {op} {(\o,\i)}$
\item If $\op \neq \op'$ and $(\o',\i') \order {O \times I} \i\, (\op')$, then $(\o',\i') \order {O \times I} (\pi_2\, (\opincomp {op} {(\o,\i)}))\, (\op')$
\end{enumerate}
\end{lem}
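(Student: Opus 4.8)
The plan is to establish all three items directly from the definition of the action $\opincomp{op}{(-)}$ in \autoref{sec:basic-calculus:effect-annotations}, by a case split on whether $\i\,(\op)$ is $\bot$ or a genuine pair, and then appealing to the fact that $\sqcup$ computes (least) upper bounds on both $O$ and $I$, together with reflexivity and transitivity of $\order O$, $\order I$, and hence of the product order $\order{O\times I}$.

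For item~(1): if $\i\,(\op) = (\o',\i')$, then $\pi_1\,(\opincomp{op}{(\o,\i)}) = \o \sqcup \o'$, and $\o \order O \o \sqcup \o'$ since the join on $O$ is just set union; otherwise $\opincomp{op}{(\o,\i)} = (\o,\i)$ and the claim is reflexivity of $\order O$. For item~(2): the hypothesis $\i\,(\op) = (\o',\i')$ forces the first branch, so $\opincomp{op}{(\o,\i)} = (\o\sqcup\o',\, \i[\op\mapsto\bot]\sqcup\i')$, and the two required inequalities $\o' \order O \o\sqcup\o'$ and $\i' \order I \i[\op\mapsto\bot]\sqcup\i'$ are again instances of $\sqcup$ being an upper bound on $O$ and on $I$ respectively.

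Item~(3) is the only one requiring a little care, and I would again split on $\i\,(\op)$. If $\i\,(\op) = \bot$, then $\opincomp{op}{(\o,\i)} = (\o,\i)$, so $(\pi_2\,(\opincomp{op}{(\o,\i)}))\,(\op') = \i\,(\op')$ and the conclusion is literally the hypothesis. If $\i\,(\op) = (\o'',\i'')$, then $\pi_2\,(\opincomp{op}{(\o,\i)}) = \i[\op\mapsto\bot]\sqcup\i''$; evaluating at $\op'$ and using $\op\neq\op'$ gives $(\i[\op\mapsto\bot])\,(\op') = \i\,(\op')$, so by the pointwise definition of $\sqcup$ on $I$ the value $(\pi_2\,(\opincomp{op}{(\o,\i)}))\,(\op')$ is the join in $(O\times I)_\bot$ of $\i\,(\op')$ and $\i''\,(\op')$. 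Since the hypothesis $(\o',\i') \order{O\times I} \i\,(\op')$ already guarantees $\i\,(\op')$ is a genuine pair, inspecting the four-case definition of $\sqcup$ (only the first or second clause can apply) shows $\i\,(\op') \order{O\times I} (\pi_2\,(\opincomp{op}{(\o,\i)}))\,(\op')$; transitivity of $\order{O\times I}$ then yields $(\o',\i') \order{O\times I} (\pi_2\,(\opincomp{op}{(\o,\i)}))\,(\op')$, as required.

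I do not expect any genuine obstacle here: the lemma is routine bookkeeping with the join-semilattice structure on $O$ and $I$. The only mildly delicate point is in item~(3), namely tracking how the update $\i[\op\mapsto\bot]$ and the subsequent pointwise join with $\i''$ behave when evaluated at the distinct name $\op'$, and observing that the hypothesis implicitly ensures $\i\,(\op')$ is defined, which is exactly what makes the four-way case split of $\sqcup$ collapse to the relevant clauses.
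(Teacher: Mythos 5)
Your proof is correct and is exactly the routine definitional argument the paper relies on: the paper states \sref{Lemma}{lem:action} without an explicit proof (deferring to the \pl{Agda} formalisation), and your case split on whether $\i\,(\op)$ is $\bot$ or a pair, combined with the asserted join-semilattice structure of $O$ and $I$ and transitivity of $\order{O\times I}$, is precisely that argument, including the only delicate point — that in item~(3) the update $\i[\op\mapsto\bot]$ evaluated at $\op'\neq\op$ returns $\i\,(\op')$, so only the clauses of $\sqcup$ with a defined left component can apply.
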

\noindent
Next, as the proof of type preservation proceeds by induction on reduction steps, 
we find it useful to define an auxiliary \emph{typing judgement for evaluation contexts}, 
written
\[
  \Gamma \types\!\![\, \Gamma' \,\vert\, \tycomp{X}{(\o,\i)} \,]~ \E : \tycomp{Y}{(\o',\i')}
\]
which we then use to prove the evaluation context rule case of the preservation proof.
In this judgement, $\Gamma'$ is the context of variables bound by the interrupt handlers in $\E$, and 
$\tycomp{X}{(\o,\i)}$ is the type of the hole $[~]$. This judgement is defined using rules 
similar to those for typing computations, including subtyping, e.g., for interrupt handlers we have
the following rule:
\begin{mathpar}
  \coopinfer{}{
    (\o'',\i'') \order{O \times I} \i'\, (\op) \\
    \begin{minipage}[c][5ex]{0.8\textwidth}
      \centering
      $\Gamma, x \of A_\op, r \of \tyfun{S}{\tycomp{\typromise Y}{\big(\emptyset, \{ \op \mapsto ({\o''} , {\i''}) \}\big)}}, s \of S \types M : \tycomp{\typromise Y}{(\o'',\i'')}$ 
    \end{minipage}\\
    \Gamma \types V : S \\
    \Gamma, p \of \langle Y \rangle \types\!\![\, \Gamma' \,\vert\, \tycomp{X}{(\o,\i)} \,]~ \E : \tycomp{Z}{(\o',\i')}
  }{
    \Gamma \types\!\![\, p \of \langle Y \rangle, \Gamma' \,\vert\, \tycomp{X}{(\o,\i)} \,]~ \tmwithrest[S]{op}{x}{r}{s}{M}{V}{p}{\E} : \tycomp{Z}{(\o',\i')}
  }
\end{mathpar}
The typing of evaluation contexts is straightforwardly related to that of computations:

\begin{lem}
\label{lem:eval-ctx-typing}
\begin{multline*}
  \big(\Gamma \types \E[M] : \tycomp{Y}{(\o',\i')}\big) 
  \,\Longleftrightarrow\,  \\
\exists\, \Gamma', X, \o, \i .~ 
\big(\Gamma \types\!\![\, \Gamma' \,\vert\, \tycomp{X}{(\o,\i)} \,]~ \E : \tycomp{Y}{(\o',\i')}\big)
~\wedge~
\big(\Gamma,\Gamma' \types M : \tycomp{X}{(\o,\i)}\big)
\end{multline*}
\end{lem}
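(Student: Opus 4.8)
The plan is to prove the two implications separately, both by induction on the structure of the evaluation context $\E$ (equivalently, for the backward direction, on the derivation of the evaluation-context typing judgement), mirroring at each node the matching clause of the computation typing rules. Throughout I would use the typing inversion lemmas discussed above, together with the fact --- noted when introducing the judgement $\Gamma \types\!\![\, \Gamma' \,\vert\, \tycomp{X}{(\o,\i)} \,]~ \E : \tycomp{Y}{(\o',\i')}$ --- that it carries a subsumption rule just like \textsc{TyComp-Subsume}, which lets both directions treat uses of subtyping uniformly.

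For the forward direction, in the base case $\E = [~]$ we have $\E[M] = M$, so I take $\Gamma' \defeq \cdot$, $X \defeq Y$, $(\o,\i) \defeq (\o',\i')$ and conclude by the rule for the empty hole. For a compound context, say $\E = \tmlet{x}{\E_0}{N}$, I first apply the typing inversion lemma for $\tmkw{let}$ to $\Gamma \types \E[M] : \tycomp{Y}{(\o',\i')}$, obtaining an intermediate type $\tycomp{X'}{(\o'',\i'')}$ with $\Gamma \types \E_0[M] : \tycomp{X'}{(\o'',\i'')}$, a matching typing of $N$, and $(\o'',\i'') \order{O \times I} (\o',\i')$. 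The induction hypothesis for $\E_0$ then yields $\Gamma', X, \o, \i$ with $\Gamma \types\!\![\, \Gamma' \,\vert\, \tycomp{X}{(\o,\i)} \,]~ \E_0 : \tycomp{X'}{(\o'',\i'')}$ and $\Gamma, \Gamma' \types M : \tycomp{X}{(\o,\i)}$, and re-assembling with the $\tmkw{let}$-clause of the context typing judgement followed by its subsumption rule gives the goal with these same witnesses. The cases $\E = \tmopout{op}{V}{\E_0}$, $\E = \tmopin{op}{V}{\E_0}$ and $\E = \tmspawn{M_0}{\E_0}$ are analogous, pairing the relevant inversion and context-typing clause with \textsc{TyComp-Signal}, \textsc{TyComp-Interrupt} (so that the action $\opincomp{op}{(-)}$ appears exactly as in the corresponding context rule) and \textsc{TyComp-Spawn}. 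The case $\E = \tmwithrest[S]{op}{x}{r}{s}{M_0}{V}{p}{\E_0}$ is the most delicate: inverting \textsc{TyComp-ReStPromise} extends the ambient context to $\Gamma, p \of \typromise{Y'}$ for the continuation, so the induction hypothesis must be invoked there and the witnessed $\Gamma'$ becomes $p \of \typromise{Y'}, \Gamma''$; the hypothesis's demand $\Gamma, \Gamma' \types M : \tycomp{X}{(\o,\i)}$ is then exactly what we have available, so it goes through.

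For the backward direction, I would induct on the derivation of $\Gamma \types\!\![\, \Gamma' \,\vert\, \tycomp{X}{(\o,\i)} \,]~ \E : \tycomp{Y}{(\o',\i')}$, using the extra hypothesis $\Gamma, \Gamma' \types M : \tycomp{X}{(\o,\i)}$. Each clause of this judgement has a structurally identical counterpart among the computation typing rules, so I rebuild $\Gamma \types \E[M] : \tycomp{Y}{(\o',\i')}$ by plugging the induction hypothesis into that counterpart: the empty-hole rule gives $\Gamma \types M : \tycomp{Y}{(\o',\i')}$ directly; the $\tmkw{let}$, signal, interrupt, spawn and interrupt-handler clauses reassemble via \textsc{TyComp-Let}, \textsc{TyComp-Signal}, \textsc{TyComp-Interrupt}, \textsc{TyComp-Spawn} and \textsc{TyComp-ReStPromise}; and the subsumption clause maps to \textsc{TyComp-Subsume}. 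In the interrupt-handler clause the sub-context premise lives under $\Gamma, p \of \typromise{Y'}$, and the relevant part of the hypothesis (with $\Gamma' = p \of \typromise{Y'}, \Gamma''$) is precisely what the induction needs.

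I expect the only genuine obstacle to be bookkeeping: keeping the existentially quantified $\Gamma'$, $X$ and $(\o,\i)$ coherent across the inductive steps, and routing uses of subtyping through the subsumption rules so that the inversion of the forward direction and the reassembly of the backward direction line up cleanly. The Fitch-style modal apparatus causes no additional difficulty here, since evaluation contexts never place the hole under a lock token $\ctxlock$ --- in $\tmspawn{M_0}{\E}$ the hole sits in the continuation rather than the (locked) spawned computation, and there is no evaluation context for $\tmkw{unbox}$ --- so $\Gamma'$ always consists of ordinary, non-locked bindings and no $\ctxlock$-strengthening is required.
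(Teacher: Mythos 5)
Your proposal is correct and is essentially the intended argument: the paper states this lemma without spelling out a proof (it is discharged in the Agda formalisation), and the expected proof is exactly your two structural inductions — on $\E$ for the forward direction and on the derivation of $\Gamma \types\!\![\, \Gamma' \,\vert\, \tycomp{X}{(\o,\i)} \,]~ \E : \tycomp{Y}{(\o',\i')}$ for the backward one — with inversion plus the context-typing subsumption clause absorbing uses of \textsc{TyComp-Subsume}, and the interrupt-handler case threading the extra binding $p \of \typromise{Y'}$ into $\Gamma'$ as you describe. Your observation that the hole never sits under $\ctxlock$, so no modal strengthening is needed, is also accurate.
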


We are now ready to prove the type preservation theorem for the sequential part of \lambdaAEff.

\begin{thm}[Preservation for computations]
\label{thm:preservation:extended}
Given a computation $\Gamma \types M : \tycomp{X}{(\o,\i)}$, such that $M$
can reduce as $M \reduces N$, then we have $\Gamma \types N : \tycomp{X}{(\o,\i)}$.
\end{thm}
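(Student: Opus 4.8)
The plan is to induct on the derivation of $M \reduces N$, with one case per reduction rule. In each case we first apply the relevant \emph{typing inversion lemmas} to the left-hand side and then rebuild a typing derivation for the right-hand side, invoking the \emph{substitution lemmas} wherever a substitution is performed and the \emph{strengthening lemmas} wherever a binder's scope shrinks. For the \emph{standard computation rules} ($\beta$ for application, for $\tmkw{let}$, and the three pattern-matching eliminations) this is routine inversion-plus-substitution. For the \emph{algebraicity rules} and the rule commuting signals past interrupt handlers, we invert the outer $\tmkw{let}$, signal, interrupt handler, or $\tmkw{await}$, permute the two constructs, and reassemble; the only delicate point is $\tmwith{op}{x}{M}{p}{\tmopout{op'}{V}{N}} \reduces \tmopout{op'}{V}{\tmwith{op}{x}{M}{p}{N}}$, where the scope of $p$ contracts and we must use that the payload $V$ has a \emph{mobile} type $A_{\op'}$ and hence, by strengthening, cannot mention the promise-typed variable $p$. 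The $\tmkw{unbox}$ rule follows immediately from the substitution lemma.

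The \emph{interrupt propagation} rules are the ones that actually exercise the action $\opincomp {op} {(-)}$ and \autoref{lem:action}. Discarding an interrupt at $\tmreturn W$ merely retypes $\tmreturn W$ at the larger annotation $\opincomp {op} {(\o,\i)}$ via \textsc{TyComp-Return}; propagating an interrupt past a signal uses \srefcase{Lemma}{lem:action}{1} to see that $\op'$ is still permitted. The interesting cases are those where the interrupt reaches a (stateful, reinstallable) interrupt handler. If the names match, triggering substitutes the reinstallation function $R$ for $r$ and the current state value $W$ for $s$; we obtain $\Gamma \types W : S$ by inverting \textsc{TyComp-ReStPromise}, and we type $R$ with \textsc{TyComp-ReStPromise} again, discharging its side condition via \srefcase{Lemma}{lem:action}{2}. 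If the names differ, the interrupt is pushed into the continuation, whose annotation moves from $(\o,\i)$ to $\opincomp {op'} {(\o,\i)}$; the handler code $M$ must still typecheck, and since its effect annotation now appears both covariantly (in $M$'s type) and contravariantly (in the type of $r$) it cannot be enlarged. This is exactly why \textsc{TyComp-RePromise} and \textsc{TyComp-ReStPromise} use the inequality $({\o'},{\i'}) \order{O \times I} \i\,(\op)$ in place of an equality: by \srefcase{Lemma}{lem:action}{3}, the unchanged $({\o'},{\i'})$ still satisfies $({\o'},{\i'}) \order{O \times I} (\pi_2\,(\opincomp {op'} {(\o,\i)}))\,(\op)$. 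Propagating an interrupt into an $\tmkw{await}$ is handled as in the algebraicity cases.

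For the \emph{dynamic process creation} rules we invert \textsc{TyComp-Spawn}, recalling that the type of $\tmspawn{M_1}{M_2}$ depends only on the continuation $M_2$ and that $M_1$ is typed in a context ending with $\ctxlock$. When $\tmkw{spawn}$ is hoisted past a $\tmkw{let}$ or an incoming interrupt, $M_1$'s context is unchanged and we simply reassemble; when it is hoisted past an interrupt handler, its continuation escapes the binder for $p$, but since promise types are not mobile the modal variable rule forbids $M_1$ from ever using $p$, so the strengthening lemma keeps $M_1$ well typed under $\Gamma, \ctxlock$ while the continuation is retyped as in the plain interrupt-handler case. Finally, for the \emph{evaluation context rule} $\E[M] \reduces \E[N]$ we use \autoref{lem:eval-ctx-typing} to split the typing of $\E[M]$ into a typing of $\E$ and of $M$, apply the induction hypothesis to $M \reduces N$ to retype the hole at the same type, and recompose with \autoref{lem:eval-ctx-typing}; because the evaluation-context typing judgement carries the same subtyping slack as computation typing, no annotations need adjusting.

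I expect the main obstacle to be the name-mismatch case of interrupt propagation past a reinstallable (or stateful reinstallable) interrupt handler: the repair that worked in the first-order calculus---raising the handler code's effect annotation to absorb the action of the incoming interrupt via subsumption---is unsound here because the type of $r$ makes that annotation occur contravariantly. Making this case go through is precisely what forced the relaxation of the side conditions of \textsc{TyComp-RePromise} and \textsc{TyComp-ReStPromise} to $\order{O \times I}$, together with the monotonicity-style property \srefcase{Lemma}{lem:action}{3}. A secondary, bookkeeping-heavy concern is ensuring the Fitch-style strengthening lemma is available for computations, not merely values, so that it can be applied when $\tmkw{spawn}$ and signals are propagated past promise-binding interrupt handlers.
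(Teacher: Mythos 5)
Your proposal is correct and follows essentially the same route as the paper's proof: induction on the derivation of $M \reduces N$ with inversion, substitution, and strengthening lemmas, \srefcase{Lemma}{lem:action}{2} for the matching-interrupt case, \srefcase{Lemma}{lem:action}{3} for the name-mismatch case (indeed exactly why the side condition of \textsc{TyComp-ReStPromise} is the inequality $\order{O \times I}$), mobility-based strengthening when signals and $\tmkw{spawn}$ commute past interrupt handlers, and \sref{Lemma}{lem:eval-ctx-typing} together with the induction hypothesis for the evaluation-context rule. The only slight misplacement is that \srefcase{Lemma}{lem:action}{2} is needed to subsume the substituted handler code $M[V/x, R/r, W/s]$ to the annotation $\opincomp{op}{(\o,\i)}$ of the enclosing $\tmkw{let}$ (the side condition for retyping $R$ itself holds by reflexivity, since $r$'s type records exactly $\{\op \mapsto (\o',\i')\}$); this does not affect the structure of your argument.
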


\begin{proof}
The proof is standard and proceeds by induction on the derivation of  
$M \reduces N$, using typing inversion lemmas based on 
the structure forced on $M$ by the last rule used in $M \reduces N$.

There are four cases of interest in this proof. The first two concern
the interaction of interrupts and interrupt handlers.
On the one hand, if the derivation of $\reduces$ ends with 
\[
\tmopin{op}{V}{\tmwithrest{op}{x}{r}{s}{M}{W}{p}{N}} \reduces \tmlet{p}{M[V/x, R/r, W/s]}{\tmopin{op}{V}{N}}
\]
where $R$ is a function that reinstalls the interrupt handler,
then in order to type the right-hand side of this rule, we use subtyping with  
\srefcase{Lemma}{lem:action}{2} to show that $M$'s effect information is 
included in that of $\tmopin{op}{V}{N}$, i.e., in $\opincomp {op} {(\o , \i)}$. On the other hand, given the rule 
\[
\begin{array}{l}
\tmopin{op'}{V}{\tmwithrest[S]{op}{x}{r}{s}{M}{W}{p}{N}} \reduces 
\\[0.5ex]
\hfill \tmwithrest[S]{op}{x}{r}{s}{M}{W}{p}{\tmopin{op'}{V}{N}} 
\quad
\\[1ex]
\hspace{13cm} {\color{rulenameColor}(\op \neq \op')}
\end{array}
\]
then in order to type the right-hand side, we use subtyping with  
\srefcase{Lemma}{lem:action}{3}, so as to show that 
after acting on $(\o,\i)$ with $\op'$, $\op$ remains mapped to $M$'s effect information.

The third case of interest concerns the commutativity of signals with interrupt handlers:
\[
\begin{array}{l}
\tmwithrest[S]{op}{x}{r}{s}{M}{V}{p}{\tmopout{op'}{W}{N}} \reduces 
\\[0.5ex]
\hspace{6.5cm} \tmopout{op'}{W}{\tmwithrest[S]{op}{x}{r}{s}{M}{V}{p}{N}}
\end{array}
\]
where in order to type the signal's payload $W$ in the right-hand side of this rule, 
it is crucial that the promise-typed variable $p$ cannot appear in $W$---this is ensured by 
our modal type system that restricts the signatures $\op : A_\op$ to mobile types. As a result, 
we can strengthen the typing context of $W$ by removing the promise-typed variable $p$ 
from it. We also use an analogous context strengthening argument for $N_1$ when given the 
other commutativity rule 
\begin{align*}
  \tmwithrest[S]{op}{x}{r}{s}{M}{V}{p}{\tmspawn{N_1}{N_2}} &\reduces \\[-0.5ex]
      \tmspawn{N_1}{\tmwithrest[S]{op}{&x}{r}{s}{M}{V}{p}{N_2}}
\end{align*}

\noindent Finally, in the evaluation context case, we use the induction hypothesis with
\sref{Lemma}{lem:eval-ctx-typing}. \end{proof} 

Interestingly, the proof of \autoref{thm:preservation:extended} tells us that if 
one were to consider a variant of \lambdaAEff~in which the 
\textsc{TyComp-Subsume} rule appeared as an explicit coercion term $\tmkw{coerce}_{(\o,\i) \order {O \times I} (\o',\i')}\, M$,
which is the style we use in our Agda formalisation~\cite{ahman:AeffAgda}, then 
the right-hand sides of the two interrupt propagation rules highlighted in the above proof 
would also need to involve such coercions, corresponding to the two uses of \sref{Lemma}{lem:action}. 
This however means that other computations involved in these reduction rules would also 
need to be type-annotated accordingly, so as to determine the data to be used in these coercions.

\subsubsection{Processes}

For the parallel part of \lambdaAEff, we again first prove the progress theorem.

\begin{thm}[Progress for processes]
  \label{thm:procprogress:extended}
  Given a well-typed process $\types P : \tyC$, then either
  \begin{enumerate}[(a)]
    \item there exists a process $Q$, such that $P \reduces Q$, or
    \item the process $P$ is already in a (top-level) result form, i.e., we have $\ProcResult {P}$.
  \end{enumerate}
\end{thm}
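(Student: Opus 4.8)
The plan is to prove the statement by induction on the derivation of $\types P : \tyC$ (equivalently, by structural induction on $P$), case-splitting on the last process typing rule used. The only genuinely new ingredient compared with the proof of \autoref{thm:procprogress} is the interaction of \textsc{TyProc-Run} with the $\tmkw{spawn}$ construct, so I would organise the four cases so as to isolate that. Throughout I would rely on simple inversion principles for the nested result-form judgments: a $\ProcResult{P'}$ derivation exhibits $P'$ as a (possibly empty) stack of outgoing signals $\tmopout{op}{V}{-}$ wrapping a $\ParResult$, and a $\ParResult$ is a tree built from $\tmpar{-}{-}$ whose leaves are $\tmrun M$ with $\RunResult{\emptyset}{M}$; in particular $\tmopin{op'}{W}{-}$ never heads a result form.

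For \textsc{TyProc-Par}, \textsc{TyProc-Signal}, and \textsc{TyProc-Interrupt} I would apply the induction hypothesis to the immediate sub-process(es). Whenever a sub-process reduces, the evaluation-context rule for processes (with $\F$ one of $\tmpar \F Q$, $\tmpar P \F$, $\tmopout{op}{V}{\F}$, $\tmopin{op}{V}{\F}$) lets $P$ reduce, giving~(a). Otherwise each sub-process is a result, and I would use the inversion above to read off its shape. For \textsc{TyProc-Par}: if either component exposes a top-level signal then a broadcasting rule fires and $P$ reduces; otherwise both components are $\ParResult$, so $\tmpar{P_1}{P_2}$ is $\ParResult$, hence $\ProcResult{P}$, giving~(b). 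For \textsc{TyProc-Signal} with $P = \tmopout{op}{V}{P'}$ and $\ProcResult{P'}$, the result-form rule for signals directly yields $\ProcResult{P}$. For \textsc{TyProc-Interrupt} with $P = \tmopin{op}{V}{P'}$ and $\ProcResult{P'}$, the shape of $P'$ is a leading signal, a $\tmpar$, or a $\tmrun$, and in each of these three cases the corresponding interrupt-propagation rule for processes fires, so $P$ always reduces, giving~(a).

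The case \textsc{TyProc-Run}, i.e.\ $P = \tmrun M$, is the crux. Here \textsc{TyProc-Run} gives $\types M : \tycomp{X}{(\o,\i)}$ in the empty context, so \autoref{thm:progress:extended} (with $n = 0$) applies: either $M \reduces N$, whence $\tmrun M \reduces \tmrun N$ by the individual-computations rule and we are done with~(a); or $\CompResult{\emptyset}{M}$. In the latter case I would invert the $\CompResult{-}{-}$ derivation. If $M = \tmopout{op}{V}{M'}$, signal hoisting gives $\tmrun M \reduces \tmopout{op}{V}{\tmrun{M'}}$; if $M = \tmspawn{M_1}{M_2}$ (the new result-form clause for $\tmkw{spawn}$), the process-level spawn rule gives $\tmrun M \reduces \tmpar{\tmrun{M_1}}{\tmrun{M_2}}$; in both cases we have~(a). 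Otherwise the $\CompResult{\emptyset}{M}$ derivation is already a $\RunResult{\emptyset}{M}$ derivation, so $\ParResult{\tmrun M}$ and hence $\ProcResult{\tmrun M}$ hold, giving~(b).

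The main obstacle I anticipate is bookkeeping rather than conceptual: making the inversion principles for the four result-form judgments precise, and checking that the computation result forms admitted by \autoref{thm:progress:extended}---which may carry leading signals and now also a leading $\tmkw{spawn}$---are reconciled with the process result forms by exactly one hoisting or spawning step. No remaining case needs the process type $\tyC$ beyond knowing which rule produced it, so effect annotations and type reduction play no role here; they matter only for preservation.
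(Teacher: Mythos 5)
Your proposal is correct and follows essentially the same route as the paper: induction on the typing derivation of $\types P : \tyC$, with the \textsc{TyProc-Run} case discharged by \autoref{thm:progress:extended} (empty context) plus inversion of the result forms (one hoisting or spawn step, or a $\RunResult{\emptyset}{M}$), and the remaining cases by the induction hypothesis together with the broadcast, interrupt-propagation, and evaluation-context rules. The paper states this only in outline; your case analysis fills in the same details correctly.
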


\begin{proof}
The proof is unsurprising and proceeds by induction on the derivation of $\types P : \tyC$. 
In the base case, when the derivation ends with the \textsc{TyProc-Run} rule 
and $P = \tmrun {M}$, we use \sref{Theorem}{thm:progress:extended}. In the other cases, 
we simply use the induction hypothesis.
\end{proof}

To prove preservation, we first focus on properties of the process type reduction $\tyC \tyreduces \tyD$.

\begin{lem}
\label{lem:type-reduction} \mbox{}
\begin{enumerate}
\item Process types can remain unreduced, i.e., 
  $\tyC \tyreduces \tyC$, for any process type $\tyC$.
\item Process types can reduce by being acted upon, i.e., 
  $\tyC \tyreduces \opincomp {op} \tyC$, for any $\op$ and $\tyC$.
\item Process types can reduce under enveloping actions, i.e.,  
  $\tyC \tyreduces \tyD$ implies $\opincomp {op} \tyC \tyreduces \opincomp {op} \tyD$.
\item Process type reduction can introduce signals but does not erase them, i.e., 
  $\tyC \tyreduces \tyD$ implies $\mathsf{signals\text{-}of} (\tyC) \order O \mathsf{signals\text{-}of} (\tyD)$.
\end{enumerate}
\end{lem}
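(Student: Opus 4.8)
The plan is to prove all four parts by induction: parts (1) and (2) by structural induction on the process type $\tyC$, and parts (3) and (4) by induction on the derivation of $\tyC \tyreduces \tyD$. For part (1), reflexivity, the base case $\tyC = \tyrun{X}{\o}{\i}$ is discharged directly by the reflexivity rule for individual-computation process types, and the composite case $\tyC = \typar{\tyC_1}{\tyC_2}$ follows by applying the congruence rule to the two induction hypotheses. Part (2) is almost as immediate: unfolding $\opincomp{op}{(\tyrun{X}{\o}{\i})} = X \att (\opincomp{op}{(\o,\i)})$, its base case is exactly the middle rule instantiated with the empty list, since $\opincompp{[]}{(\o,\i)} = (\o,\i)$ gives $X \att (\o,\i) \tyreduces X \att (\opincomp{op}{(\o,\i)})$; the composite case uses the congruence rule together with the distributivity $\opincomp{op}{(\typar{\tyC_1}{\tyC_2})} = \typar{(\opincomp{op}{\tyC_1})}{(\opincomp{op}{\tyC_2})}$ and the two induction hypotheses.

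For part (3), I induct on the derivation of $\tyC \tyreduces \tyD$. The reflexivity-rule case is again an instance of the reflexivity rule (now at the acted-upon type). The congruence-rule case uses the two induction hypotheses and the same distributivity of the action over $\parallel$. The two spawn-rule cases are handled by observing that $\opincomp{op}{(-)}$ sends an instance of a spawn rule to another instance of the same spawn rule, using distributivity of the action over both $\parallel$ and $\att$. The crux is the middle-rule case: given $X \att (\opincompp{ops}{(\o,\i)}) \tyreduces X \att (\opincompp{ops}{(\opincomp{op}{(\o,\i)})})$ for some new interrupt $\opsym{op'}$, applying $\opincomp{op}{(-)}$ to both sides merely prepends $\op$ to the enveloping list, turning the reduction into the middle-rule instance with list $\op :: ops$ — which is precisely why the rule is deliberately stated with an arbitrary enveloping sequence $ops$, as the excerpt flags before this lemma.

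For part (4), I once more induct on the derivation. The reflexivity-rule case is trivial; the congruence-rule case follows from the induction hypotheses and monotonicity of $\sqcup$, since $\mathsf{signals\text{-}of}(\typar{\tyC}{\tyD}) = \mathsf{signals\text{-}of}(\tyC) \sqcup \mathsf{signals\text{-}of}(\tyD)$; the spawn-rule cases hold because the right-hand side's signal set is the left-hand side's joined with that of the freshly added process. The single case requiring real work is the middle rule, where the goal becomes $\pi_1(\opincompp{ops}{(\o,\i)}) \order O \pi_1(\opincompp{ops}{(\opincomp{op}{(\o,\i)})})$ — i.e., inserting an additional innermost interrupt action can only enlarge the emitted-signal component. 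I expect this to be the main obstacle. The starting point is \srefcase{Lemma}{lem:action}{1}, which gives $\o \order O \pi_1(\opincomp{op}{(\o,\i)})$ and thus settles the empty-list base case; for the step on $ops = m :: ops'$, where $\opincompp{ops}{\psi} = \opincomp{m}{(\opincompp{ops'}{\psi})}$ and the outermost action $\opincomp{m}{(-)}$ inspects the interrupt-handler component of its argument at $m$, one cannot close the induction with the $O$-component comparison alone. I would strengthen the induction hypothesis to also record how the $I$-components $\pi_2(\opincompp{ops'}{(\o,\i)})$ and $\pi_2(\opincompp{ops'}{(\opincomp{op}{(\o,\i)})})$ relate — roughly, that the latter dominates the former in the $O$-parts of every handler annotation, modulo handler subtrees that a prior firing of $\op$ has already folded into the top-level signal set. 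Formulating and carrying this invariant through $\opincomp{m}{(-)}$ is the delicate part; once it is in place, the single-action step, and hence part (4), follow by routine case analysis on the definition of the interrupt action together with monotonicity of $\sqcup$.

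Finally, I would double-check that the auxiliary identity $\opincompp{ops}{(\opincomp{op}{\psi})} = \opincompp{ops'}{\psi}$, where $ops'$ is $ops$ with $\op$ appended, holds (a trivial induction on $ops$ from the recursive definition of $\opincompp{(\cdot)}{(-)}$), as it is what justifies the ``prepend to the enveloping list'' move used in part (3) and clarifies the bookkeeping in part (4).
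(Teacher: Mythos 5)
Your proposal is correct and follows essentially the same route as the paper: for part (3) you resolve the middle-rule case exactly as the paper does, by prepending $\op$ to the enveloping list $\opsym{ops}$ and reapplying the same rule, and for part (4) you reduce the crucial case to precisely the paper's \sref{Lemma}{lem:signal-inclusion-lists-of-interrupts}, the list-generalisation of \srefcase{Lemma}{lem:action}{1}. The paper states that auxiliary lemma without proof (deferring to the formalisation), so your further observation that proving it requires strengthening the induction hypothesis to track the $I$-components goes beyond, but is consistent with, what the paper records.
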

\noindent
The interesting case in the proof of \srefcase{Lemma}{lem:type-reduction}{3} is when the enveloped reduction
$\tyC \tyreduces \tyD$ introduces an interrupt $\op'$ under some sequence of interrupts $\opsym{ops}$, as follows:
\[
  X \att \opincompp {ops} {(\o , \i)} \tyreduces X \att \opincompp {ops} {(\opincomp {op'} {(\o , \i)})}
\]
To prove this case, we simply prepend $\op$ to the list $\opsym{ops}$ and reapply the same rule, as 
\[
\begin{array}{c c c}
\opincomp {op} {(X \att \opincompp {ops} {(\o , \i)})}
&&
\opincomp {op} {(X \att \opincompp {ops} {(\opincomp {op'} {(\o , \i)})})}
\\[1ex]
\rotatebox{90}{$=$}
&
&
\rotatebox{90}{$=$}
\\[0.3ex]
X \att \opincompp {(op :: ops)} {(\o , \i)}
&
\tyreduces
&
X \att \opincompp {(op :: ops)} {(\opincomp {op'} {(\o , \i)})}
\end{array}
\]
Observe that defining $\tyC \tyreduces \tyD$ using 
a simpler basic rule $X \att {(\o , \i)} \tyreduces X \att {(\opincomp {op'} {(\o , \i)})}$
would not have been sufficient to prove this case, i.e., 
$\opincomp {op} {(X \att {(\o , \i)})} \tyreduces \opincomp {op} {(X \att {(\opincomp {op'} {(\o , \i)})})}$.

For the proof of \srefcase{Lemma}{lem:type-reduction}{4}, we generalise 
\srefcase{Lemma}{lem:action}{1} to lists of actions.

\begin{lem}
\label{lem:signal-inclusion-lists-of-interrupts}
$\pi_1\, (\opincompp {ops} {(\o,\i)}) \order O \pi_1\, (\opincompp {ops} {(\opincomp {op} {(\o,\i)})})$
\end{lem}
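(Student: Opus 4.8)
The plan is to prove the inequality by induction on the list $\opsym{ops}$, regarding the statement as the announced list-indexed generalisation of \srefcase{Lemma}{lem:action}{1}. In the base case $\opsym{ops} = []$ the two sides are $\pi_1\,(\o,\i) = \o$ and $\pi_1\,(\opincomp{op}{(\o,\i)})$, so the claim is exactly \srefcase{Lemma}{lem:action}{1}. In the inductive step, with $\opsym{ops}$ a head interrupt name $\op'$ followed by a shorter list $\opsym{rest}$, unfolding the recursive definition of the list action reduces the goal to comparing $\pi_1\,(\opincomp{op'}{(\opincompp{rest}{(\o,\i)})})$ with $\pi_1\,(\opincomp{op'}{(\opincompp{rest}{(\opincomp{op}{(\o,\i)})})})$.

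This inductive step is where the only real difficulty lies, and the naive induction hypothesis --- that the \emph{signal components} of $\opincompp{rest}{(\o,\i)}$ and $\opincompp{rest}{(\opincomp{op}{(\o,\i)})}$ are ordered under $\order O$ --- is insufficient: performing the further action $\opincomp{op'}{(-)}$ promotes into the signal component the signals stored in the interrupt-handler component at $\op'$, and a direct product-order monotonicity argument is unavailable because $(\o,\i)$ and $\opincomp{op}{(\o,\i)}$ are $\order{O \times I}$-incomparable (the action $\opincomp{op}{(-)}$ sends the $\op$-slot to $\bot$ before re-merging, so it can shrink the interrupt-handler annotation at $\op$, even though the ``lost'' material has itself been accounted for, having been promoted into the signal component and distributed over the other slots).

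To make the induction go through, I would strengthen it so that it also controls how the whole interrupt-handler annotation evolves. Concretely, I would first prove a \emph{characterisation lemma}: the signal component $\pi_1\,(\opincompp{ops}{(\o,\i)})$ equals the union of the signal components of all sub-annotations of $(\o,\i)$ reachable by following a finite sequence of interrupt names that occurs as an order-preserving subsequence of $\opsym{ops}$ (in the order in which the list action processes it) --- intuitively, a nested interrupt handler contributes its signals only once its enclosing handlers have been triggered, in order. This characterisation is itself proved by induction on $\opsym{ops}$, but now one must carry along the \emph{entire} resulting annotation, not merely its signal component, with a case analysis on whether the interrupt currently being processed matches the head of a reachable path; the delicate points are the $\bot$-clearing of the triggered slot and the pointwise joins in the definitions of $\opincomp{op}{(-)}$ and of $\sqcup$ on $I$. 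Granting the characterisation, the lemma is immediate: $\opincompp{ops}{(\opincomp{op}{(\o,\i)})}$ is what one obtains by applying to $(\o,\i)$ the list $\opsym{ops}$ with $\op$ appended at the end (the innermost, hence first-processed, action), and every path reachable for $\opsym{ops}$ is still reachable for this extended list, so the indexing set of the union can only grow. I expect the proof of this characterisation lemma --- or, equivalently, the setting up and verification of a suitable simulation-style preorder on annotations under which $(\o,\i)$ relates to $\opincomp{op}{(\o,\i)}$ and which is reflexive, stable under joins and under every action, and dominated by its own action --- to be the main obstacle; everything else is routine.
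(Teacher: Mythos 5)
The paper never spells out a proof of this lemma --- it is stated as the list-indexed generalisation of \srefcase{Lemma}{lem:action}{1} needed for \srefcase{Lemma}{lem:type-reduction}{4}, with the formal argument left to the \pl{Agda} development --- so there is no in-text proof to compare yours against; judged on its own terms, your proposal is sound. You correctly locate the real obstacle: an induction on $\opsym{ops}$ that carries only the signal components breaks down, since $\pi_1\,(\opincomp{op'}{(\o_1,\i_1)})$ depends on $\i_1$, and $(\o,\i) \not\order{O \times I} \opincomp{op}{(\o,\i)}$ in general because of the $\i[\op\mapsto\bot]$ clearing, so monotonicity of the action in the product order is of no direct use. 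Your repair is also correct: $\opincompp{ops}{(\opincomp{op}{(\o,\i)})}$ is indeed the action of $\opsym{ops}$ with $\op$ appended innermost (first processed), and your path characterisation of $\pi_1\,(\opincompp{ops}{(\o,\i)})$ holds --- the invariant to carry through the induction is that a sub-annotation sitting at a path $\pi$ extended by a name $\op''$ is currently merged into the top-level entry at $\op''$ exactly when $\pi$, but not yet the extended path, embeds as an order-preserving subsequence of the processed prefix, and has been discharged into the signal component exactly when the extended path embeds; the pointwise joins are compatible with this reading because firing a name discharges all annotations merged at that name simultaneously, and appending $\op$ only extends the processing sequence, so the family of embedded paths, hence the union, can only grow. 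Two softenings are worth noting: you do not need the exact equality (an upper bound for the original list and a lower bound over the same path family for the extended list suffice), and the simulation-style preorder you mention at the end --- every handler entry present on the left is either still present on the right or has already been discharged, recursively, into the right-hand annotation --- can be verified to be preserved by every action and by joins directly, giving a shorter strengthened induction; in the \pl{Agda} setting, where $I$ is the least fixed point, the recursive clauses of either formulation are handled by structural induction on annotations. Either way, the work lies exactly where you put it.
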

\noindent
As with computations, it is useful 
to define a separate \emph{typing judgement for evaluation contexts}, this time written 
$\Gamma \types\!\![\, \tyC \,]~ \F : \tyD$, together with an 
analogue of \sref{Lemma}{lem:eval-ctx-typing}, which we omit here. Instead, we 
observe that this typing judgement preserves process type reduction.

\begin{lem}
\label{lem:hoisting-and-evaluation-context-types}
Given $\Gamma \types\!\![\, \tyC \,]~ \F : \tyD$ and $\tyC \tyreduces \tyC'$, then there exists $\tyD'$ with 
$\tyD \tyreduces \tyD'$, and we have $\Gamma \types\!\![\, \tyC' \,]~ \F : \tyD'$.
\end{lem}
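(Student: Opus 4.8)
The plan is to induct on the derivation of the evaluation-context typing judgement $\Gamma \types\!\![\, \tyC \,]~ \F : \tyD$ --- equivalently, on the structure of $\F$, following the grammar $\F \bnfis [~] \bnfor \tmpar \F Q \bnfor \tmpar P \F \bnfor \tmopout{op}{V}{\F} \bnfor \tmopin{op}{V}{\F}$ --- where this judgement is defined by clauses mirroring the process typing rules of \autoref{fig:process-typing-rules} (namely a hole clause $\Gamma \types\!\![\, \tyC \,]~ [~] : \tyC$ together with parallel, signal, and interrupt clauses). Since process evaluation contexts bind no variables, the context $\Gamma$ plays no role, so in each case it suffices to exhibit a $\tyD'$ with $\tyD \tyreduces \tyD'$ and then rebuild the same context typing derivation with the hole type changed from $\tyC$ to $\tyC'$. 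In the base case $\F = [~]$ we take $\tyD' \defeq \tyC'$: the required $\tyD \tyreduces \tyD'$ is exactly the hypothesis $\tyC \tyreduces \tyC'$, and $\Gamma \types\!\![\, \tyC' \,]~ [~] : \tyC'$ holds by the hole clause. For $\F = \tmpar \F_0 Q$ we have $\tyD = \typar{\tyD_1}{\tyD_2}$ with $\Gamma \types\!\![\, \tyC \,]~ \F_0 : \tyD_1$ and $\Gamma \types Q : \tyD_2$; the induction hypothesis supplies $\tyD_1'$ with $\tyD_1 \tyreduces \tyD_1'$ and $\Gamma \types\!\![\, \tyC' \,]~ \F_0 : \tyD_1'$, and combining this with $\tyD_2 \tyreduces \tyD_2$ (\srefcase{Lemma}{lem:type-reduction}{1}) via the parallel clause of $\tyreduces$ gives $\typar{\tyD_1}{\tyD_2} \tyreduces \typar{\tyD_1'}{\tyD_2}$; take $\tyD' \defeq \typar{\tyD_1'}{\tyD_2}$ and apply the parallel context clause. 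The case $\F = \tmpar P \F_0$ is symmetric.

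For $\F = \tmopout{op}{V}{\F_0}$, the signal clause forces $\tyD = \tyD_1$ with $\op \in \mathsf{signals\text{-}of}(\tyD_1)$, $\Gamma \types V : A_\op$, and $\Gamma \types\!\![\, \tyC \,]~ \F_0 : \tyD_1$; the induction hypothesis yields $\tyD'$ with $\tyD_1 \tyreduces \tyD'$ and $\Gamma \types\!\![\, \tyC' \,]~ \F_0 : \tyD'$, and by \srefcase{Lemma}{lem:type-reduction}{4} we have $\mathsf{signals\text{-}of}(\tyD_1) \order O \mathsf{signals\text{-}of}(\tyD')$, so $\op \in \mathsf{signals\text{-}of}(\tyD')$ and the signal clause re-types $\F$ at hole type $\tyC'$. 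For $\F = \tmopin{op}{V}{\F_0}$, the interrupt clause forces $\tyD = \opincomp{op}{\tyD_1}$ with $\Gamma \types V : A_\op$ and $\Gamma \types\!\![\, \tyC \,]~ \F_0 : \tyD_1$; the induction hypothesis yields $\tyD_1'$ with $\tyD_1 \tyreduces \tyD_1'$ and $\Gamma \types\!\![\, \tyC' \,]~ \F_0 : \tyD_1'$, and by \srefcase{Lemma}{lem:type-reduction}{3} (actions preserve type reduction) we obtain $\opincomp{op}{\tyD_1} \tyreduces \opincomp{op}{\tyD_1'}$; take $\tyD' \defeq \opincomp{op}{\tyD_1'}$ and apply the interrupt clause. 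This exhausts the cases.

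I expect no substantive obstacle here: the argument is a mechanical reassembly of the context typing derivation once the three supporting clauses of \sref{Lemma}{lem:type-reduction} are available, and the absence of binders in process evaluation contexts means no strengthening or substitution bookkeeping is needed. The one spot that genuinely needs care is the signal case, where process type reduction may introduce fresh interrupts that enlarge the emitted-signal set of the context's type; there one must know that this enlargement never discards the particular $\op$ carried by $\tmopout{op}{V}{\F_0}$, which is exactly the monotonicity statement \srefcase{Lemma}{lem:type-reduction}{4} (itself proved via \sref{Lemma}{lem:signal-inclusion-lists-of-interrupts}). The interrupt case likewise hinges on \srefcase{Lemma}{lem:type-reduction}{3}, whose proof relies on the deliberate design choice --- discussed right after the definition of $\tyreduces$ --- to introduce interrupts under an arbitrary enveloping sequence of interrupt actions rather than only at the top.
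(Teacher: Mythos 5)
Your proof is correct and is essentially the argument the paper intends (the paper states the lemma without spelling out the proof, but the supporting facts it establishes — \srefcase{Lemma}{lem:type-reduction}{1}, \srefcase{Lemma}{lem:type-reduction}{3}, and \srefcase{Lemma}{lem:type-reduction}{4} — are exactly the ones you invoke): induction on the context typing derivation, with reflexivity of $\tyreduces$ for the parallel frame, monotonicity of $\mathsf{signals\text{-}of}$ for the signal frame, and closure of $\tyreduces$ under interrupt actions for the interrupt frame. No gaps.
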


Process types also satisfy an analogue of \srefcase{Lemma}{lem:action}{1}, which shows that 
the action $\opincomp {op} (-)$ of interrupts on process types does not erase any already specified outgoing signals.

\begin{lem}
\label{lem:signals-of-interrupt-action}
For any $\tyC$ and $\op$, we have $\mathsf{signals\text{-}of}(\tyC) \order O \mathsf{signals\text{-}of}(\opincomp{op}{\tyC})$.
\end{lem}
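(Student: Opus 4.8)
The plan is to proceed by a straightforward structural induction on the process type $\tyC$, following the two clauses defining both $\mathsf{signals\text{-}of}(-)$ and the action $\opincomp{op}{(-)}$ on process types.

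First, in the base case $\tyC = \tyrun{X}{\o}{\i}$, we compute $\mathsf{signals\text{-}of}(\tyC) = \o$, and, unfolding the definition of the action on individual computation types, $\mathsf{signals\text{-}of}(\opincomp{op}{\tyC}) = \mathsf{signals\text{-}of}(X \att (\opincomp{op}{(\o,\i)})) = \pi_1\,(\opincomp{op}{(\o,\i)})$. The required inequality $\o \order O \pi_1\,(\opincomp{op}{(\o,\i)})$ is then exactly \srefcase{Lemma}{lem:action}{1}, so this case is immediate.

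Next, in the inductive case $\tyC = \typar{\tyC_1}{\tyC_2}$, we unfold $\mathsf{signals\text{-}of}(\typar{\tyC_1}{\tyC_2}) = \mathsf{signals\text{-}of}(\tyC_1) \sqcup \mathsf{signals\text{-}of}(\tyC_2)$ and, since $\opincomp{op}{(\typar{\tyC_1}{\tyC_2})} = \typar{(\opincomp{op}{\tyC_1})}{(\opincomp{op}{\tyC_2})}$, also $\mathsf{signals\text{-}of}(\opincomp{op}{\tyC}) = \mathsf{signals\text{-}of}(\opincomp{op}{\tyC_1}) \sqcup \mathsf{signals\text{-}of}(\opincomp{op}{\tyC_2})$. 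The induction hypotheses give $\mathsf{signals\text{-}of}(\tyC_i) \order O \mathsf{signals\text{-}of}(\opincomp{op}{\tyC_i})$ for $i = 1,2$, and we conclude by monotonicity of the join $\sqcup$ on the join-semilattice $(O, \order O)$.

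I do not expect any real obstacle here: the statement is an elementary consequence of \srefcase{Lemma}{lem:action}{1} lifted along the parallel-composition structure, and the only mildly nontrivial ingredient is the monotonicity of $\sqcup$ with respect to $\order O$, which holds because on $O = \Pow{\sig}$ both $\sqcup$ and $\order O$ are just union and inclusion. (This lemma is essentially the process-type analogue of \srefcase{Lemma}{lem:action}{1}, and its role downstream is the same: it ensures the \textsc{TyProc-Signal} side condition $\op \in \mathsf{signals\text{-}of}(\tyC)$ is not invalidated when an incoming interrupt acts on $\tyC$, as needed in the preservation proof for processes.)
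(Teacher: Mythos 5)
Your proof is correct and matches the paper's (implicitly intended) argument: the paper presents this lemma precisely as the process-type analogue of \srefcase{Lemma}{lem:action}{1}, obtained by structural induction on $\tyC$ with the base case given by that lemma and the parallel case by monotonicity of $\sqcup$ on $O = \Pow{\sig}$. Nothing further is needed.
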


Finally, using the results above, we prove type preservation for the parallel part of \lambdaAEff.

\begin{thm}[Preservation for processes]
  \label{thm:procpreservation:extended}
  Given a well-typed process $\Gamma \types P : \tyC$, such that $P$ can reduce as 
  $P \reduces Q$, then there exists a process type $\tyD$, such 
  that the process type $\tyC$ can reduce as $\tyC \tyreduces \tyD$, 
  and we can type the resulting process as $\Gamma \types Q : \tyD$.
\end{thm}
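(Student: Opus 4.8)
The plan is to prove \autoref{thm:procpreservation:extended} by induction on the derivation of $P \reduces Q$, mirroring the proof of \autoref{thm:procpreservation} for the basic calculus and additionally handling the rules introduced by dynamic process creation. In each case I would first apply the relevant typing inversion lemmas to recover the shape forced on $P$ and on its type $\tyC$ by the last reduction rule, then exhibit an explicit witness $\tyD$ together with derivations of $\tyC \tyreduces \tyD$ and $\Gamma \types Q : \tyD$. Since process evaluation contexts and the process reduction rules never bind term variables, no context-strengthening is needed here (in contrast to the signal-past-handler case for computations).

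The cases in which the process type is unchanged, i.e.\ $\tyD = \tyC$, are routine. For $\tmrun M \reduces \tmrun N$ I would invoke \autoref{thm:preservation:extended} on the underlying computation and justify $\tyC \tyreduces \tyC$ by \srefcase{Lemma}{lem:type-reduction}{1}. The same $\tyD = \tyC$ works for signal hoisting $\tmrun{(\tmopout{op}{V}{M})} \reduces \tmopout{op}{V}{\tmrun M}$ (using that $\op$ is already among the signals of the computation type), for $\tmopin{op}{V}{\tmrun M} \reduces \tmrun{(\tmopin{op}{V}{M})}$ and $\tmopin{op}{V}{\tmpar P Q} \reduces \tmpar{\tmopin{op}{V}{P}}{\tmopin{op}{V}{Q}}$ (where both sides have type $\opincomp{op}{\tyC}$ straight from the definition of the action of an interrupt on process types), and for $\tmopin{op}{V}{\tmopout{op'}{W}{P}} \reduces \tmopout{op'}{W}{\tmopin{op}{V}{P}}$, where I would additionally use \autoref{lem:signals-of-interrupt-action} to see that $\op'$ is still listed in $\mathsf{signals\text{-}of}(\opincomp{op}{\tyC})$, so that the rebuilt right-hand side typechecks.

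The genuinely interesting cases are the broadcast rules and dynamic process creation, where the process type really does change. For $\tmpar{\tmopout{op}{V}{P}}{Q} \reduces \tmopout{op}{V}{\tmpar{P}{\tmopin{op}{V}{Q}}}$, inversion gives $\tyC = \typar{\tyC_1}{\tyC_2}$ with $\op \in \mathsf{signals\text{-}of}(\tyC_1)$, $\Gamma \types V : A_\op$, $\Gamma \types P : \tyC_1$, and $\Gamma \types Q : \tyC_2$; then $\Gamma \types \tmopin{op}{V}{Q} : \opincomp{op}{\tyC_2}$ and, since $\op$ is still among the signals of $\typar{\tyC_1}{(\opincomp{op}{\tyC_2})}$, the right-hand side is typed by $\tyD := \typar{\tyC_1}{(\opincomp{op}{\tyC_2})}$; the required $\tyC \tyreduces \tyD$ then follows by the congruence rule for $\typar{-}{-}$ from $\tyC_1 \tyreduces \tyC_1$ (\srefcase{Lemma}{lem:type-reduction}{1}) and $\tyC_2 \tyreduces \opincomp{op}{\tyC_2}$ (\srefcase{Lemma}{lem:type-reduction}{2}), with the symmetric broadcast rule handled dually. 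For $\tmrun{(\tmspawn{M}{N})} \reduces \tmpar{\tmrun M}{\tmrun N}$, inversion through \textsc{TyProc-Run} and \textsc{TyComp-Spawn} gives $\Gamma, \ctxlock \types M : \tycomp X {(\o,\i)}$ and $\Gamma \types N : \tycomp Y {(\o',\i')}$ with $\tyC = \tyrun Y {\o'} {\i'}$; removing the now-redundant $\ctxlock$ (a standard structural property of the Fitch-style system, since deleting a lock only relaxes the variable rule) yields $\Gamma \types M : \tycomp X {(\o,\i)}$, hence $\Gamma \types \tmpar{\tmrun M}{\tmrun N} : \typar{(\tyrun X \o \i)}{(\tyrun Y {\o'} {\i'})} =: \tyD$, and $\tyC \tyreduces \tyD$ is precisely the second of the two new process-type-reduction rules that spontaneously add a parallel process type.

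Finally, the evaluation-context rule $\F[P] \reduces \F[Q]$ is handled with the auxiliary judgement $\Gamma \types\!\![\, \tyC \,]~ \F : \tyD$ for process evaluation contexts and the process analogue of \autoref{lem:eval-ctx-typing}: from $\Gamma \types \F[P] : \tyC$ I would extract a hole type $\tyC_0$ with $\Gamma \types\!\![\, \tyC_0 \,]~ \F : \tyC$ and $\Gamma \types P : \tyC_0$; the induction hypothesis supplies $\tyC_0'$ with $\tyC_0 \tyreduces \tyC_0'$ and $\Gamma \types Q : \tyC_0'$; \autoref{lem:hoisting-and-evaluation-context-types} then yields $\tyD$ with $\tyC \tyreduces \tyD$ and $\Gamma \types\!\![\, \tyC_0' \,]~ \F : \tyD$, and plugging $Q$ back in gives $\Gamma \types \F[Q] : \tyD$. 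I expect this case, together with its supporting \autoref{lem:hoisting-and-evaluation-context-types} (and through it \srefcase{Lemma}{lem:type-reduction}{3}), to be the main obstacle: it is exactly here that process type reduction must be stable under being wrapped in an arbitrary interrupt action, which is why the type-reduction relation introduces a new interrupt underneath an \emph{arbitrary} enveloping sequence $\opsym{ops}$ rather than directly --- a broadcast step nested inside a context $\F$ built from several $\tmopin{op}{V}{-}$ layers forces exactly this generality, and getting that bookkeeping to line up is the crux of the argument.
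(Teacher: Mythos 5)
Your proposal matches the paper's proof essentially step for step: induction on the reduction derivation with typing inversion, $\tyD = \tyC$ via \srefcase{Lemma}{lem:type-reduction}{1} for the routine cases, the broadcast rules handled by acting on the receiving component's type with \srefcase{Lemma}{lem:type-reduction}{2} under the parallel congruence rule, the $\tmkw{run}$/$\tmkw{spawn}$ interaction discharged by the new spontaneous-parallel type-reduction rule, \autoref{lem:signals-of-interrupt-action} for commuting interrupts past signals, and the evaluation-context case via \autoref{lem:hoisting-and-evaluation-context-types} together with the induction hypothesis. Your additional remarks (explicit lock removal in the spawn case, and the role of the $\opincompp{ops}{(-)}$ generality behind \srefcase{Lemma}{lem:type-reduction}{3}) are consistent with the paper's surrounding discussion, so the proposal is correct and takes the same route.
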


\begin{proof}
The proof proceeds by induction on the derivation of  
$P \reduces Q$, using auxiliary typing inversion lemmas depending on 
the structure forced upon $P$ by the last rule used in $P \reduces Q$.

For most of the cases, we can pick $\tyD$ to be $\tyC$ and use
\srefcase{Lemma}{lem:type-reduction}{1}. For process creation, i.e., for the
interaction of $\tmkw{run}$ and $\tmkw{spawn}$, we define $D$ by composing $C$
in parallel with the spawned process's type, and build $\tyC \tyreduces \tyD$
using the new type reduction rule
\[
  \coopinfer{}{
  }{
    Y \att {(\o' , \i')} \tyreduces
    \typar{(X \att {(\o , \i)})}{(Y \att {(\o' , \i')})}
  }
\]
that we introduced in \autoref{sec:extensions:dynamic-process-creation}.

For the broadcast rules, we define $\tyD$ by introducing the corresponding 
interrupt, and build $\tyC \tyreduces \tyD$ using the parallel composition 
rule together with \srefcase{Lemma}{lem:type-reduction}{2}.

For the evaluation context rule, we use \sref{Lemma}{lem:hoisting-and-evaluation-context-types}
in combination with the induction hypothesis.
Finally, in order to discharge effect annotations-related side-conditions 
when commuting incoming interrupts with outgoing signals, 
we use \sref{Lemma}{lem:signals-of-interrupt-action}.
\end{proof}


\section{Asynchronous Effects in Action}
\label{sec:applications}

We now show examples of the kinds of programs one can write in \lambdaAEff.
Similarly to \autoref{sec:overview:runningexample}, we again allow ourselves access to 
mutable references as a matter of convenience. We use these references only for 
(function call) counters and for communicating data between different parts of a program---passing data between 
subsequent reinstalls of the same interrupt handler is dealt with using the stateful reinstallable 
interrupt handlers introduced in \autoref{sec:extensions:stateful-reinstallable-interrupt-handlers}. 

In addition to the generic versions of constructs defined in \autoref{sec:overview:runningexample}, 
we further use
\begin{align*}
    \tmspawngen{M} &~\defeq~ \tmspawn{M}{\tmreturn \tmunit} \\
    \tmunboxgen{V} &~\defeq~ \tmunbox{V}{x}{\tmreturn x}
\end{align*}

\subsection{Guarded Interrupt Handlers}
\label{sec:applications:guarded-handlers}

Before diving into the examples, we note that we often want the 
triggering of interrupt handlers to be 
conditioned on not only the names of interrupts, but also on the payloads that they carry.
In order to express such more fine-grained interrupt handler triggering behaviour, we shall use a
\emph{guarded interrupt handler}: 
\begin{aeffbox}
\begin{lstlisting}
promise (op x r s when guard |-> comp) @ v
\end{lstlisting}
\end{aeffbox}
which is simply a syntactic sugar for the following stateful interrupt handler that reinstalls 
itself until the boolean \ls$guard$ becomes true, in which case it executes the handler code \ls$comp$:
\begin{aeffbox}
\begin{lstlisting}
promise (op x r s |-> if guard then comp else r s) @ v
\end{lstlisting}
\end{aeffbox}
where \ls$x$ and \ls$s$ are bound both in \ls$guard$ and 
\ls$comp$. This means that the handler triggering can be conditioned both on the payload 
and state values. Meanwhile, \ls$r$ is bound only in \ls$comp$.
Also, note that regardless whether \ls$guard$ is true, every interrupt gets propagated into \ls$cont$.

As guarded interrupt handlers repeatedly reinstall themselves, they get assigned 
recursive effect annotations, as discussed in \autoref{sec:extensions:reinstallable-interrupt-handlers}.
For example, if \ls$comp$ has type $\tycomp{\typromise X}{(\o,\i)}$, then the corresponding
guarded interrupt handler gets assigned the type $\tycomp{\typromise X}{(\emptyset, \i_h)}$, where
\[
  \i_h = \{ \op \mapsto (\o, \i \sqcup \{ \op \mapsto (\o, \i \sqcup \{ \op \mapsto (\o, \cdots) \}) \}) \}
\]
is the least fixed point of the continuous map $\i' \mapsto \{ \op \mapsto (\o, \i \sqcup \i') \} : I \to I$.
As such, the type $\tycomp{\typromise X}{(\emptyset, \i_h)}$ specifies that the installation of the 
guarded interrupt handler does not issue any signals by itself, and that the arrival of any 
$\opsym{op}$ interrupt causes either the effects $(\o,\i)$ of \ls$comp$ to happen, 
or the interrupt handler to be reinstalled. Observe that as a consequence, some of the 
recursive encoding leaks via $\i_h$ into the type of guarded interrupt handlers.

Similarly to reinstallable interrupt handlers, we write \ls$promise (op x r when guard |-> comp)$ 
when the state associated with the guarded interrupt handler is trivial and can be omitted.

\subsection{Pre-Emptive Multi-Threading}
\label{sec:applications:multithreading}

Multi-threading remains one of the most exciting applications of algebraic effects, with the possibility of modularly
and user-definably expressing 
many evaluation strategies being the main reason for the extension of \pl{OCaml} with effect handlers~\cite{OCaml:ReleaseNotes}.
These evaluation strategies are however \emph{cooperative} in nature, where each thread needs to explicitly yield back 
control, stalling other threads until then. 

While it is possible to simulate \emph{pre-emptive multi-threading} within the usual treatment of algebraic effects, 
it requires a low-level access to the specific runtime environment, so as to inject 
yields into the currently running computation.
In contrast, implementing pre-emptive multi-threading in \lambdaAEff~is quite straightforward, and importantly, 
possible within the language itself---the injections into the running computation 
take the form of incoming interrupts.

For the purpose of modelling pre-emptive multi-threading, let us consider two interrupts, 
$\opsym{stop} : \tyunit$ and $\opsym{go} : \tyunit$, that communicate to a thread whether 
to \emph{pause} or \emph{resume} execution. For example, these interrupts might originate 
from a timer process being run in parallel.

At the core of our implementation of pre-emptive multi-threading is
the computation term \ls$waitForStop ()$ that is defined as the following reinstallable interrupt handler:
\begin{aeffbox}
\begin{lstlisting}
let waitForStop () = 
    promise (stop _ r |->
        let p = promise (go _ |-> return <<()>>) in
        await p;
        r ()
    )
\end{lstlisting}
\end{aeffbox}
which first installs an interrupt handler for $\opsym{stop}$, letting subsequent computations run their course. Once 
a $\opsym{stop}$ interrupt arrives, the interrupt handler for it is triggered and the next one for $\opsym{go}$ is 
installed. In contrast to the interrupt handler for $\opsym{stop}$, we now start awaiting 
the promise \ls$p$. This means that any subsequent computations are blocked until a $\opsym{go}$ interrupt 
is received, after which we reinstall the interrupt handler for $\opsym{stop}$ and repeat the cycle.

To \emph{initiate the pre-emptive behaviour} for some computation \ls{comp}, we run the program
\begin{aeffbox}
\begin{lstlisting}
waitForStop (); comp
\end{lstlisting}
\end{aeffbox}
The algebraicity reduction rules for interrupt handlers ensure that they propagate out of \ls{waitForStop} 
and eventually encompass the entire composite computation, including \ls{comp}.
It is important to note that in contrast to the usual effect handlers based encodings of multi-threading, \ls$waitForStop$ does 
not need any access to a thunk \lstinline{fun () |-> comp} representing the threaded computation. 
In particular, the computation \ls$comp$ that we want to pre-empt can be completely unaware of the multi-threaded behaviour, 
both in its definition and type.

This approach can be easily extended to multiple threads, by using interrupts' payloads to communicate thread IDs. To this end, 
we can consider interrupts $\opsym{stop} : \tyint$ and $\opsym{go} : \tyint$, and use guarded interrupt handlers to define  
a thread ID sensitive version of \ls$waitForStop$:
\begin{aeffbox}
\begin{lstlisting}
let waitForStop threadID =
    promise (stop threadID' r when threadID = threadID' |->
        let p = promise (go threadID' when threadID = threadID' |-> return <<()>>) in
        await p;
        r ()
    )
\end{lstlisting}
\end{aeffbox}
with the triggering of the interrupt handlers being conditional on the received thread IDs.

\subsection{Remote Function Calls}
\label{sec:applications:remotecall}

One of the main uses of asynchronous computation is to offload the execution of 
\emph{long-running functions} to remote processes. Below we show how to implement this
in \lambdaAEff~in a way that requires minimal cooperation from the remote process.

For a simpler exposition, we assume a fixed (mobile) result type $A$ shared by all functions 
that we may wish to execute remotely. For communicating a function to be executed to 
the remote process, we assume a signal $\opsym{call} : \tybox{\tyunit \to \tycomp{\tyunit}{(\o, \i)}}$. 
Finally, for communicating the remote function call's $A$-typed result back to the caller, we assume 
a signal $\opsym{result} : \typrod{A}{\tysym{int}}$.

The caller then calls functions \ls$f$ remotely through a wrapper function, \ls$remoteCall$,
which issues a \ls$call$ signal, installs a handler for a \ls$result$ 
interrupt, and returns a thunk that can be used to block the caller program's 
execution and await the remote function's result:
\begin{aeffbox}
\begin{lstlisting}
let remoteCall f =
    let callNo = !callCounter in callCounter := !callCounter + 1;
    let task = [| fun _ -> let g = unbox f in
                                let res = g () in 
                                send result (res, callNo) |] 
    in
    send call task;
    let resultPromise = promise (result (y, callNo') when callNo = callNo' |-> return <<y>>) in
    let awaitResult () = await resultPromise in
    return awaitResult
\end{lstlisting}
\end{aeffbox}

Observe that the function \ls$f$ is not sent directly in the payload of the \ls$call$ signal
to the remote process. Instead, \ls$call$'s payload combines the task of executing 
\ls$f$ with issuing a \ls$result$ signal with the function's result. This ensures 
that the result is always sent back to the caller, and the callee process can have a very 
simple implementation (see below). In addition, this combination explains why the signature
of \ls$call$ does not mention $A$. Further, we note that in order to ensure that the payload is a 
boxed value, as required by \ls$call$'s signature, the function \ls$f$ has to be 
passed to \ls$remoteCall$ in a boxed form (notice the use of \ls$unbox$ 
in \ls$task$).

To avoid the results of earlier remote function calls from fulfilling the promises of later ones, 
we assign to each call a unique identifier, which we implement using a counter local to the caller 
process. The identifier is passed together with the result and a guarded interrupt handler is used 
to ensure that only the result of the correct call is awaited. Note that this policy is again enforced 
by the caller and does not require any cooperation from the callee.

We also note that the effect annotation $(\o, \i)$ in \ls$call$'s signature can 
be used to  limit the effects the caller may trigger in the callee process---it also
influences the effects of functions \ls$f$ that one can call the 
\ls$remoteCall$ wrapper with. In order to be able to communicate 
the remote function's result back to the caller in \ls$task$, $\o$ should include at 
least the \ls$result$ signal.

For instance, one may then call remote functions in their code as follows:
\begin{aeffbox}
\begin{lstlisting}
let subtally = remoteCall [| fun () -> query "SELECT count(col) FROM table WHERE cond" |] in
let tally = remoteCall [| fun () -> query "SELECT count(col) FROM table" |] in
printf "Percentage: %d" (100 * subtally () / tally ())
\end{lstlisting}
\end{aeffbox}

In the \emph{callee process}, we simply install an interrupt handler that spawns a new process for 
executing the received function and then immediately recursively reinstalls itself, as follows:
\begin{aeffbox}
\begin{lstlisting}
promise (call boxedTask r |-> 
    spawn (let f = unbox boxedTask in 
              f ()); 
    r ()
)
\end{lstlisting}
\end{aeffbox}
Observe that as the payload of the \ls$call$ interrupt is received in a boxed form, 
it has to be unboxed before we are able to execute its underlying function. Here it is 
important that this unboxing happens inside the argument of \ls$spawn$ and not
before the call to \ls$spawn$. Namely, as the argument of 
\ls$spawn$ has to be mobile, its context is delimited by $\ctxlock$ (as discussed 
in \autoref{sec:extensions:dynamic-process-creation}) and therefore it can only refer 
to variables with mobile types bound outside of it, and whereas the type of 
\ls$boxedTask$ is mobile, the type of the underlying function is not.

This example can be naturally generalised to allow the remotely executed functions to take
non-unit arguments: on the one hand, simply by passing arguments to the callee using the \ls$remoteCall$ 
wrapper function, or on the other hand, by defining separate wrapper functions for communicating the 
function and a particular call's arguments to the callee one at a time. We omit this generalisation here, but 
refer the reader to our original work~\cite{Ahman:POPL} for an example of remote function 
calls being triggered by passing a particular call's arguments to the callee. 
However, it is important to highlight that whereas in our original work we were limited to only 
sending arguments to a fixed remote function, the modal boxed types adopted in this paper  
would enable the caller to dynamically also pass functions to the callee.

Unlike effect handlers, our interrupt handlers have very limited control over the execution of 
their continuation. Regardless, we can still simulate \emph{cancellations of asynchronous 
computations} using the ideas behind our implementation of pre-emptive multithreading that we 
described in \autoref{sec:applications:multithreading}. Specifically, we modify the \ls$remoteCall$ wrapper
function so that it returns an additional \emph{cancellation thunk}, which can be used to cancel the computation:
\begin{aeffbox}
\begin{lstlisting}
let remoteCancellableCall f =
    let callNo = !callCounter in callCounter := !callCounter + 1;
    let task = [| fun _ -> waitForCancel callNo; 
                                let g = unbox f in 
                                let res = g () in 
                                send result (res, callNo) |]
    in
    send call task;
    let resultPromise = promise (result (y, callNo') when callNo = callNo' |-> return <<y>>) in
    let awaitResult () = await resultPromise in
    let cancelCall () = send cancel callNo in
    return (awaitResult, cancelCall)
\end{lstlisting}
\end{aeffbox}
and where the function used to implement cancellations in the payload of \ls$call$ is defined as
\begin{aeffbox}
\begin{lstlisting}
let waitForCancel callNo =
  promise (cancel callNo' when callNo = callNo' ->
    let p = promise (impossible _ -> return <<()>>) in
    await p;
    return <<()>>
  )
\end{lstlisting}
\end{aeffbox}
for which we assume two additional signals: $\opsym{cancel} : \tyint$ and $\opsym{impossible} : \tyempty$.

The callee code remains unchanged. Running each remote call in a separate process ensures
that each \ls$cancel$ interrupt affects only one remote function call. In our original
work~\cite{Ahman:POPL}, where all remote calls were executed in a single process,
we additionally needed an auxiliary reinvoker process to continue executing the non-cancelled 
remote function calls.

Finally, we observe that the cancelled computation is only \emph{perpetually stalled} 
(indefinitely awaiting the \ls$impossible$ interrupt, which can never be propagated to the process due to 
its $\tyempty$-typed signature) but not discarded completely, leading to a memory leak. 
We conjecture that extending \lambdaAEff~with interrupts and interrupt handlers that have greater 
control over their continuations could lead to a more efficient, memory leak-free code for the callee site.

\subsection{Runners of Algebraic Effects}
\label{sec:applications:runners}

Next, we show how to use \lambdaAEff~to implement a parallel variant 
of \emph{runners of algebraic effects} \cite{Ahman:Runners}. These are a 
natural mathematical model and programming abstraction for resource management based on 
algebraic effects, and correspond to effect handlers that resume continuations (at most) 
once in a tail call position.

In a nutshell, for a signature of operation 
symbols $\op : A_\op \to B_\op$, a \emph{runner} $\mathcal{R}$ comprises a family of stateful functions 
$\overline{\op}_{\mathcal{R}} : A_\op \times R \to B_\op \times R$, 
called \emph{co-operations}, where $R$ is the type of \emph{resources} that the particular runner manipulates.
In the more general setting, the co-operations also model other, external 
effects, such as native calls to the operating system, and can furthermore raise  
exceptions---all of which we shall gloss over here.

Given a runner $\mathcal{R}$, the programmer is provided with a construct
\[
\tmkw{using}~\mathcal{R}~\tmkw{@}~V_{\text{init}}~\tmkw{run}~M~\tmkw{finally}~\{ \tmreturn x ~\tmkw{@}~ r_{\text{fin}} \mapsto N  \}
\]
which runs $M$ using $\mathcal{R}$, with resources initially set to $V_{\text{init}}$; and 
finalises the return value (bound to $x$) and final resources (bound to $r_{\text{fin}}$) using the computation $N$, e.g., 
ensuring that all file handles get closed. This is a form of effect handling: it executes $M$ by invoking 
co-operations in place of operation calls, while doing resource-passing under the hood. 
Below we show by means of examples how one can use \lambdaAEff~to naturally separate $\mathcal{R}$ and $M$ 
into different processes.
For simplicity, we omit the initialisation and finalisation phases.

For our first example, let us consider a runner that implements a \emph{pseudorandom number generator} 
by providing a co-operation for ${\opsym{random} : \tyunit \to \tyint}$, which we can implement as 
\begin{aeffbox}
\begin{lstlisting}
let linearCongruenceGeneratorRunner modulus a c initialSeed =
    promise (randomReq callNo r seed |->
        let seed' = (a * seed + c) mod modulus in
        send randomRes (seed, callNo); 
        r seed'
    ) @ initialSeed
\end{lstlisting}
\end{aeffbox}
It is given by a recursive interrupt handler, which listens for $\opsym{randomReq} : \tysym{int}$ requests 
issued by clients, and itself issues $\opsym{randomRes} : \typrod{\tysym{int}}{\tysym{int}}$ responses. 
The resource that this runner manages is the seed, which it passes between subsequent co-operation 
calls using the state-passing features provided by our reinstallable interrupt handlers. The seed is originally set 
to \ls$initialSeed$, recalculated during each execution of the interrupt handler, and passed to the next 
co-operation call by reinstalling the interrupt handler with the updated seed value.

In the client, we implement operation calls \ls$random ()$ as discussed in \autoref{sec:overview:signals}, 
by decoupling them into signals and interrupt handling. We use guarded interrupt handlers and call identifiers to 
avoid a response to one operation call fulfilling the promises of other ones. 

\begin{aeffbox}
\begin{lstlisting}
let random () =
    let callNo = !callCounter in callCounter := callNo + 1;
    send randomReq callNo;
    let p = promise (randomRes (n, callNo') when callNo = callNo' |-> return <<n mod 10>>) in
    await p
\end{lstlisting}
\end{aeffbox}

As a second example of runners, we show that this parallel approach to runners naturally 
extends to multiple co-operations. Specifically, we implement a \emph{runner for a heap}, 
which provides co-operations for the following three operation symbols:
\[
    \opsym{alloc} : \tysym{int} \to \tysym{loc} \qquad
    \opsym{lookup} : \tysym{loc} \to \tysym{int} \qquad
    \opsym{update} : \tysym{loc} \times \tysym{int} \to \tyunit
\]
We represent the co-operations using a signal/interrupt pair $(\opsym{opReq},\opsym{opRes})$ with 
respective payload types $\typrod{\tysym{payloadReq}}{\tysym{int}}$ and 
$\typrod{\tysym{payloadRes}}{\tysym{int}}$, tagged with call identifiers, and where
\begin{aeffbox}
\begin{lstlisting}
type payloadReq = AllocReq of int | LookupReq of loc | UpdateReq of loc * int
type payloadRes = AllocRes of loc | LookupRes of int | UpdateRes of unit
\end{lstlisting}
\end{aeffbox}

\noindent
The resulting runner is implemented by pattern-matching on the payload value as follows:
\begin{aeffbox}
\begin{lstlisting}
let heapRunner initialHeap =
    promise (opReq (payloadReq, callNo) r heap |->
        let heap', payloadRes =
            match payloadReq with
            | AllocReq v |-> 
                  let heap', l = allocHeap heap v in 
                  return (heap', AllocRes l)
            | LookupReq l |-> 
                  let v = lookupHeap heap l in 
                  return (heap, LookupRes v)
            | UpdateReq (l, v) |-> 
                  let heap' = updateHeap heap l v in 
                  return (heap', UpdateRes ())
        in
        send opRes (payloadRes, callNo); 
        r heap'
    ) @ initialHeap
\end{lstlisting}
\end{aeffbox}
The resource that this runner manages is the heap---it is initially set to \ls$initialHeap$, and then 
updated and passed between subsequent co-operation calls analogously to the seed in the previous example. 
On the client side, the operation calls for allocation, lookup, and update are also implemented similarly 
to how \ls$random ()$ was defined in the previous example.

Finally, we note that we could have instead used three signal/interrupt pairs and split \ls$heapRunner$ 
into three distinct reinstallable interrupt handlers, one for each of the three co-operations. However, then we would 
not have been able to use the state-passing provided by our interrupt handlers and we would 
have had to store the heap in the memory instead.

\subsection{Non-Blocking Post-Processing of Promised Values}
\label{sec:applications:chaining}

As discussed in \autoref{sect:overview:promising}, interrupt handlers differ
from ordinary operation calls by allowing user-side post-processing of received data in the handler code. 
In this example, we show that \lambdaAEff~is flexible enough to modularly perform \emph{further 
non-blocking post-processing} of this data anywhere in a program.

For instance, let us assume we are writing a program that contains an interrupt handler (for some $\op$)
that promises to return us a list of integers. Let us further assume that at some later point 
in the program we decide that we want to further process this list 
if and when it becomes available, 
e.g., by using some of its elements to issue an outgoing signal. 
Of course, we could do this by going back and changing the definition of the original interrupt handler, 
but this would not be very modular; 
nor do we want to block the entire program's execution (using \ls$await$) until the $\op$ interrupt
arrives and the concrete list becomes available.

Instead, we can define a generic combinator for \emph{non-blocking post-processing} of  
promises
\begin{aeffbox}
\begin{lstlisting}
process$_{\op}$ p with (<<x>> |-> comp)
\end{lstlisting}
\end{aeffbox}
that takes an earlier made promise \ls$p$ (which we assume originates 
from handling the specified interrupt $\op$), 
and makes a new promise to
execute the post-processing code \ls$comp[v/x]$ once \ls$p$ gets fulfilled with some value \ls$v$.
Under the hood, \ls{process$_{\op}$} is simply a syntactic sugar for
\begin{aeffbox}
\begin{lstlisting}
promise (op _ |-> let x = await p in let y = comp in return <<y>>)
\end{lstlisting}
\end{aeffbox}
While \ls{process$_{\op}$} involves an \ls$await$, it gets 
exposed only after  
\ls$op$ is received, but by that time \ls$p$ will have been fulfilled with some  
\ls$v$ by an earlier interrupt handler, and thus \ls$await$ can reduce.

Returning to post-processing a list of integers promised by some interrupt
handler, below is an example showing the use of the \ls{process$_{\op}$}
combinator and how to \emph{chain together multiple post-processing
computations} (filtering, folding, and issuing a signal), in the same spirit as
how one is taught to program compositionally with futures and
promises~\cite{Haller:Futures}:
\begin{aeffbox}
\begin{lstlisting}
let p = promise (op x |-> initialHandler) in 
...
let q = process$_{\op}$ p with (<<is>> |-> filter (fun i |-> i > 0) is) in 
let r = process$_{\op}$ q with (<<js>> |-> fold (fun j j' |-> j * j') 1 js) in 
process$_{\op}$ r with (<<k>> |-> send productOfPositiveElements k);
...
\end{lstlisting}
\end{aeffbox}
For this to work, it is crucial that incoming interrupts behave 
like (deep) effect handling and propagate into continuations (see \autoref{sec:basic-calculus:semantics:computations})
so that all three post-processing computations get executed, in their program order, 
when an interrupt \ls$op$ is propagated to the program.


\section{Conclusion}
\label{sec:conclusion}

We have shown how to incorporate asynchrony within 
algebraic effects, by decoupling 
the execution of operation calls into signalling that an operation's implementation 
needs to be executed, and interrupting a running computation with the operation's result, 
to which it can react by installing interrupt handlers.
We have shown that our approach is flexible enough that not all signals have to have a matching 
interrupt, and vice versa, allowing us to also model spontaneous behaviour, such as a user 
clicking a button or the environment pre-empting a thread. We have formalised these ideas in a small 
calculus, called \lambdaAEff, and demonstrated its flexibility on a number of examples.
We have also accompanied the paper with an \pl{Agda} formalisation of \lambdaAEff's type 
safety and a prototype implementation of \lambdaAEff.

Compared to our original work~\cite{Ahman:POPL}, in this extended version we 
have simplified the meta-theory of \lambdaAEff, removed the reliance on general recursion 
for reinstalling interrupt handlers, added a notion of state to reinstallable interrupt handlers, 
and extended \lambdaAEff~with higher-order signal and interrupt 
payloads, and with dynamic process creation.
However, various future work directions still remain. We discuss these and related work below.

\paragraph{Asynchronous Effects}
As asynchrony is desired in practice, it is no surprise that \pl{Koka} \cite{Leijen:AsyncAwait} 
and \pl{OCaml} \cite{Dolan:MulticoreOCaml,DBLP:conf/pldi/Sivaramakrishnan21}, the two 
largest implementations of algebraic effects and effect handlers, have been extended accordingly. 
In \pl{Koka}, algebraic operations  
reify their continuation into an explicit callback structure that is then dispatched to a primitive 
such as \lstinline{setTimeout} in its \pl{Node.JS} backend. In \pl{OCaml}, one writes effectful 
operations in a direct style, but then uses handlers to access the actual asynchronous I/O through 
calls to an external library such as \pl{libuv}. Both approaches thus \emph{delegate} the actual 
asynchrony to existing concepts in their backends. In contrast, using \lambdaAEff, we 
can express such backend features solely within the core calculus and the prototype implementation of it.

Further, in \lambdaAEff, we avoid having to manually use (un)masking to 
disable asynchronous effects in unwanted places, which can be a very 
tricky business to get right~\cite{Dolan:MulticoreOCaml}.
Instead, by design, interrupts in \lambdaAEff~\emph{never} 
influence running code unless the code has an explicit interrupt handler installed, 
and they \emph{always} wait for any potential handler to present itself during
execution (recall that they get discarded only when reaching a $\tmkw{return}$).

Finally, it is also worth discussing how signals and interrupts in
\lambdaAEff~compare to asynchronous exceptions, e.g., as found in
Haskell~\cite{Marlow:AsyncExceptions}. The two mechanisms are similar in that
both are issued outside of the running process. While asynchronous exceptions
are thrown to a specific thread, we can simulate this in our broadcast-based
semantics by carrying extra identifying information in signal and interrupt
payloads, as discussed in \autoref{sec:applications:guarded-handlers}. There is
however a crucial difference between the two approaches: while interrupts only
affect a given computation when a matching interrupt handler is installed, and
they get always discarded when they reach the program's $\tmkw{return}$ clause,
then asynchronous exceptions behave in the exact opposite way, causing the
program to stop with a thrown exception unless the asynchronous exception is
caught and handled away by the programmer.

\paragraph{Message-Passing}
While in this paper we have focussed on the foundations of asynchrony in the 
context of programming with algebraic effects, the ideas we propose have also many common 
traits with concurrency models based on \emph{message-passing}, 
such as the Actor model \cite{Hewitt:Actors}, the $\pi$-calculus \cite{Milner:PiCalculus}, 
and the join-calculus \cite{FournetGonthier:JoinCalculus}, just to name a few.
Namely, one can view the issuing of a signal $\tmopout{op}{V}{M}$ as sending a message, 
and handling an interrupt $\tmopin{op}{W}{M}$ as receiving a message, both along a channel
named $\op$. 
In fact, we believe that in our prototype implementation we could replace the semantics 
presented in the paper with an equivalent one based on shared channels
(one for each $\op$), to which the installed interrupt handlers could subscribe to.
Instead of propagating signals first out of and then back into processes, they would then 
be sent directly to channels where interrupt handlers immediately receive them, 
drastically reducing the cost of communication.

Comparing \lambdaAEff~to the Actor model, we see that 
the $\tmrun M$ processes evolve in their own bubbles, and only communicate with other
processes via signals and interrupts, similarly to actors.
However, in contrast to messages not being required to be ordered 
in the Actor model, in our parallel composition operation $\tmpar P Q$, the process $Q$ receives 
interrupts in the same order as the respective signals are issued by $P$ 
(and vice versa). This communication ordering could be relaxed by allowing 
signals to be hoisted out of computations from deeper than just the top level, or by
extending the operational semantics of \lambdaAEff~with commutativity rules for signals.
Another difference with actors is that by default \lambdaAEff-computations react to interrupts  
sequentially---this difference can be remedied by writing programs in a style in which 
interrupts are handled in parallel in dynamically spawned dedicated processes.

It is worth noting that our interrupt handlers are similar to the message receiving construct 
in the $\pi$-calculus, in that they both synchronise with matching incoming
interrupts or messages. However, the two constructs are also different, in that interrupt handlers allow
reductions to take place under them and non-matching interrupts to propagate past them.
Further, our interrupt handlers are also similar to join definitions in the join-calculus, describing
how to react when a corresponding interrupt arrives or join pattern appears, where in both cases
the reaction could involve effectful code. To this end, our interrupt handlers resemble join definitions 
with simple one-channel join patterns. However, where the two constructs differ is that join definitions 
additionally serve to define new (local) channels, similarly to the restriction operator in the $\pi$-calculus, 
whereas we assume a fixed global set of channels (i.e., signal and interrupt names $\op$). 
We expect that extending \lambdaAEff~with local algebraic effects 
\cite{Staton:Instances,Biernacki:AbstractingAlgEffects}
could help us fill this gap between the formalisms.

\paragraph{Scoped Operations}
As noted in \autoref{sec:basic-calculus:semantics:computations}, despite their name, interrupt handlers
behave like algebraic operations, not like effect handlers. However, one should also note 
that they are not conventional operations as they carry computational data that sequential 
composition does not interact with, and that executes only when a corresponding interrupt is received. 

Such generalised operations are known in the literature as \emph{scoped operations}~\cite{Pirog:ScopedOperations},  
a leading example of which is $\tmspawn{M}{N}$.
Further recalling \autoref{sec:basic-calculus:semantics:computations}, despite their appearance, 
incoming interrupts behave computationally like effect handling, not like algebraic operations. 
In fact, it turns out they correspond to effect handling 
induced by an instance of \emph{scoped effect handlers} \cite{Pirog:ScopedOperations}.
Compared to ordinary effect handlers, scoped effect handlers explain both 
how to interpret operations and their scopes. In our setting, this 
corresponds to triggering interrupt handlers and executing the corresponding handler code.

It would be interesting to extend \lambdaAEff~both with 
scoped operations having more general signatures, and with effect handlers 
for them, e.g., to allow
preventing the propagation of incoming interrupts into continuations, discarding the continuation 
of a cancelled remote call, and techniques such as masking or reordering interrupts
according to priority levels.

\paragraph{Denotational Semantics}
In this paper we study only the operational side of \lambdaAEff, 
and leave developing its denotational semantics for the future.
In light of how we have motivated the \lambdaAEff-specific programming 
constructs, and based on the above discussion, we expect the denotational semantics 
to take the form of an algebraically natural \emph{monadic semantics}, where the monad would 
be given by an instance of the one studied in the case of scoped operations~\cite{Pirog:ScopedOperations}
(quotiented by the commutativity of signals and interrupt handlers, 
and extended with nondeterminism to model different evaluation
outcomes). Incoming interrupts would be modelled as homomorphisms
induced by scoped algebras, while for parallel composition, we could
consider all nondeterministic interleavings of (the outgoing signals of) individual computations,
similarly to how it can be done in the context of general effect handlers~\cite{Plotkin:BinaryHandlers, Lindley:DoBeDoBeDo}.
Finally, we expect to be able to take inspiration for the denotational semantics of the 
promise type from that of modal logics and modal types.

\paragraph{Reasoning About Asynchronous Effects}
In addition to using \lambdaAEff's type-and-effect system only for specification purposes (such as specifying 
that $M : \tycomp{X}{(\emptyset,\{\})}$ raises no signals and installs no interrupt handlers), 
we wish to make further use of it for validating \emph{effect-dependent optimisations} \cite{Kammar:Optimisations}. 
For instance, whenever $M : \tycomp{X}{(\o,\i)}$ and $\i\, (\op) = \bot$, we would like to know  
that $\tmopin{\op}{V}{M} \reduces^* M$. One way to validate such optimisations 
is to develop an adequate denotational semantics, 
and then use a semantic \emph{computational induction} principle \cite{Bauer:EffectSystem,Plotkin:Logic}.
For \lambdaAEff, this would amount to only having to prove the optimisations for return values, signals, 
and interrupt handlers. Another way to validate effect-dependent optimisations would 
be to define a suitable logical 
relation for \lambdaAEff~\cite{Benton:AbstractEffects}.

In addition to optimisations based on \lambdaAEff's existing effect system, 
we plan to refine the current ``broadcast everything everywhere'' communication strategy, e.g., 
by extending process types with \emph{communication protocols} inspired by session types \cite{Honda:LangPrimitives}, or
adding \emph{restriction operations} like in CCS~\cite{DBLP:books/sp/Milner80} and suitably reflecting their use in the effect annotations.

\paragraph{Strong Normalisation for Computations}
In addition to getting an overall more principled core calculus, one of the motivations for introducing 
reinstallable interrupt handlers to \lambdaAEff~and removing general recursion~(compared 
to our original work~\cite{Ahman:POPL}) was that the sequential part of the resulting calculus 
ought to be strongly normalising, i.e., there should be no infinite reduction sequences for 
computations. Intuitively, strong normalisation should follow from interrupt handlers getting 
reinstalled only when a corresponding interrupt is propagated to the computation, and no 
single interrupt can reinstall a particular interrupt handler more than once. We leave making 
this argument formal for future work. We expect to be able to build on $\top\top$-lifting style
logical relation proofs of strong normalisation~\cite{Lindley:TopTopLifting}.

However, even after making the above-mentioned changes to \lambdaAEff, its parallel 
part of course remains non-terminating---simply consider two parallel processes built from 
reinstallable interrupt handlers that indefinitely exchange ping-pong signals with each other, e.g., as
\begin{aeffbox}
\begin{lstlisting}
run (send ping (); promise (pong _ r -> send ping (); r ()))
||
run (send pong (); promise (ping _ r -> send pong (); r ()))
\end{lstlisting}
\end{aeffbox}

\section*{Acknowledgements}

We thank the anonymous reviewers, Otterlo IFIP WG 2.1 meeting participants, 
and Andrej Bauer, Gavin Bierman, Žiga Lukšič, Janez Radešček, and Alex Simpson for their useful feedback.

\bibliography{references}

\appendix


\begingroup
\allowdisplaybreaks

\section{The Full Calculus for Higher-Order Asynchronous Effects}
\label{sec:appendix}

In this appendix we present \lambdaAEff~with all the higher-order 
extensions discussed in \autoref{sec:higher-order-extensions}.


\subsection{Terms}

\begin{center}
  \small
  \begin{align*}
  \intertext{\textbf{Values}}
  V, W
  \bnfis& x                                       & &\text{variable} \\
  \bnfor& \tmunit \bnfor\! \tmpair{V}{W}                                & &\text{unit and pairing} \\
  \bnfor& \tminl[Y]{V} \bnfor\! \tminr[X]{V}    & &\text{left and right injections} \\
  \bnfor& \tmfun{x : X}{M}                        & &\text{function abstraction} \\
  \bnfor& \tmpromise V                            & &\text{fulfilled promise} \\
  \bnfor& \tmbox V                                   & &\text{boxed value}
  \\[1ex]
  \intertext{\textbf{Computations}}
  M, N
  \bnfis& \tmreturn{V}                            & &\text{returning a value} \\
  \bnfor& \tmlet{x}{M}{N}          & &\text{sequencing} \\
  \bnfor& V\,W                                    & &\text{function application} \\
  \bnfor& \tmmatch{V}{\tmpair{x}{y} \mapsto M}    & &\text{product elimination} \\
  \bnfor& \tmmatch[\tycomp{Z}{(\o,\i)}]{V}{}                        & &\text{empty elimination} \\
  \bnfor& \tmmatch{V}{\tminl{x} \mapsto M, \tminr{y} \mapsto N}
                                                  & &\text{sum elimination} \\
  \bnfor& \tmopout{op}{V}{M}       & &\text{outgoing signal} \\
  \bnfor& \tmopin{op}{V}{M}          & &\text{incoming interrupt} \\
  \bnfor& \tmwithrest[S]{op}{x}{r}{s}{M}{V}{p}{N}      & &\text{stateful reinstallable interrupt handler} \\
  \bnfor& \tmawait{V}{x}{M}             & &\text{awaiting a promise to be fulfilled} \\
  \bnfor& \tmunbox{V}{x}{M}            & &\text{unboxing a mobile value} \\
  \bnfor& \tmspawn{M}{N}                & &\text{dynamic process creation}
  \\[1ex]
  \intertext{\textbf{Processes}}
    P, Q
  \bnfis& \tmrun M                              & &\text{individual computation} \\
  \bnfor& \tmpar P Q                         & &\text{parallel composition} \\
  \bnfor& \tmopout{op}{V}{P}            & &\text{outgoing signal} \\
  \bnfor& \tmopin{op}{V}{P}              & &\text{incoming interrupt}
  \end{align*}
\end{center}


\subsection{Types}

\begin{center}
  \small
  \begin{align*}
  \text{Mobile type $A$, $B$}
  \bnfis& \tybase \bnfor \tyunit \bnfor \tyempty \bnfor \typrod{A}{B} \bnfor \tysum{A}{B} \bnfor \tybox X
  \\[1ex]
  \text{Signal or interrupt signature:}
  \phantom{\bnfis}& \op : A_\op
  \\[1ex]
  \text{Value type $X$, $Y$, $S$}
  \bnfis& A \bnfor \typrod{X}{Y} \bnfor \tysum{X}{Y} \bnfor \tyfun{X}{\tycomp{Y}{(\o,\i)}} \bnfor \typromise{X}
  \\[1ex]
  \text{Computation type:}
  \phantom{\bnfis}& \tycomp{X}{(\o,\i)}
  \\[1ex]
  \text{Process type $\tyC$, $\tyD$}
  \bnfis& \tyrun X \o \i \bnfor \typar \tyC \tyD
  \\[1ex]
  \text{Typing context $\Gamma$}
  \bnfis& \cdot \bnfor \Gamma, x \of X \bnfor \Gamma, \ctxlock
  \end{align*}
\end{center}


\subsection{Type System}

\begin{center}
  \small
  \begin{mathparpagebreakable}
  \textbf{Values}
  \\
  \coopinfer{TyVal-Var}{
    X~\text{is mobile} \quad \text{or} \quad \ctxlock \not\in \Gamma'
  }{
    \Gamma, x \of X, \Gamma' \types x : X
  }
  \qquad
  \coopinfer{TyVal-Unit}{
  }{
    \Gamma \types \tmunit : \tyunit
  }
  \qquad
  \coopinfer{TyVal-Pair}{
    \Gamma \types V : X \\
    \Gamma \types W : Y
  }{
    \Gamma \types \tmpair{V}{W} : \typrod{X}{Y}
  }
  \\
  \coopinfer{TyVal-Inl}{
    \Gamma \types V : X
  }{
    \Gamma \types \tminl[Y]{V} : X + Y
  }
  \qquad
  \coopinfer{TyVal-Inr}{
    \Gamma \types W : Y
  }{
    \Gamma \types \tminr[X]{W} : X + Y
  }
  \qquad
  \coopinfer{TyVal-Fun}{
    \Gamma, x \of X \types M : \tycomp{Y}{(\o,\i)}
  }{
    \Gamma \types \tmfun{x : X}{M} : \tyfun{X}{\tycomp{Y}{(\o,\i)}}
  }
  \\
  \coopinfer{TyVal-Promise}{
    \Gamma \types V : X
  }{
    \Gamma \types \tmpromise V : \typromise X
  }
  \qquad
  \coopinfer{TyVal-Box}{
    \Gamma, \ctxlock \types V : X
  }{
    \Gamma \types \tmbox V : \tybox X
  }
  \end{mathparpagebreakable}  
  \mbox{}
  
  \begin{mathparpagebreakable}
  \textbf{Computations}
  \\
  \coopinfer{TyComp-Return}{
    \Gamma \types V : X
  }{
    \Gamma \types \tmreturn{V} : \tycomp{X}{(\o,\i)}
  }
  \qquad
  \coopinfer{TyComp-Let}{
    \Gamma \types M : \tycomp{X}{(\o,\i)}
    \\
    \Gamma, x \of X \types N : \tycomp{Y}{(\o,\i)}
  }{
    \Gamma \types
    \tmlet{x}{M}{N} : \tycomp{Y}{(\o,\i)}
  }
  \\
  \coopinfer{TyComp-Apply}{
    \Gamma \types V : \tyfun{X}{\tycomp{Y}{(\o,\i)}} \\
    \Gamma \types W : X
  }{
    \Gamma \types \tmapp{V}{W} : \tycomp{Y}{(\o,\i)}
  }
  \qquad
  \coopinfer{TyComp-MatchPair}{
    \Gamma \types V : \typrod{X}{Y} \\
    \Gamma, x \of X, y \of Y \types M : \tycomp{Z}{(\o,\i)}
  }{
    \Gamma \types \tmmatch{V}{\tmpair{x}{y} \mapsto M} : \tycomp{Z}{(\o,\i)}
  }
  \\
  \coopinfer{TyComp-MatchEmpty}{
    \Gamma \types V : \tyempty
  }{
    \Gamma \types \tmmatch[\tycomp{Z}{(\o,\i)}]{V}{} : \tycomp{Z}{(\o,\i)}
  }
  \qquad
  \coopinfer{TyComp-MatchSum}{
    \Gamma \types V : X + Y \\\\
    \Gamma, x \of X \types M : \tycomp{Z}{(\o,\i)} \\
    \Gamma, y \of Y \types N : \tycomp{Z}{(\o,\i)} \\
  }{
    \Gamma \types \tmmatch{V}{\tminl{x} \mapsto M, \tminr{y} \mapsto N} : \tycomp{Z}{(\o,\i)}
  }
  \\
  \coopinfer{TyComp-Signal}{
    \op \in \o \\
    \Gamma \types V : A_\op \\
    \Gamma \types M : \tycomp{X}{(\o,\i)} 
  }{
    \Gamma \types \tmopout{op}{V}{M} : \tycomp{X}{(\o,\i)}
  }
  \qquad
  \coopinfer{TyComp-Interrupt}{
    \Gamma \types V : A_\op \\
    \Gamma \types M : \tycomp{X}{(\o,\i)} 
  }{
    \Gamma \types \tmopin{op}{V}{M} : \tycomp{X}{\opincomp {op} (\o,\i)}
  }
  \\
  \coopinfer{TyComp-ReStPromise}{
    ({\o'} , {\i'}) \mathrel{\order{O \times I}} \i\, (\op) \\
    \Gamma, x \of A_\op, r \of \tyfun{S}{\tycomp{\typromise X}{\big(\emptyset, \{ \op \mapsto ({\o'} , {\i'}) \}\big)}}, s \of S \types M : \tycomp{\typromise X}{(\o',\i')} \\
    \Gamma \types V : S \\
    \Gamma, p \of \typromise X \types N : \tycomp{Y}{(\o,\i)} 
  }{
    \Gamma \types \tmwithrest[S]{op}{x}{r}{s}{M}{V}{p}{N} : \tycomp{Y}{(\o,\i)}
  }        
  \\
  \coopinfer{TyComp-Await}{
    \Gamma \types V : \typromise X \\
    \Gamma, x \of X \types M : \tycomp{Y}{(\o,\i)} 
  }{
    \Gamma \types \tmawait{V}{x}{M} : \tycomp{Y}{(\o,\i)}
  }
  \qquad
  \coopinfer{TyComp-Unbox}{
     \Gamma \types V : \tybox X \\
     \Gamma, x \of X \types M : \tycomp{Y}{(\o, \i)}
   }{
     \Gamma \types \tmunbox{V}{x}{M} : \tycomp{Y}{(\o, \i)}
   }
  \\
  \coopinfer{TyComp-Spawn}{
     \Gamma, \ctxlock \types M : \tycomp{X}{(\o, \i)} \\
     \Gamma \types N : \tycomp{Y}{(\o', \i')}
   }{
     \Gamma \types \tmspawn{M}{N} : \tycomp{Y}{(\o', \i')}
   }
  \qquad
   \coopinfer{TyComp-Subsume}{
      \Gamma \types M : \tycomp{X}{(\o, \i)} \\
      (\o,\i) \order {O \times I} (\o',\i')
    }{
      \Gamma \types M : \tycomp{X}{(\o', \i')}
    }
  \end{mathparpagebreakable}  
  \mbox{}
  
  \begin{mathparpagebreakable}
  \textbf{Processes}
  \\
  \coopinfer{TyProc-Run}{
    \Gamma \types M : \tycomp{X}{(\o,\i)}
  }{
    \Gamma \types \tmrun{M} : \tyrun{X}{\o}{\i}
  }
  \qquad
  \coopinfer{TyProc-Par}{
    \Gamma \types P : \tyC \\
    \Gamma \types Q : \tyD
  }{
    \Gamma \types \tmpar{P}{Q} : \typar{\tyC}{\tyD}
  }
  \\
  \coopinfer{TyProc-Signal}{
    \op \in \mathsf{signals\text{-}of}{(\tyC)} \\
    \Gamma \types V : A_\op \\
    \Gamma \types P : \tyC 
  }{
    \Gamma \types \tmopout{op}{V}{P} : \tyC
  }
  \qquad
  \coopinfer{TyProc-Interrupt}{
    \Gamma \types V : A_\op \\
    \Gamma \types P : \tyC 
  }{
    \Gamma \types \tmopin{op}{V}{P} : \opincomp{op}{\tyC}
  }  
  \end{mathparpagebreakable}
\end{center}


\subsection{Small-Step Operational Semantics of Computations}

\begin{center}
  \small
  \begin{align*}
    \intertext{\textbf{Standard computation rules}}
    \tmapp{(\tmfun{x \of X}{M})}{V} &\reduces M[V/x]
    \\[0.5ex]
    \tmlet{x}{(\tmreturn V)}{N} &\reduces N[V/x]
    \\[0.5ex]
    \tmmatch{\tmpair{V}{W}}{\tmpair{x}{y} \mapsto M} &\reduces M[V/x, W/y]
    \\[0.5ex]
    \mathllap{
      \tmmatch{(\tminl[Y]{V})}{\tminl{x} \mapsto M, \tminr{y} \mapsto N} 
    } &\reduces
    M[V/x]
    \\[0.5ex]
    \mathllap{
      \tmmatch{(\tminr[X]{W})}{\tminl{x} \mapsto M, \tminr{y} \mapsto N}
    } &\reduces
    N[W/y]
    \\[1ex]
    \intertext{\textbf{Algebraicity of signals, interrupt handlers, awaiting, and process creation}}
    \tmlet{x}{(\tmopout{op}{V}{M})}{N} &\reduces \tmopout{op}{V}{\tmlet{x}{M}{N}}
    \\[0.5ex]
    \tmlet{x}{(\tmwithrest[S]{op}{y}{r}{s}{M}{V}{p}{N_1})}{N_2} &\reduces \\[-0.5ex]
        \tmwithrest[S]{op}{y}{&r}{s}{M}{V}{p}{(\tmlet{x}{N_1}{N_2})}
    \\[0.5ex]
    \tmlet{x}{(\tmawait{V}{y}{M})}{N} &\reduces \tmawait{V}{y}{(\tmlet{x}{M}{N})}
    \\[0.5ex]
    \tmlet{x}{(\tmspawn{M}{N_1})}{N_2} &\reduces \tmspawn{M}{\tmlet{x}{N_1}{N_2}}
    \\[1ex]
    \intertext{\textbf{Commutativity of signals and process creation with interrupt handlers}}
    \tmwithrest[S]{op}{x}{r}{s}{M}{V}{p}{\tmopout{op'}{W}{N}} &\reduces \\[-0.5ex]
        \tmopout{op'}{W}{&\, \tmwithrest[S]{op}{x}{r}{s}{M}{V}{p}{N}}
    \\[0.5ex]
    \tmwithrest[S]{op}{x}{r}{s}{M_1}{V}{p}{\tmspawn{M_2}{N}} &\reduces \\[-0.5ex]
        \tmspawn{&M_2}{\tmwithrest[S]{op}{x}{r}{s}{M_1}{V}{p}{N}}
    \\[1ex]
    \intertext{\textbf{Interrupt propagation}}
    \tmopin{op}{V}{\tmreturn W} &\reduces \tmreturn W
    \\[0.5ex]
    \tmopin{op}{V}{\tmopout{op'}{W}{M}} &\reduces \tmopout{op'}{W}{\tmopin{op}{V}{M}}
    \\[0.5ex]
    \tmopin{op}{V}{\tmwithrest[S]{op}{x}{r}{s}{M}{W}{p}{N}} &\reduces \tmlet{p}{M\big[V/x , R/r , W/s\big]}{\tmopin{op}{V}{N}}
    \\[-0.5ex]
    \intertext{\hfill\textbf{where} $R \defeq \tmfun{s' \of S}{\tmwithrest[S]{op}{x}{r}{s}{M}{s'}{p}{\tmreturn p}}$}
    \\[-3ex]
    \tmopin{op'}{V}{\tmwithrest[S]{op}{x}{r}{s}{M}{W}{p}{N}} &\reduces \\[-0.5ex]
        \tmwithrest[S]{op}{x}{r}{s}{&\,M}{W}{p}{\tmopin{op'}{V}{N}}
    \quad {\color{rulenameColor}(\op \neq \op')}
    \\[0.5ex]
    \tmopin{op}{V}{\tmawait{W}{x}{M}} &\reduces \tmawait{W}{x}{\tmopin{op}{V}{M}}
    \\[0.5ex]
    \tmopin{op}{V}{\tmspawn{M}{N}} &\reduces \tmspawn{M}{\tmopin{op}{V}{N}}
    \\[1ex]
    \intertext{\textbf{Awaiting a promise to be fulfilled}}
    \tmawait{\tmpromise V}{x}{M} &\reduces M[V/x]
    \\[1ex]
    \intertext{\textbf{Unboxing a mobile value}}
    \tmunbox{\tmbox V}{x}{M} &\reduces M[V/x]
  \end{align*}
  \begin{gather*}
    \intertext{\textbf{Evaluation context rule}}
    \coopinfer{}{
      M \reduces N
    }{
      \E[M] \reduces \E[N]
    }
  \\[1ex]
  \begin{align*}
    \intertext{\textbf{where}}
    \E
    \bnfis& \, [~] 
    \bnfor \tmlet{x}{\E}{N} 
    \bnfor \tmopout{op}{V}{\E}
    \bnfor \tmopin{op}{V}{\E} \\
    \bnfor& \, \tmwithrest[S]{op}{x}{r}{s}{M}{V}{p}{\E}
    \bnfor \tmspawn{M}{\E}  
  \end{align*}
  \end{gather*}    
\end{center}


\subsection{Small-Step Operational Semantics of Processes}

\begin{center}
\small
  \begin{align*}
  \intertext{\textbf{Individual computations}}
    \coopinfer{}{
      M \reduces N
    }{
      \tmrun M \reduces \tmrun N
    }
  \end{align*}
  \begin{align*}
  \intertext{\textbf{Signal hoisting}}
  \tmrun {(\tmopout{op}{V}{M})}  &\reduces \tmopout{op}{V}{\tmrun M}
  \\[1ex]
  \intertext{\textbf{Process creation}}
  \tmrun{(\tmspawn{M}{N})} &\reduces \tmpar{\tmrun{M}}{\tmrun{N}}
  \\[1ex]
  \intertext{\textbf{Broadcasting}}
  \tmpar{\tmopout{op}{V}{P}}{Q} &\reduces \tmopout{op}{V}{\tmpar{P}{\tmopin{op}{V}{Q}}}
  \\[0.5ex]
  \tmpar{P}{\tmopout{op}{V}{Q}} &\reduces \tmopout{op}{V}{\tmpar{\tmopin{op}{V}{P}}{Q}}
  \\[1ex]
  \intertext{\textbf{Interrupt propagation}}
  \tmopin{op}{V}{\tmrun M} &\reduces \tmrun {(\tmopin{op}{V}{M})}
  \\[0.5ex]
  \tmopin{op}{V}{\tmpar P Q} &\reduces \tmpar {\tmopin{op}{V}{P}} {\tmopin{op}{V}{Q}}
  \\[0.5ex]
  \tmopin{op}{V}{\tmopout{op'}{W}{P}} &\reduces \tmopout{op'}{W}{\tmopin{op}{V}{P}}
  \end{align*}
  \begin{gather*}
    \intertext{\quad\textbf{Evaluation context rule}}
    \quad
    \coopinfer{}{
      P \reduces Q
    }{
      \F[P] \reduces \F[Q]
    }
  \\[1ex]
  \begin{align*}
  \intertext{\textbf{where}}
  \text{$\F$}
  \bnfis& [~]
  \bnfor \tmpar \F Q \bnfor\! \tmpar P \F
  \bnfor \tmopout{op}{V}{\F}
  \bnfor \tmopin{op}{V}{\F}
  \end{align*}
  \end{gather*}
\end{center}


\subsection{Process Type Reduction}

\begin{center}
  \small
  \begin{mathparpagebreakable}
  \coopinfer{}{
  }{
    \tyrun{X}{\o}{\i} \tyreduces \tyrun{X}{\o}{\i} 
  }
  \quad
  \coopinfer{}{
  }{
    X \att (\opincompp {ops} {(\o , \i)}) \tyreduces X \att (\opincompp {ops} {(\opincomp {op} {(\o , \i)})})
  }
  \quad
  \coopinfer{}{
    \tyC \tyreduces \tyC' \\
    \tyD \tyreduces \tyD'
  }{
    \typar{\tyC}{\tyD} \tyreduces \typar{\tyC'}{\tyD'}
  }
  \\
  \coopinfer{}{
  }{
    X \att {(\o , \i)} \tyreduces
    \typar{(X \att {(\o , \i)})}{(Y \att {(\o' , \i')})}
  }
  \quad
  \coopinfer{}{
  }{
    Y \att {(\o' , \i')} \tyreduces
    \typar{(X \att {(\o , \i)})}{(Y \att {(\o' , \i')})}
  }
  \end{mathparpagebreakable}
\end{center}


\subsection{Result Forms}

\begin{center}
\small
\begin{mathparpagebreakable}
  \textbf{Computations}
  \\
  \coopinfer{}{
    \CompResult {\Psi} {M}
  }{
    \CompResult {\Psi} {\tmopout {op} V M}
  }
  \qquad
  \coopinfer{}{
    \CompResult {\Psi} {N}
  }{
    \CompResult {\Psi} {\tmspawn{M}{N}}
  }
  \qquad
  \coopinfer{}{
    \RunResult {\Psi} {M}
  }{
    \CompResult {\Psi} {M}
  }
  \\
  \coopinfer{}{
  }{
    \RunResult {\Psi} {\tmreturn V}
  }
  \qquad
  \coopinfer{}{
    p \in \Psi
  }{
    \RunResult {\Psi} {\tmawait p x M}
  }
  \\
  \coopinfer{}{
    \RunResult {\Psi \cup \{p\}} {N}
  }{
    \RunResult {\Psi} {\tmwithrest {op} x r s M V p N}
  }
  \\\\
  \textbf{Processes}
  \\
  \coopinfer{}{
    \ProcResult {P}
  }{
    \ProcResult {\tmopout {op} V P}
  }
  \qquad
  \coopinfer{}{
    \ParResult {P}
  }{
    \ProcResult {P}
  }
  \qquad
  \coopinfer{}{
    \RunResult {\emptyset} {M}
  }{
    \ParResult {\tmrun M}
  }
  \qquad
  \coopinfer{}{
    \ParResult P \\
    \ParResult Q
  }{
    \ParResult {\tmpar P Q}
  }
\end{mathparpagebreakable}
\end{center}

\endgroup

\end{document}